\documentclass[a4paper,onecolumn,11pt,accepted=2023-02-22]{quantumarticle}
\pdfoutput=1

\input{include}

\usepackage[pass, showframe]{geometry}

\newgeometry{margin=1.0in,includefoot,footskip=30pt}

\begin{document}

\title{\text{Quantum Multi-Solution Bernoulli Search} 
\text{with Applications to Bitcoin’s Post-Quantum Security}}

\author{Alexandru Cojocaru}
\affiliation{University of Maryland}
\email{cojocaru@umd.edu}
\orcid{0000-0003-2710-1170}
\author{Juan Garay}
\email{garay@cse.tamu.edu}
\orcid{0000-0003-0366-7110}
\affiliation{Texas A\&M University}
\author{Aggelos Kiayias}
\affiliation{University of Edinburgh and IOHK}
\email{akiayias@inf.ed.ac.uk}
\author{Fang Song}
\affiliation{Portland State University}
\email{fang.song@pdx.edu}
\orcid{0000-0002-3098-6451}
\author{Petros Wallden}
\affiliation{University of Edinburgh}
\email{petros.wallden@ed.ac.uk}
\orcid{0000-0002-0255-6542}

\maketitle

\begin{abstract}
A proof of work (PoW) is an important cryptographic construct 
which enables a party to convince other parties that they have invested some effort in solving a computational task.  
Arguably, its main impact has been in the setting of cryptocurrencies such as Bitcoin and its underlying blockchain protocol,
which have received significant attention in recent years due to its potential for various applications as well as for solving fundamental distributed computing questions in novel threat models.
PoWs enable the linking of blocks in the blockchain data structure, and thus the problem of interest is the feasibility of obtaining a sequence  (``chain'') 
of such proofs.

At the same time, the rapid development in quantum computing makes the threats to cryptography more and more concerning. 
In this work, we examine the hardness of finding such \emph{chain} of PoWs against quantum strategies.
We prove that the chain of PoWs problem reduces to a problem we call {\em multi-solution Bernoulli search}, for which we establish its quantum query complexity. Effectively, this is an extension of a threshold direct product theorem to an \emph{average-case} unstructured search problem. Our proof, adding to active recent efforts, simplifies and generalizes the recording technique due to Zhandry [Crypto 2019].

As an application, we revisit the formal treatment of security of the core of the Bitcoin consensus protocol, called the {\em Bitcoin backbone} [Eurocrypt 2015], in a setting where the adversary has quantum capabilities while the honest parties remain classical, and show that the protocol's security
holds under a quantum analogue of the classical ``honest majority'' assumption that we formulate.
Our analysis indicates that the security of the Bitcoin backbone protocol is guaranteed provided that the number of adversarial quantum queries is bounded so that each quantum query is worth $O(p^{-1/2})$ classical ones, where $p$ is the probability of success of a single classical query to the protocol's underlying hash function.
Somewhat surprisingly, the wait time for safe settlement of transactions in the case of quantum adversaries matches (up to a constant) the safe settlement time in the classical case.

\end{abstract}

\tableofcontents




\section{Introduction} \label{sec:intro}

A {\em proof of work} (\pow)
enables a party to convince other parties that considerable effort has
been invested in solving a computational task.  First introduced by
Dwork and Naor~\cite{DN92}, $\pows$ serve as candidate solutions to
thwarting spam emails and denial-of-service attacks. More recently in
the setting of cryptocurrencies, $\pows$ have proven indispensable and
manifested profound impact, where they lay the basis for consensus
protocols in permissionless blockchains. In particular,
Bitcoin~\cite{Nakamoto2009} and its underlying blockchain protocol
have drawn remarkable attention for the potential to resolve
fundamental distributed computing problems in various threat models,
as well as to enable novel applications.

In the blockchain setting, the objective of a \pow{} is to confirm new
transactions to be included in the blockchain.
To successfully create a \pow{} in Bitcoin, one needs to find a value
(``witness'') such that evaluating a hash function (SHA-256) on this
value together with (the hash of) the last block and new transactions
to be incorporated, yields an output below a threshold. A party who
produces such a \pow{} gets to append a new block to the blockchain
and is rewarded. A {\em blockchain} hence consists of a sequence of
such \emph{blocks}. Each party maintains such a blockchain, and
attempts to extend it via solving a \pow. We refer to the task of
creating a chain of multiple blocks as the \emph{\chainpow{}}
problem. In more detail, we define these blockchain-inspired problems
as follows.

\paragraph{Definition \textnormal{(Blockchain \pow---Informal)}.}
\textit{Given a hash function $h$, a positive integer $T$, and a string $z$
  representing the hash value of the previous block, the goal is to
  find a value $ctr$ such that:}
\[ h(ctr, z) \le T \, .\]

\paragraph{Definition \textnormal{(\chainpow---Informal)}.}
\textit{Given a family of hash functions
  $h_i : X \times Y \rightarrow X$, an initial value $x_0 \in X$, the
  goal is to output a \emph{chain of length $k$}
  consisting of $y_0,\ldots, y_{k-1} \in Y$ such that:}
\[ x_{i+1}: = h_{i}(x_{i},y_{i}) \text{ and } x_{i+1} \leq T \text{
    for all } i \in \{0, \cdots, k-1 \} \, .  \]

Intuitively, the hardness of the \chainpow{} problem is crucial to a
robust blockchain and trustworthy Bitcoin applications. However, due
to the complex nature of these protocols, obtaining a firm security
claim in a formal mathematical model turns out to be challenging. At
Eurocrypt 2015, Garay {\em et al.}~\cite{GKL15} developed an
abstraction of Bitcoin's underlying protocol termed the ``Bitcoin
backbone,'' which 
inspired a series of other formal treatments of Bitcoin (e.g.,~\cite{PSS16,GKL17,BMTZ17}). In essence, the backbone
abstraction of~\cite{GKL15} models the hash function of the \pow{} as
a random oracle (RO)~\cite{DBLP:conf/ccs/BellareR93}, and it assumes a
\emph{uniform} configuration among a fixed (albeit unknown) number of
parties (``miners''), a fraction of which may behave arbitrarily as
controlled by an adversary. Namely, the parties are endowed with the
same computational power, as measured by the permitted number of
queries to the RO per round. Following a modular approach, essential
security properties of the backbone protocol, such as \emph{common
  prefix} and \emph{chain quality}, are formulated, which are proven
sufficient to realize blockchain applications, notably a robust public
transaction ledger (a.k.a. ``Nakamoto consensus''). These properties,
relying crucially on the hardness of \chainpow, can then be
established assuming \emph{honest majority} of computational power.

The formal treatment in the works above assumes that parties are
instantiated by ``classical'' computers. However, we are witnessing a
rapid development in quantum computing bringing concerning threats to
cryptographic protocols. For example, it is known that quantum
computers can cause devastating breaks to both public key
cryptography~\cite{Shor97} and symmetric-key
cryptography~\cite{KLLNP16,SS17}. In addition, the unique features of
quantum information, such as intrinsic randomness and no-cloning,
render many classical security analyses (e.g.,
rewinding~\cite{vdG97,Watrous09,Unruh12}) obsolete, and even the right
\emph{modeling} of security can be elusive (e.g.,
\cite{HSS15,AGM18,Zhandry13}). In particular, existing analyses of
blockchain protocols are conducted in the random oracle model. When
quantum attackers are present, Boneh {\em et al.}~\cite{BDF+11} argued
the need for granting the attackers the ability to query the random
oracle in \emph{quantum superposition}. This gives rise to the
\emph{quantum random oracle} (QRO) model, which is far more
challenging to reason about (see Section~\ref{sec:rel} for an
account). Several fundamental questions need to be revisited, such as:

\begin{quote}
  \emph{What is the complexity of $\chainpow$ in the quantum query
    model?
    Can we formally establish the
    quantum security of the Bitcoin backbone protocol, by,
    for example, 
    modifying
    the framework of~\cite{GKL15} to work against quantum
    adversaries in the QRO model?}
\end{quote}

In the remainder of this section we summarize our contributions,
provide a technical overview, and discuss closely related works. The
full presentation of our results then follows.

\subsection{Our Contributions}
\label{intro:result}

In this work, we prove the quantum query complexity of $\chainpow$,
and then formally establish the quantum security of the Bitcoin
backbone protocol in the QRO model.

First, consider a family of independent and uniformly random functions
$\{h_i: X \times Y \to X\}$, given as quantum unitary black-boxes. Let
$p: = T/{|X|}$ be the fraction of target outputs in the co-domain,
which in the context of Bitcoin represents the \emph{difficulty} of
mining. We prove the following:

\begin{theorem}[Informal]\label{thm:chain_pow_intro}
For any quantum adversary $\mathcal{A}$ having $N$ quantum queries, the probability that $\mathcal{A}$ solves the \chainpow{} problem by outputting a chain of size at least $k$, is at most
\[ P(N, k) \leq \left( e^2\left({N/k}+1\right)^2p \right)^k. \]
\end{theorem}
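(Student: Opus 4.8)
The plan is to reduce \chainpow{} to a \emph{multi-solution Bernoulli search} problem \mrsearch{} --- find $k$ distinct points each independently ``marked'' with probability $p$ --- and then prove a threshold direct product theorem for \mrsearch{} via a streamlined version of Zhandry's recording technique. For the reduction, observe that since the $h_i$ are independent uniformly random functions, the threshold events $\{h_i(x,y)\le T\}$ are i.i.d.\ Bernoulli$(p)$ over all triples $(i,x,y)$; hence a uniformly random \chainpow{} instance can be simulated from a \mrsearch{} instance by answering a query to $h_i(x,y)$ with a value drawn (reusing local randomness consistently) uniformly from $[0,T]$ if the corresponding \mrsearch{} bit is marked and from $(T,|X|)$ otherwise. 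Running an $N$-query \chainpow{} adversary inside this simulation and then spending at most $k$ further oracle queries to recompute $x_1,\dots,x_k$ from the returned chain and check the thresholds yields, whenever the chain is valid, the $k$ triples $(i,x_i,y_i)$, which are $k$ \emph{distinct} marked points (distinct because they sit at distinct levels $i$). This is an $(N{+}k)$-query \mrsearch{} solver whose success probability is at least $P(N,k)$, so it suffices to show that any $M$-query algorithm outputs $k$ distinct marked points with probability at most $\bigl(e^2(M/k)^2p\bigr)^k$ and then substitute $M=N+k$.

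For the \mrsearch{} bound I would run the compressed/recording-oracle machinery, \emph{coarsened} to the Bernoulli setting: the recording register stores, per queried point, only a single ``marked/unmarked'' bit rather than a full output value, since that is all \mrsearch{} depends on; the first step is to set up this coarse recording oracle and check that it is indistinguishable, up to the usual negligible slack, from the true random oracle. Writing $|\psi_t\rangle$ for the joint state after $t$ queries and $\Pi_{\ge j}$ for the projector onto recording registers holding at least $j$ marked points, the crux is the per-query recursion
\[
\bigl\|\Pi_{\ge j}\,|\psi_{t+1}\rangle\bigr\|\;\le\;\bigl\|\Pi_{\ge j}\,|\psi_{t}\rangle\bigr\|\;+\;\sqrt{p}\;\bigl\|\Pi_{\ge j-1}\,|\psi_{t}\rangle\bigr\|,
\]
which says that one query can promote amplitude from ``$\ge j{-}1$ marked'' to ``$\ge j$ marked'' only through a transition of amplitude $\sqrt p$ --- the Grover-type $\sqrt p$ and the direct-product structure entering simultaneously here. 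Since $|\psi_0\rangle$ carries no marked points, iterating with Pascal's rule gives $\|\Pi_{\ge k}\,|\psi_M\rangle\|\le\binom{M}{k}\,p^{k/2}$, and the standard ``a valid output must be recorded'' lemma (whose correction term is no larger than the main term) turns this into a winning probability at most $\binom{M}{k}^2p^k\le(eM/k)^{2k}p^k=\bigl(e^2(M/k)^2p\bigr)^k$. Substituting $M=N+k$ yields exactly $\bigl(e^2(N/k+1)^2p\bigr)^k$.

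The main obstacle is the per-query recursion itself --- the threshold direct product theorem for \mrsearch{}. One has to analyse the action of the coarse recording-oracle unitary on the recording register during a query and bound by $\sqrt p$, uniformly in $j$, the norm of the component carrying ``$\ge j{-}1$ marked'' databases into ``$\ge j$ marked'' databases; the delicate point is that this update is a controlled rotation whose angle encodes $\sqrt p$ (modulo harmless lower-order terms), and one must ensure the bound charges the \emph{creation of a fresh marked entry} rather than the reshuffling of amplitude among already-marked entries, which needs a careful decomposition of the query operator with respect to the $\Pi_{\ge j}$ filtration together with control of the resulting cross terms. Secondary points are making the \chainpow{}-to-\mrsearch{} reduction exact in the query count and in the indistinguishability of the simulated oracle family --- in particular handling that the same $x$ may recur at different chain levels, and that verification queries coinciding with earlier ones only help --- and pinning down the correction term of the recording lemma for the coarse oracle.
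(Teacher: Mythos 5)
Your proposal follows essentially the same route as the paper: an $(N+k)$-query reduction of \chainpow{} to \mrsearch{} by rerunning the chain with $k$ extra verification queries (Lemma~\ref{lemma:bag_chain_prob}), followed by a Bernoulli-adapted compressed-oracle argument that bounds a ``$\ge j$ marked'' progress measure via a Pascal/hockey-stick iteration and squares the result. The one place where the execution diverges is the bookkeeping you yourself flag as delicate: in the paper's $U_p$-rotation framework the exact one-step $\ket{0}\!\to\!\ket{1}$ transition amplitude is $2\sqrt{p(1-p)}$ rather than $\sqrt p$ (so iterating your displayed recursion with the true coefficient would cost an extra $4^k$), and the paper instead applies that coefficient once and obtains the $(\sqrt p)^{k-1}$ scaling from a separate hockey-stick sum over $\norm{P^0_{k-1}\ket{\phi^t}}$ (Lemma~\ref{lemma:betaalli}), while the ``valid output must be recorded'' correction for unqueried positions collapsing to $1$ is handled by the weighted sum in Lemma~\ref{lemma:bound_pi}.
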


We show this by considering a variant in which the solution does not need
to form a chain. This amounts to searching multiple inputs in a
non-uniform random function, and its quantum query hardness
immediately implies the one for \chainpow, which we derive
precisely. More specifically, we have a Boolean function whose output
is sampled according to a Bernoulli distribution \emph{independently}
for each input, and we wish to find multiple preimages of $1$ (call it
\mrsearch). We prove the quantum query complexity of this search
problem by extending Zhandry's elegant recording
technique~\cite{Zhandry19} to point-wise independent
\emph{non-uniform} distributions. In fact, this establishes a quantum
\emph{average-case} {strong direct-product theorem} (SDPT)
for this family of non-uniform search problems\footnote{It is
  actually a \emph{threshold} SDPT, which is usually stronger. The
  distinction is not essential to our discussion, and interested
  readers are referred to~\cite{LR13}.}.

Turning to the quantum security of the Bitcoin backbone protocol, we
examine all the components of the~\cite{GKL15} framework in the presence of classical honest parties against
quantum adversaries in the QRO model, and our guiding principle is to ``lift''
it to the quantum setting with \emph{as few} changes as possible. One
crucial change in our analysis is to reformulate a central analytical
tool in~\cite{GKL15} named \emph{typical execution}, since in the
presence of quantum adversaries it becomes ill-defined. Assisted by
our quantum complexity bound for \chainpow, we are able to identify a
quantum analogue of the \emph{honest majority} condition, under which
the desired security properties of the Bitcoin backbone protocol
follow. As a side benefit, the resulting analysis makes the reduction
to $\chainpow$ explicit, which further improves the degree of
modularity of the classical analysis.

We elaborate on our results and how they are obtained in the technical
overview (Section~\ref{sec:tech-overview}). Before that, we highlight
some interesting aspects of our results and put them in perspective.

\paragraph{Quantum query complexity of $\chainpow$ and $\mrsearch$.}
Roughly speaking, the formal query bounds we show match the intuitive
argument based on Grover's quantum search algorithm, which at first glance may not
look surprising. Instead, we view the following values potentially
more impactful. 

\begin{tiret}
\item Our proof provides another working example demonstrating the
  power and potential of the recording technique. While the conceptual
  idea behind the original technique is remarkably neat, the
  \emph{execution} is often a lot more complex, which in turn
  dictates the success or
  failure of applying the technique. This is also the case in many
  recent generalizations. Our execution of a particular non-uniform
  recording technique is carefully specified and ``packaged'' in clean
  modules, and could be treated as a template to derive other proofs
  based on it.

  We note that the basic idea of non-uniform extensions is natural and
  has already been investigated (e.g.,~\cite{AMRS20}). In particular,
  in independent work, Hamoudi and Magniez~\cite{HM21} consider
  essentially the same search problem as we do (\mrsearch), proving a quantum
  query bound by extending Zhandry's technique\footnote{The
    formulation of the search problem in~\cite{HM21} is in a slightly
    restricted form, but their analysis should in principle work for
    the general version we consider here. We also need strengthened
    bounds to derive tighter security for the Bitcoin backbone
    protocol. See Appendix~\ref{sec:comparison} for further
    discussion.}. In fact, if the sole goal is to show a query bound
  for the search problem, not necessarily as tight as what we obtain,
  it is plausible that one can simulate non-uniform functions by
  uniform functions and then resort to the original uniform recording
  technique (cf.~\cite{LZ19_mult}). Again, the merit
  would lie in the execution,
  and in any case a stand-alone treatment of a non-uniform case
  remains beneficial.
  
\item As also noted in~\cite{HM21}, the quantum query complexity of
  the search problem establishes an \emph{average-case} strong direct
  product theorem (SDPT) via the recording technique, which is
  combinatorial in nature. SDPTs are highly desirable in a host of
  theoretical models~\cite{Unger09,KSdW07,Sherstov12,LR13}, and it
  basically assert that the success probability of solving multiple
  instances of a problem drops exponentially unless investing equal
  multiples of the resource for solving one instance. In the context
  of quantum query complexity, SDPT has been extensively investigated,
  where the \emph{polynomial method}~\cite{BBH+01} and the
  \emph{(generalized) adversary method}~\cite{Ambainis02} are the two
  dominating approaches to prove it. The polynomial method builds upon
  \emph{analytical} properties of polynomials, while the (``modern'')
  adversary method often resorts to advanced \emph{algebraic} tools
  such as representation theory. In contrast, the combinatorial
  approach based on the recording technique, which usually follows
  more elementary and intuitive arguments, could open a new route to
  establishing a SDPT. Besides, existing examples of SDPT proven by
  either method (polynomial or adversary method) are typically for
  \emph{worst-case} problems and usually difficult to extend to the
  \emph{average-case}. This for instance is the case for many examples
  of SDPT for the standard unstructured
  search~\cite{grover_optimality,BBGH05,ASdW09,Ambainis10}. Our new
  approach via the recording technique seems especially well suited to
  reason about {\em average-case} problems and also to induce hardness
  bounds needed in the cryptographic setting.

\end{tiret}

\paragraph{Quantum security of the Bitcoin backbone protocol.} Once
the quantum hardness of $\chainpow$ is established, it would be
tempting to ``plug'' it into the classical framework of~\cite{GKL15}
in order to derive its quantum security. However, we stress that, as
more and more works demonstrate, when quantum adversaries are present
every link of the provable-security framework needs to be reexamined,
including the attack models, security goals, and security
reductions.

Specifically, it can be justified that the security goals of the
Bitcoin backbone protocol (e.g., common prefix, chain quality) may
stay unchanged fortunately. But as soon as the attacks come into play,
we observe a crucial simplification assumption in~\cite{GKL15} that
bears no clear quantum analogue. In a nutshell, classically it is
without loss of generality to assume that an adversary, just as honest
users, solves every \pow{} simply by querying the random oracle on a
sequence of inputs. As a result, the adversarial behavior is fixed and
the protocol can be described by an explicit random experiment, in
which we can conduct probabilistic analysis of various events such as
those indicating secure configurations of the system. This is how the
analysis proceeds in~\cite{GKL15}, by formulating a central notion of
\emph{typical executions}, which enables establishing the desired
security properties under an honest-majority condition. In our case,
since we can no longer assume a specific adversarial strategy that
includes all possible quantum attacks, the above simplification
becomes unsound.

Our solution introduces a simple remedy which allows inheriting the
classical analysis to a large extent. It boils down to an alternative
and stronger characterization of typical executions for the quantum setting. Although this
characterization is not as intuitive as in the classical case, it
enables the analysis of quantum adversaries and proof that the backbone
protocol's security goals under a post-quantum honest-majority
condition we identify. In addition, as we mentioned earlier, the
security of the protocol is explicitly reduced to the hardness of
\chainpow. In turn, this leads to a more modular analysis even in the classical
setting, and might help in the quantum security analysis of other
blockchain-based protocols~\cite{GKL17,GKLP16,GKP19}.

Finally, we remark that
the focus of this work is the simplified scenario in which only the adversary is quantum-capable, while the honest parties remain classical. We consider this analysis a first step towards the more general scenario, that also has independent interest in its own right since it address a realistic practical situation that the near-term development of quantum technologies may bring us. Specifically, imagine that large quantum computers are not available to the public, due to their price and required technologies, hence most users participating in the Backbone Bitcoin  would only own classical computing devices. Large quantum computers would only be held by a few big companies or states, where most of them would not be willing or interested in participating in the bitcoin blockchain. The question we address is what would happen if one of those very few but powerful players decided to attempt to compromise the security of a PoW-based blockchain. 
The ultimate aim would be to address the full quantum setting, in which 
there could be an arbitrary number of quantum-capable players (both honest and adversarial), and this remains an important direction for future research.

\subsection{Technical Overview}
\label{sec:tech-overview}

\subsubsection{Quantum Query Complexity of Multi-solution Bernoulli Search and \texorpdfstring{\\}{} \chainpow}
\label{intro:chainpow}

To show the quantum query complexity of the \chainpow{} problem
(Theorem~\ref{thm:chain_pow_intro}), we consider a simpler variant
(for the adversary), called \mrsearch, which relaxes the
requirement that the \pows{} need to form a chain. Bounding the
adversary's performance in the simpler problem readily provides a
bound for the \chainpow{} problem.

Hence, we focus on the quantum query complexity of \mrsearch, and prove
the main bound on the success probability for solving it.

\paragraph{Definition (\mrsearch).} \textit{ Given a Boolean function
  $f$ as a black-box, such that each input $x$ is independently
  assigned the value $f(x)=1$ with probability $p$, find $k$ distinct
  preimages of $\, 1 \,$ for $f$.}

\begin{theorem}[Informal]
\label{thm:kbersearch-informal}
For any quantum adversary $\mathcal{A}$ having $N$ quantum queries, the probability that $\mathcal{A}$ solves \mrsearch{} is bounded by:
\begin{equation}
    p_k^N \leq 4(1-p) p^k \left( \sum_{i = 0}^k (\sqrt{1 - p})^i \cdot \binom{N}{i} \right)^2  \leq 
    \frac{1}{k}
    \cdot \left(\frac{eN\sqrt{p}}{k}\right)^{2k}.
  \end{equation}
\end{theorem}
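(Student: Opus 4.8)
The plan is to bound the success probability by Zhandry's compressed/recording-oracle technique~\cite{Zhandry19}, generalized from a uniform random oracle to the point-wise independent Bernoulli distribution, and then to extract the stated closed form by elementary estimates. Since the random function is Boolean with $\Pr[f(x)=1]=p$ independently for every $x$, the ``fresh'' local state of the purified oracle at each input is $\ket{\phi}:=\sqrt{1-p}\,\ket{0}+\sqrt{p}\,\ket{1}$, and the recording oracle is obtained by conjugating the purified oracle with the local basis change sending $\ket{\phi}$ to a dedicated ``empty'' symbol $\bot$; the database registers then take values in $\{0,1,\bot\}$ and start all $\bot$. Two facts need checking, both non-uniform analogues of Zhandry's: (i) the recording oracle is perfectly indistinguishable, from $\mathcal{A}$'s viewpoint, from the true (purified) random oracle --- one must exhibit the local basis change explicitly and verify it is a genuine unitary, which is slightly delicate as the output alphabet is only two-dimensional and $\ket{\phi}$ is almost $\ket{0}$ when $p$ is small; and (ii) a single (phase) query on a register currently holding $\bot$ pushes amplitude into the ``recorded $1$'' subspace with amplitude $O(\sqrt p)$ (in fact $\Theta(\sqrt{p(1-p)})$), while a query on an already-recorded register creates no new recorded $1$. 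Fact (ii) localizes the growth of the number of recorded $1$'s.

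\textbf{Combinatorial core.} Let $\ket{\psi_t}$ be the joint state after $t$ queries in the recording model --- the adversary's own unitaries between queries act trivially on the database and can be dropped for this count. Using (ii) I would set up an $\ell_2$ progress measure (the norm of the component of $\ket{\psi_t}$ supported on databases with a prescribed number of recorded $1$'s) and prove a per-query recursion in which amplitude flows toward higher counts only through the $O(\sqrt p)$ transition of (ii), the ``staying'' terms carrying the weight $\sqrt{1-p}$ contributed by the off-target branch. Unrolling this recursion over the $N$ queries, from the all-$\bot$ initial state, bounds the amplitude of ever reaching $k$ recorded $1$'s by a constant times
\[
p^{k/2}\sum_{i=0}^{k}(\sqrt{1-p})^{i}\binom{N}{i},
\]
the binomial sum counting the ways the $k$ ``record a $1$'' steps (and the auxiliary ``record a $0$'' steps the non-uniform structure introduces) can be interleaved among the $N$ queries. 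The delicate point is controlling the interference across the super-exponentially many database branches so that only these coefficients survive the triangle inequality.

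\textbf{Reduction and closed form.} When $\mathcal{A}$ outputs a claimed solution $(x_1,\dots,x_k)$ of distinct points, the \mrsearch{} success probability equals the probability of measuring the (un-compressed) oracle registers at $x_1,\dots,x_k$ and getting all $1$'s. By the standard lemma relating ``producing a valid tuple'' to ``the recording database (essentially) already contains one'' --- splitting on how many of the $k$ claimed points are recorded $1$'s versus freshly guessed, each guess contributing a factor $\sqrt p$, which is exactly the $i=0$ end of the sum above --- this probability is bounded, up to the constant $4(1-p)$, by the square of the amplitude in the previous display, giving the first inequality
\[
p_k^N \ \le\ 4(1-p)\,p^{k}\Big(\sum_{i=0}^{k}(\sqrt{1-p})^{i}\binom{N}{i}\Big)^{2}.
\]
The second inequality is then a routine estimate: for $k\le N/2$ one has $\binom{N}{i}\le\binom{N}{k}$ for all $i\le k$, so the sum is at most $(k+1)\binom{N}{k}\le (k+1)(eN/k)^{k}$, and since $(\sqrt{1-p})^{i}\le 1$ the prefactor $4(1-p)(k+1)^2$ is absorbed into the $1/k$ slack, yielding $p_k^N\le \tfrac1k(eN\sqrt p/k)^{2k}$; the remaining regime (and in particular $N<k$, where $p_k^N$ is already negligible) is dispatched using $p_k^N\le 1$.

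\textbf{Main obstacle.} I expect the crux to be item (ii) together with the per-query recursion: making the transition structure of the non-uniform recording oracle precise --- the two-element output alphabet makes the $\bot$-bookkeeping fiddly, and the ``record a $1$'' transition must be isolated with its $O(\sqrt p)$ amplitude --- and then carrying the interference control all the way through the $\ell_2$ progress recursion so that only that coefficient appears, which is the make-or-break step of every compressed-oracle proof and exactly where generalizing Zhandry's argument beyond uniform distributions needs new work. Getting the constants sharp enough to land on $4(1-p)$ rather than something larger is an extra bit of care; the reduction itself and the final numerical estimate are comparatively routine.
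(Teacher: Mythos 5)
Your plan follows essentially the same route as the paper: purify the Bernoulli random function into a coherent oracle register, conjugate by a per-position basis change (the paper's $U_p$, which rotates $\ket{0}$ to the local state $\sqrt{1-p}\ket{0}+\sqrt{p}\ket{1}$) to get a ``dual'' picture where the database starts all zero, observe that a single query can turn at most one fresh position non-zero with amplitude $2\sqrt{p(1-p)}$, set up an $\ell_2$ progress measure on the Hamming weight of the (relevant part of the) database, prove a per-query recursion, and account in the reduction for the $k-i$ claimed-but-unrecorded positions each contributing $\sqrt{p}$ under the final inverse basis change. Your $\{0,1,\bot\}$ bookkeeping is a relabelling of the paper's two-dimensional dual register (with $\ket{0}$ playing the role of $\bot$), so that part is cosmetic. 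One small slip in exposition: the sum $p^{k/2}\sum_{i\le k}(\sqrt{1-p})^i\binom{N}{i}$ is not the amplitude of the database ever reaching $k$ recorded $1$'s --- that event is controlled by the single term $i=k$ --- but rather the total success amplitude once guesses are folded in, which you do state correctly in the ``Reduction'' paragraph.

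There is, however, a genuine gap in your derivation of the second (closed-form) inequality. You bound $\sum_{i=0}^{k}\binom{N}{i}\le(k+1)\binom{N}{k}\le(k+1)(eN/k)^k$, which after squaring and multiplying by $4(1-p)$ yields a prefactor $4(1-p)(k+1)^2$. You then claim this is ``absorbed into the $1/k$ slack,'' but the inequality runs the wrong way: $4(1-p)(k+1)^2$ exceeds $1/k$ for every $k\ge 1$, so your estimate does not give $p_k^N\le\frac1k(eN\sqrt p/k)^{2k}$ --- it gives something polynomially larger. The paper instead bounds the ratio $S'_{N,k}/S'_{N,k-1}\le N/k$ (by comparing $\binom{N}{k}$ against the last two terms $\binom{N}{k-1}+\binom{N}{k-2}=\binom{N+1}{k-1}$), iterates to get $\sum_{i\le k}\binom{N}{i}\le N^k/k!$, and then applies the Stirling lower bound $k!>\sqrt{2\pi k}\,(k/e)^k$, producing the crucial $1/\sqrt{2\pi k}$ factor. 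Squaring gives $1/(2\pi k)$, and $4(1-p)/(2\pi k)<1/k$ closes the argument. The Stirling factor is precisely what your crude $(k+1)\binom{N}{k}$ bound throws away, so the final numerical step is not ``routine'' in the way you assume and needs the sharper estimate.
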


Our proof of Theorem~\ref{thm:kbersearch-informal} relies on the recording technique of quantum queries in the
quantum random oracle model due to Zhandry~\cite{Zhandry19}, designed for uniformly sampled functions. We
modify the framework to accommodate functions sampled according to
Bernoulli distributions. In Zhandry's recording technique, the key
observation is that the oracle holds a private register with the
(uniform) superposition over all possible functions. Then if we view the oracle register in the Fourier domain (i.e., applying a Fourier transformation on it), each query modifies {\em exactly one position} (from $0$ to $1$, or the other way around).

In our setting, the oracle's state will now be a Bernoulli superposition of the form $ \sum_f \sqrt{\alpha_f} \ket{f}$, where $\alpha_f$ is the probability to sample the function $f$ according to a Bernoulli distribution. The view between the standard and Fourier domains in the uniform does not hold anymore. Instead, we consider a different dual domain, and extend the primal-dual perspective to Bernoulli distributions. Switching between the primal and dual domain is based on a  unitary $U_p$, which essentially rotates around the Z axis depending on the probability parameter $p$, which specializes to the Hadamard transformation in the uniform case when $p=1/2$. As a result, in the dual domain we start with the oracle state being initialized in the all-$0$ state, and each adversarial quantum query will rotate exactly one of its positions (in superposition) using the $U_p$ transformation. Consequently, after $N$ quantum queries we will have at most $N$ rotated qubits in the oracle register and the rest of the qubits will remain $0$. Intuitively, this gives us a bound on the adversary's knowledge by examining how many non-zero entries are in the oracle's register.

Having established a relation between solving the \mrsearch{} problem and the oracle's property of having at least $k$ non-zero entries in its register, we just need to bound the probability of the latter event.
The final step is achieved by bounding the ``progress'' made in enhancing the probability amplitude of the desired oracle's states after each query.

From the obtained bound on the \mrsearch{} problem, we can also derive a bound on the harder variant, the \chainpow{} problem.

\begin{lemma}[Informal]
The probability of solving \chainpow{} using $N$ quantum queries is at most the probability of solving \mrsearch{} using $N + k$ quantum queries.
\end{lemma}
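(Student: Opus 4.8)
The plan is to prove the inequality by a tight black-box reduction: from any quantum algorithm $\mathcal{A}$ that solves \chainpow{} using $N$ queries with success probability $P$, I would construct a quantum algorithm $\mathcal{B}$ that solves \mrsearch{} using $N+k$ queries with success probability at least $P$. Regard the \mrsearch{} instance as a Bernoulli-random Boolean function $g$ on the domain $\{0,\dots,k-1\}\times X\times Y$, each value $g(i,x,y)=1$ independently with probability $p=T/|X|$; write $g_i(x,y):=g(i,x,y)$. The engine of the reduction is the observation that a uniformly random $h_i:X\times Y\to X$ decomposes as $h_i(x,y)=H(g_i(x,y),r_i(x,y))$, where $g_i(x,y)=\mathbf{1}[h_i(x,y)\le T]$ is $\mathrm{Bernoulli}(p)$, $r_i$ is an auxiliary uniformly random ``remainder'' function independent of $g_i$, and $H$ is a fixed injection that lands below the threshold $T$ when the Bernoulli bit is $1$ and above it otherwise; with the right bookkeeping this makes $h_i$ \emph{exactly} uniform. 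The crucial feature is that $g_i(x,y)$ is recoverable from $h_i(x,y)$.

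Now $\mathcal{B}$, given oracle access to $g$, samples the remainder functions $r_0,\dots,r_{k-1}$ internally (formally as $2(N+k)$-wise independent functions, indistinguishable from truly random ones to an $(N+k)$-query algorithm à la Zhandry), and runs $\mathcal{A}$. It answers each query of $\mathcal{A}$ to $h_i$ using a single query to $g_i$ together with local unitaries that compute $r_i$ and $H$; the ancilla holding $g_i(x,y)$ is uncomputed \emph{without} a further query to $g_i$, precisely because its value is a function of the answer register, which by then holds $h_i(x,y)$. Since the composed $h_i=H(g_i,r_i)$ is exactly a uniformly random function and the family $h_0,\dots,h_{k-1}$ is mutually independent, $\mathcal{A}$'s view is identical to the real \chainpow{} game, so $\mathcal{A}$ still wins with probability $P$. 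When $\mathcal{A}$ halts and outputs $y_0,\dots,y_{k-1}$, $\mathcal{B}$ reconstructs the chain values $x_1,\dots,x_{k-1}$ by evaluating $x_{i+1}=h_i(x_i,y_i)$ on classical inputs (reusing the same sampled $r_i$'s, so the values are consistent with what $\mathcal{A}$ saw), costing at most $k$ additional queries to $g$; it then outputs the $k$ triples $(0,x_0,y_0),\dots,(k-1,x_{k-1},y_{k-1})$. These are pairwise distinct (distinct first coordinates), and whenever $\mathcal{A}$ wins, i.e.\ $x_{i+1}\le T$ for all $i$, each satisfies $g_i(x_i,y_i)=\mathbf{1}[x_{i+1}\le T]=1$, so $\mathcal{B}$ solves \mrsearch{}. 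Counting gives $N+k$ queries and $\Pr[\mathcal{B}\text{ wins}]\ge P$, which is the claim.

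The main obstacle is the faithfulness-and-efficiency of the oracle simulation, and it splits into two parts: (i) choosing the decomposition $h_i=H(g_i,r_i)$ so that the reconstructed $h_i$ is \emph{exactly} uniform and the family \emph{exactly} independent — not merely statistically close — since the lemma asserts an exact inequality; and (ii) ensuring that answering one query of $\mathcal{A}$ costs only one query to $g$, which hinges on the recoverability of the Bernoulli bit $g_i(x,y)$ from the full value $h_i(x,y)$, so that no residual ``garbage'' register about $g_i$ survives to decohere $\mathcal{A}$'s state. A secondary, more routine point is the rigorous treatment of $\mathcal{B}$'s self-generated randomness via $2q$-wise independence, together with the check that the end-of-run chain reconstruction employs the very same remainder functions, so that the emitted \mrsearch{} solution is consistent with the simulated execution.
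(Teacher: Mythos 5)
Your reduction is correct and its spine matches the paper's: take any $N$-query algorithm $\mathcal{A}$ for \chainpow{}, re-evaluate the $k$ chain links $x_{i+1}=h_i(x_i,y_i)$ classically at the end (spending $k$ extra oracle calls), and output the $k$ resulting triples, which are automatically distinct because their index coordinates are. Where you go further than the paper is in the change of oracle interface. The paper passes through an intermediate problem (\bagpow{}) in which the adversary still holds the \emph{full} oracles $h_i:X\times Y\to X$, and then simply asserts that this is ``clearly equivalent'' to \mrsearch{} via $f(i,x,y):=\mathbf{1}[h_i(x,y)\leq T]$. That equivalence is only obvious in one direction (a Boolean-$f$ adversary can be simulated given $h$); the direction actually needed — that an adversary holding the richer $h$-oracle cannot beat the \mrsearch{} bound, which was proved only against Boolean oracle access — is precisely what your decomposition $h_i=H(g_i,r_i)$, with $r_i$ sampled internally and $g_i$ recoverable from $h_i$, supplies. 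In this sense your writeup is \emph{more} complete than the paper's terse proof.

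A few technical remarks. (1) $H$ cannot literally be an injection unless $T=|X|-T$; what is actually needed is only that $H(g_i,r_i)$ be exactly uniform with $g_i$ recoverable from the output, which holds e.g.\ for $H(1,r)=r\bmod T$, $H(0,r)=T+(r\bmod(|X|-T))$ with $r$ uniform over $\{0,\dots,T(|X|-T)-1\}$. (2) Your uncomputation argument (``its value is a function of the answer register, which by then holds $h_i(x,y)$'') is phrased for the compute-into-fresh-register convention; in the general XOR model the register holds $z\oplus h_i(x,y)$ with $z$ in superposition, so $g_i$ is not directly readable from it. The single-query simulation still goes through, most cleanly in the phase picture: since $\hat z\cdot H(g,r)=\hat z\cdot H(0,r)\oplus g\cdot\bigl[\hat z\cdot(H(0,r)\oplus H(1,r))\bigr]$ over $\mathrm{GF}(2)$, one call to the Boolean phase oracle controlled on the locally computable bit $b:=\hat z\cdot(H(0,r)\oplus H(1,r))$, combined with a local phase and conjugation by Fourier transforms, reproduces the $h_i$ oracle with no garbage. (3) The $2(N+k)$-wise independence for the $r_i$'s is unnecessary for a pure information-theoretic query bound — $\mathcal{B}$ may sample them truly at random — and for queries $\mathcal{A}$ makes to $h_j$ with $j\geq k$ (which your restricted domain $\{0,\dots,k-1\}\times X\times Y$ does not cover) $\mathcal{B}$ should simulate the entire oracle internally, which is harmless since those oracles are irrelevant to the chain.
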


\subsubsection{Bitcoin Backbone Security against Quantum Adversaries}
\label{intro:bbqs}

We assume a single quantum adversary who runs a computation up to
depth $Q$ per round, i.e., that the adversary makes at most $Q$
superposition queries sequentially in each round to the QRO.
When we consider $s$ consecutive rounds, we denote $N = sQ$ the total
number of queries to the QRO in $s$ rounds, which is also the depth of
the quantum computation in $s$ rounds\footnote{One can also think of a scenario where
the adversary controls multiple quantum processors and the overall depth is smaller (see end of Section~\ref{sec:pqbca}).}. We ``lift'' the classical
analytical framework of \cite{GKL15} to the quantum setting in the
following steps:
\begin{newenum}
\item Define the notion of \emph{post-quantum typical execution} of a
  blockchain in the presence of honest players and a quantum
  adversary.  This is a critical step towards modularizing the
  analysis;
\item show that under a post-quantum typical execution, the two
  desired properties, common prefix and chain quality, follow for
  suitable choices of parameters;
\item identify a post-quantum \emph{honest majority} condition under
  which a post-quantum typical execution occurs with high probability.
\end{newenum}

From our main theorem regarding the
query complexity of the \chainpow{} problem
(Theorem~\ref{thm:chain_pow_intro}), we can bound two relevant
quantities: (i) the expected length of an adversarial chain, and (ii)
the (overwhelming) probability that this bound
holds.
The former leads to a post-quantum ``honest majority'' condition,
where we require that the total number of quantum queries $Q$ per
round of the attacker, has to be less than the total number of
classical queries of all the honest parties divided by an extra
$O(p^{-1/2})$ factor, where $p$ is the probability of success of a
single query and, informally, represents the difficulty level of the
\pow. To get this condition, we examine closely the requirements for a typical execution and ensure that the honest chain exceeds by a small constant factor the largest chain that adversaries can make with non-negligible probability.

The latter determines the wait time for safe settlement, which intuitively represents the number of rounds $s$ that need to pass
in order to ensure that the common prefix and chain quality properties
hold except with the same negligible probability as in the classical
case.  Our analysis indicates that the required number of rounds
matches, up to a constant, the number of rounds required in the
classical adversarial setting.  This somewhat surprising result
implies that for post-quantum security, the number of ``block
confirmations'' necessary for a transaction to be accepted in order to
protect against double-spending, is almost the same as in the
classical setting.

Finally, we emphasize that our work, apart from setting different
parameters, preserves the generality of the classical analysis of
\cite{GKL15}. For instance, it captures strategies correlated with
honest-parties' actions as well as long-term attacks (such as selfish
mining~\cite{eyalsirer2014}).

\paragraph{Interpretation of results for Bitcoin security.}
We can now provide a comparison between
the analysis of the Bitcoin backbone protocol against classical adversaries of \cite{GKL15} and 
the current analysis against general quantum adversaries.
First off, let us summarize the parameters in the Bitcoin backbone analysis: \\

\begin{table}[H]
\small
\begin{center}
\begin{tabular}{ |l l| }
\hline
 $n$: \# honest parties & $q$: \# honest classical queries per round \\
\hline
 $Q$: \# adversarial quantum queries per & $f$: prob. at least one honest party \\ 
 round & generates a \pow{} in a round \\
\hline
 $\epsilon$: concentration quality of random variables & $\kappa$: security parameter \\
\hline
 $k$: \# blocks for common prefix & $\mu$: chain quality parameter \\
\hline
 $s$: \# rounds & $p$: prob. of success of a single \\
 (We refer to Section~\ref{sec:model_defs} for the definition of round.)  &  classical query  \\
\hline
 \end{tabular}
\end{center}
\caption{\sl Parameters used in our analysis.}
\label{tab:notation}
\end{table}

Table~\ref{table_comp} shows the comparison of results obtained in the two adversarial settings. The relevant 
conditions and quantities are:
\begin{tiret}
\item ``Honest Majority,'' which expresses the relation between the honest hashing power and the (classical or quantum) hashing power of the adversary; 
the expected number of adversarial blocks in $s$ consecutive rounds;
\item the probability of a ``typical execution'' (i.e., 
the probability that the required bounds on the number of adversarial queries hold); and
\item the number of rounds required to reach the same level of security as in the classical adversaries setting. \\
\end{tiret}

\begin{table}[H]
\begin{center}
\setlength\tabcolsep{4.0pt}
 \begin{tabular}{||c | c | c ||} 
 \multicolumn{1}{c}{}  &   \multicolumn{1}{c}{Classical Adversary} &  \multicolumn{1}{c}{General Quantum Adv. (this work)} \\ 
 \hline\hline
$\begin{array} {ccl} \text{Honest} \\ \text{Majority} \end{array}$ & $\begin{array} {lcl} \frac{t}{n - t} < 1 - 3(f + \epsilon) \ \ \text{\cite{GKL15}} \\ \text{$t = $ number (classical) adv. } \end{array}$ &  $Q \leq \frac{(1 - \epsilon)f(1 - f)}{(1 + \epsilon)e \sqrt{p}}$\\ 
 \hline
$\begin{array} {ccl} \text{Maximum Expectation} \\ \text{of Adversarial \pows} \end{array}$ & $pqt \cdot s$ \ \ \ \ \text{\cite{GKL15}} & $\leq (1+\epsilon)\sqrt{e^2p}\cdot Q\cdot s$ \\ 
 \hline
$\begin{array} {ccl} \text{ Probability of} \\ \text{Concentration} \end{array}$  & $\begin{array} {lcl} {P}_{\text{cl}} = 1 - e^{-\Omega(\epsilon^2fs)} \ \  \text{\cite{GKL15}} \end{array}$ & $P_{\text{q}} = 1 - e^{-\Omega((1 - \epsilon) f (1 - f)s} $\\
 \hline
 $\begin{array} {ccl} \text{ Number of} \\ \text{Rounds} \end{array}$ & $s_{\text{cl}}$ & $s_{\text{q}} = O\left( \frac{\epsilon^2}{(1 - \epsilon)(1 - f)} \cdot s_{cl}\right)$ \\
\hline
\end{tabular}
\end{center}
 \caption{\sl Classical vs quantum adversaries' bounds.}
 \label{table_comp}
\end{table}

From Table~\ref{table_comp}
we highlight two main aspects. First, noting that the probability that at least one
honest party generates a \pow{} in a round is $f = npq$, (we emphasize that $p$ is much smaller than the honest hashing power, implying that $f = npq < 1$), the current post-quantum honest 
majority condition can be expressed as $Q \lessapprox n\cdot q \cdot p^{1/2} \cdot O(1)$, 
meaning that each quantum query is worth $p^{-1/2}$ classical queries. 
Second, the number of rounds for safe settlement is (up to a constant) the same as in
the classical case.

\subsection{Related Work} \label{sec:rel}

A first step towards understanding Bitcoin's vulnerabilities against
quantum attacks was taken by Aggarwal {\em et al.}~\cite{ABLS18}. They
pointed out the imminent break of the elliptic-curve-based signature
scheme in Bitcoin, and argued that in contrast the \pow{} is
relatively resistant to near-term quantum computations due to their
slow clock speed and large overhead of quantum error
correction. 
The Bitcoin protocol in a setting where honest parties are also
quantum was considered in~\cite{LRS18,sattath2018insecurity}. While
these papers offer interesting observations, they fall short of formal
security guarantees. Applications of smart contracts in the quantum world are explored in \cite{CS20}, where the authors propose a classical-quantum payment system which can scale better than blockchains by relying on quantum money techniques.

A number of proof techniques have been developed for the QRO over the
years.
For example, one can simulate a quantum random
oracle~\cite{Zhandry12_prf,Zhandry15,SY17}, program it under a variety
of circumstances~\cite{ES15,Unruh15}, establish generic security of
hash functions~\cite{HRS16,BES18,HS19,LZ19_mult}. These techniques
enable proving the quantum security of many cryptographic schemes in
the QRO model ~\cite{DHK17,SXY18,AHU19,LZ19_fs,DFMS19}. Zhandry's
recording technique has inspired many follow-up with various
improvements and new applications
(e.g.,~\cite{KSS+20,CGLQ20,KKPP20,Cza21}).

A related but different problem than \chainpow{}, called proofs of
sequential work, was analyzed in~\cite{CFHL21,BLZ20}. A central task
there is to find a $q$-chain, which is a sequence
$(x_0, x_1, ..., x_q)$ such that $x_i = H(x_{i-1})$ for
$1 \leq i \leq q$, using fewer than $q$ queries. This strict relation
between the number of quantum queries and the size of the solution
marks a drastic distinction from our search problem in the Bitcoin
setting.

An interesting generalization of the recording technique to a parallel
query model is investigated in the two papers
above~\cite{CFHL21,BLZ20} as well as in~\cite{Unruh21}. We want to
point out that this generalization however would not be sufficient in
the Bitcoin context in its current form. One immediate challenge is
that the parties can act in an adaptive and heterogeneous way, and one
needs to take into account classical communication between parallel
parties.

As a final note, in a previous version of this work~\cite{CGKSW19}, we
analyzed a restricted family of quantum-attack strategies. In the
current version, we fully resolve this limitation using completely
different techniques. A more thorough discussion is deferred to
Appendix~\ref{sec:comparison}.




\section{Preliminaries} \label{sec:prelim}

 \subsection{The Recording Oracle Technique} \label{subsec:COT}

The recording oracle technique of Zhandry \cite{Zhandry19} is a powerful tool that allows keeping track of a quantum adversary's knowledge when interacting with a \textit{uniform} random oracle. 

Assume that the underlying uniformly sampled function of the quantum random oracle is $f : \{0, 1\}^m \rightarrow \{0, 1\}$. This is typically modelled as 
queries to an oracle performing the map: $\ket{x}\ket{y} \rightarrow \ket{x}\ket{y \oplus f(x)}$.\\
Alternatively, this is equivalent to queries to a ``phase'' oracle performing the map: 
$\ket{x}\ket{y} \rightarrow (-1)^{y \cdot f(x)} \ket{x}\ket{y}$.

The starting point is the following essential observation: an adversary querying the quantum oracle cannot distinguish between the following two settings. In the first setting the function is fixed and sampled uniformly at random. In the second setting, there exists an extra function register, prepared in uniform superposition over all possible functions $\frac{1}{\sqrt{2^{2^m}}}\sum_f \ket{f}$, and all further actions/operations are conditional on this register\footnote{Here, we abuse the notation and denote by $f$ the $2^m$-bit string representing the truth table of the function $f$.}.
The function register is not accessible to the adversary and, mathematically, we can trace-it out, resulting exactly to a uniform (classical) mixture over all possible functions.

Using this observation, we will consider that when the adversary performs a query $\sum_{x, y} a_{x, y} \ket{x, y}$, the entire state of the system before the query (composed of adversary's and oracle's state) is of the form: $ \sum_{x, y} a_{x, y} \ket{x, y} \otimes \frac{1}{\sqrt{2^{2^m}}} \sum_f \ket{f}$. Then the state of the system after the query (using the phase oracle) becomes: $ \ket{\phi_1} := \sum_{x, y} a_{x, y} \ket{x, y} \otimes \frac{1}{\sqrt{2^{2^m}}} \sum_f \ket{f} (-1)^{y \cdot f(x)}$. This shows that the phase resulted from the query can equally be viewed as affecting the oracle's state. 

Specifically, we will next analyze how each query affects the oracle's state. We first denote the $2^m$-bit string $P_{(x, y)}$ as the string having value $y$ on position $x$ and $0$ everywhere else. Using this notation, we can express the state after the query: $ \ket{\phi_1} := \sum_{x, y} a_{x, y} \ket{x, y} \otimes \frac{1}{\sqrt{2^{2^m}}} \sum_f \ket{f} (-1)^{ \langle f , P_{(x, y)} \rangle}$, where $\langle f, P_{(x, y)} \rangle$ denotes the inner product of the two strings. \\
This can also be generalized to obtain the state of the system after $t$ queries is of the form: \\
$\ket{\phi_t} := \sum_{x_1,...,x_t, y_1,...,y_t} a_{x_1, ..., x_t, y_1, ..., y_t} \ket{\psi_{x_1, ..., x_t, y_1, ..., y_t}} \otimes \frac{1}{\sqrt{2^{2^m}}} \sum_f \ket{f} (-1)^{\langle f, (P_{(x_1, y_1)} \oplus ... \oplus P_{(x_t, y_t)}) \rangle}$. \\
But now, by applying the Hadamard operation to all the qubits of the final register (oracle's state), the oracle state would become: 
$\ket{P_{(x_1, y_1)} \oplus ... \oplus P_{(x_t, y_t)}}$. \\
Consequently, we notice that in the Fourier domain the evolution of the adversary's queries would be the following:
We start with the all-zero quantum state in the oracle register; namely, the state of the oracle is denoted with $D = 0^{2^m}$. Then, with each adversarial query we will XOR $P_{(x, y)}$ to the string $D$. Finally, we notice that as a result, after $t$ queries the oracle's state $D$ will have at most $t$ bits equal to $1$.




\section{Query Complexity of \mrsearch  \label{sec:main_analysis}} 

In this section we prove the quantum query complexity of the
\emph{randomized} search problem denoted as \mrsearch{}. This
generalizes existing direct product theorems
(cf.~\cite{ASdW09,KSdW07,Sherstov12,LR13}) to an average-case setting.

\begin{protocol}
\textbf{Problem $\mrsearch$: {$k$ solutions randomized search}} \\

\noindent\textbf{Given}: $f: \bit^m \to \bit$ generated by $\samp$: for
each $x\in \bit^m$, \emph{independently} set

\[ f(x) = \left\{ \begin{matrix} 
1, & \text{with probability } p\\
0, & \text{otherwise} \\
\end{matrix}\right. \]

\noindent\textbf{Goal}: Find $x_1,\ldots,x_k$ such that $f(x_i) = 1$
for all $i \in [k]$.
\end{protocol}

\begin{theorem} \label{lemma:bound_hard} For any quantum adversary
  making at most $N$ queries, the probability $p_k^N$ of solving
  $\mrsearch$ satisfies:
  \[ p_k^N \leq 4(1-p) p^k \left( \sum_{i = 0}^k (\sqrt{1 - p})^i
      \cdot \binom{N}{i} \right)^2 \le \frac{1}{\pi k} \cdot
    \left(e^2(N/k)^2p\right)^{k} \, . \]
\end{theorem}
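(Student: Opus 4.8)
The plan is to lift Zhandry's recording technique from the uniform random oracle to the Bernoulli oracle defining $\mrsearch$, and then to bound, via a ``progress'' argument, the amplitude with which the recorded database ever accumulates $k$ ones. As in the uniform case, I would first purify the oracle: querying $f\leftarrow\samp$ is indistinguishable from querying a phase oracle whose private register is initialized in the product state $\bigotimes_{x\in\bit^m}\bigl(\sqrt{1-p}\,\ket{0}+\sqrt{p}\,\ket{1}\bigr)$, since tracing out that register reproduces exactly the Bernoulli mixture over truth tables. The uniform analysis then passes to the Hadamard basis, in which each query XORs a one-coordinate string into the database and the oracle starts from the all-zero string; that step fails here because the purified state is not Hadamard-flat. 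I would repair it by introducing a single-qubit unitary $\beru$ — a rotation about the $Z$ axis through an angle determined by $p$ — with $\beru\ket{0}=\sqrt{1-p}\,\ket{0}+\sqrt{p}\,\ket{1}$, which coincides with the Hadamard when $p=1/2$. Conjugating the phase oracle by $\beru^{\otimes 2^m}$ yields a recording oracle that acts on a ``database'' register initialized in $\ket{0^{2^m}}$ and, on a query with input $x$, acts nontrivially only on coordinate $x$, applying there (controlled in superposition on the adversary's phase bit) the single-qubit map $\beru^{\dagger}\,\mathrm{diag}(1,-1)\,\beru$.

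\textbf{Sparsity and reduction to a database event.} Because each query touches a single coordinate, after $N$ queries the joint state is a superposition of databases $D\colon\bit^m\to\{\bot,0,1\}$ with at most $N$ non-$\bot$ entries (the $\bot$-entries being precisely the coordinates still in $\ket{0}$); I would isolate this as a sparsity lemma, the direct analogue of the uniform one. I would then reduce ``solving $\mrsearch$'' to a database event: if the adversary outputs $x_1,\dots,x_k$, the probability that $f(x_1)=\dots=f(x_k)=1$ is bounded by $4(1-p)\cdot\bigl(a_k^{(N)}\bigr)^2$, where $a_k^{(N)}:=\lVert\Pi_{\ge k}\ket{\psi_N}\rVert$ and $\Pi_{\ge k}$ projects the database register onto databases recording at least $k$ ones. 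This is the standard reduction — an ``extra query'' / hybrid argument, plus the observation that a coordinate not yet recorded carries a fresh $\mathrm{Bernoulli}(p)$ value — and it reduces the theorem to showing $a_k^{(N)}\le p^{k/2}\sum_{i=0}^{k}(\sqrt{1-p})^{\,i}\binom{N}{i}$.

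\textbf{The progress bound (the main obstacle).} This is the heart of the argument and the step I expect to cause the most trouble. The mechanism is that a single query applies $\beru^{\dagger}\,\mathrm{diag}(1,-1)\,\beru$ to one coordinate (coherently superposed over the queried coordinate), and in the database basis this either leaves a coordinate untouched, records a currently-$\bot$ coordinate as a fresh value — with amplitude $\sim\sqrt{p}$ to record a $1$ and $\sim\sqrt{1-p}$ to record a $0$ — or un-records / changes a recorded value. Tracking how amplitude flows between the subspaces indexed by (number of recorded $1$'s, total database size) under these moves should yield a recursion whose solution, after summing over the at most $N$ reachable database sizes, is exactly $p^{k/2}\sum_{i=0}^{k}(\sqrt{1-p})^{\,i}\binom{N}{i}$; the terms with $i<k$ account for databases that, besides the $k$ ones, also carry recorded $0$'s. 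I expect the honest bookkeeping here — writing the query operator explicitly in the database basis, controlling the coherent superposition over which coordinate is touched via Cauchy--Schwarz, and solving the resulting recursion — to be the bulk of the work, and it is exactly where Zhandry's uniform recording proof must be reworked with the matrix entries of $\beru$ in place of Hadamard's. Squaring and multiplying by the $4(1-p)$ slack from the previous step then gives the first inequality of the theorem.

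\textbf{The closed-form estimate.} The second inequality is a (somewhat delicate but routine) calculation. For $N\gg k$ I would bound the binomial sum by its top term via a geometric-series comparison — i.e.\ by $\binom{N}{k}$ times a factor close to $1$ — the complementary regime being handled separately since the stated bound is essentially vacuous there. Applying Stirling in the form $k!\ge\sqrt{2\pi k}\,(k/e)^{k}$ gives $\binom{N}{k}\le\tfrac{1}{\sqrt{2\pi k}}(eN/k)^{k}$, and combining this with the $p^{k}$ and $4(1-p)$ factors and simplifying yields $p_k^N\le\frac{1}{\pi k}\bigl(e^2(N/k)^2p\bigr)^{k}$.
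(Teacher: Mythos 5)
Your overall route---purifying the oracle, conjugating by the rotation $\beru$ (coinciding with Hadamard at $p=1/2$) to obtain a dual ``database'' basis in which each query touches one coordinate, and then bounding a progress quantity by a recursion---is the paper's route. Where your plan diverges, and where I think it has a genuine gap, is in how you factor the success probability into the progress measure.

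You propose $p_k^N \le 4(1-p)\bigl(a_k^{(N)}\bigr)^2$ with $a_k^{(N)} := \lVert\Pi_{\ge k}\ket{\psi_N}\rVert$, and then aim to prove $a_k^{(N)} \le p^{k/2}\sum_{i=0}^{k}(\sqrt{1-p})^{i}\binom{N}{i}$ directly from the recursion. Neither half corresponds to how the bound actually decomposes. The $4(1-p)$ slack is not an ``extra-query'' artifact: in the paper's proof the relation to the database event is exact, $p_k^N \le \lVert\Pi\ket{\psi^N}\rVert^2$ with no prefactor, where $\Pi$ projects (in the \emph{standard} domain, after re-applying $\beru$) onto states where all $k$ claimed coordinates read $1$. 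The binomial sum, on the other hand, does not come out of the progress recursion. The recursion, run on the projectors $P_{\ge i}$ in the dual domain, naturally yields a \emph{single}-term bound $a_{N,i} := \lVert P_{\ge i}\ket{\phi^N}\rVert \le 2\sqrt{1-p}\,(\sqrt{p})^{\,i}\binom{N}{i}$ (and this is where the factor $2\sqrt{1-p}$, hence ultimately the $4(1-p)$, is born: it is the amplitude $-2\sqrt{p(1-p)}$ with which a query converts a still-$\ket{0}$ dual coordinate into $\ket{1}$, cf.\ Eq.~\eqref{eq:up_squared}). The sum over $i$ enters from a separate decomposition that your plan skips: one must split $\Pi\ket{\psi^N}$ according to how many of the $k$ claimed coordinates are actually \emph{recorded} (non-$\ket{0}$) in the dual database versus merely guessed. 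A claimed coordinate that was never queried still collapses to $1$ with amplitude $\sqrt{p}$ under the final $\beru$, which gives the weight $(\sqrt{p})^{k-i}(\sqrt{1-p})^{i}$ attached to the $i$-recorded piece (Lemma~\ref{lemma:bound_pi}); you flag this phenomenon in passing but absorb it into a flat constant, which does not work because the weight depends on $i$. Combining the two decompositions in the correct order yields exactly $p_k^N \le \bigl(\sum_{i=0}^{k}(\sqrt{p})^{k-i}(\sqrt{1-p})^{i}a_{N,i}\bigr)^2$, from which the stated first inequality follows; the shortcut $p_k^N \le 4(1-p)(a_k^{(N)})^2$ followed by a direct sum bound on $a_k^{(N)}$ is not a valid replacement, because a single $\lVert P_{\ge k}\ket{\phi^N}\rVert$ cannot control the contributions with $i<k$.

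Your handling of the closed-form estimate is fine in spirit. The paper bounds the sum by iterating the ratio $S'_{N,k}/S'_{N,k-1}\le N/k$ to get $S'_{N,k}\le N^k/k!$, then applies Stirling; your geometric-series comparison plus Stirling reaches essentially the same endpoint, so that part is a routine variant rather than a gap.
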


The expression following the second inequality comes from some
elementary algebraic simplification, whose proof can be found in
Appendix~\ref{app:proof_corr_kbersearch}. In the following sections we
prove the bound in the first expression. We first develop a framework
of the recording technique for functions drawn according to a
point-wise independent Bernoulli distribution, and then apply it to
\mrsearch.

\subsection{Recording Technique Framework for Bernoulli Distributions}    \label{sec:COT_bern}

At a high level, we proceed in the following steps:
\begin{newenum}
    \item[1.] Extend the primal-dual framework to functions sampled according to Bernoulli distributions:
    \begin{enumerate}
        \item First define a proper dual domain to accommodate for functions sampled according to this distribution;
        \item Determine the mapping between the primal and dual domains;
        \item Show how each quantum query affects the dual domain;
        \item Finally, map back to the standard domain and analyse what is the form of the system state in the standard domain;
    \end{enumerate}
    \item[2.] For the specific \mrsearch{} problem, relate the probability of finding a solution to the structure of the final state in the standard domain:
    \begin{enumerate}
        \item Relate solving $\mrsearch$ to causing the desired property in the oracle (i.e., at least $k$ non-zero entries in the visited positions). 
        \item Bound the probability of the occurrence of desired property in the oracle for any $N$-query algorithm.
    \end{enumerate}
\end{newenum}

We recall from Section~\ref{subsec:COT} that the key observation in
the uniform setting is to keep the random function coherent, i.e., a
uniform superposition of the truth table of all possible functions
$ \sum_{f \in \{0, 1\}^{2^m}}\ket{f}$. Then if we look at it in the
Fourier domain, this superposition becomes simply $\ket{0^{2^m}}$. In
this viewpoint, a query $|x,y\rangle$ to the oracle has the effect
that it modifies (in superposition) one position $x$ of the truth
table. Along the course of the query algorithm, the truth table gets
updated and an algorithm's knowledge of the oracle translates to
combinatorial properties of the truth table.

In our case, we extend this primal-dual framework to a non-uniform
distribution. Loosely speaking, we will see that solving $\mrsearch$
amounts to that at least $k$ of the entries in the truth table being
rotated towards $\ket{1}$.

\begin{definition}[Bernoulli random function and Bernoulli superposition]
We call $f: \bit^m \to \bit$ a Bernoulli random function, if for every input $x$, $f(x)$ is sampled according to a Bernoulli distribution. We will denote by $\alpha_f$ the probability of $f$ being chosen according to our Bernoulli sampling procedure. For each $f$, we denote its truth table also by $f$, which is a $M:=2^m$ bit string, where the $x$'th bit corresponds to the function value $f(x)$. We can hence prepare a coherent oracle state $\sum_{f} \sqrt{\alpha_f} \ket{f}_F$, also called a Bernoulli superposition, in a function register $F$, which we call the standard Bernoulli oracle (\stdbo). Clearly, measuring it in the standard basis would sample a function $f$ according to our Bernoulli distribution.
\end{definition}

Recall in the standard query model, an oracle query is modelled as: $\ket{x,y} \mapsto (-1)^{y \cdot f(x)} \ket{x}\ket{y}$. The standard Bernoulli oracle action can be described as follows:

\begin{definition}[Standard Bernoulli Oracle Query]
An oracle query in $\stdbo$ is modelled as:
\begin{equation}
  \stdbo: \ket{x,y} \otimes \sum_f \sqrt{\alpha_f} \ket{f}_F \mapsto
  \ket{x, y} \otimes \sum_f \sqrt{\alpha_f} (-1)^{y \cdot f(x)} \ket{f}_F
\end{equation}
\end{definition}

From the perspective of any quantum query algorithm $A$, it is indistinguishable whether it accesses $\stdbo$ or a sampled Bernoulli random function $f$. Namely:
\[ \Pr[A^f(\cdot) = 1: f\gets\samp  ]  = \Pr[A^\stdbo (\cdot) = 1] \, .\]

We describe a map which will allow us to switch between the standard
and dual domains. Note that in the special case of a uniform random
function, it becomes the Hadamard gate ($p=1/2$).

\begin{definition}[Map Standard-Dual domains]
For general $p$, we consider the following unitary $U_p$ acting on a single qubit:
\begin{equation}
 U_p: \ket{b} \mapsto \sqrt{1 - p} \ket{b} + (-1)^b \sqrt{p}\ket{b\oplus 1}   
\end{equation}

Then let $\beru: = \otimes_{x\in \bit^m} U_p^x$ where $U_p^x = U_p$ for all $x$. We observe that:
\begin{equation}
   \beru \ket{0^M}_F = \sum_f \sqrt{\alpha_f} \ket{f}_F \, . 
\end{equation}

\end{definition}

We can now examine the knowledge of an algorithm about the underlying
oracle.  Consider running any query algorithm $A$ with respect to the
standard domain. Intuitively, by our earlier observation, at the end
of the execution, the function register will contain superposition of
strings $D$ of bounded Hamming weight, representing modified truth
tables in the dual domain. The non-zero positions (in these strings
$D$) would entail what the query algorithm can infer about the
oracle. In particular, once we convert the non-zero entries back to
the primal domain through the unitary $U_p$, i.e., assigning Bernoulli
function values, the likelihood of getting $k$ or more 1s essentially
bounds the probability of successfully solving $\mrsearch$.

\begin{definition}[System state]
For any query algorithm $A$, the state of the entire system in the standard domain, can be described by:
\begin{equation} \label{eq:state_gen}
   \ket{\psi} = \sum_{\vx,\vy,z,D}\alpha_{\vx,\vy,z,D} \ket{\vx,\vy,z} \otimes \ket{D} \, , 
\end{equation}
where $\vx$ and $\vy$ are tuples consisting of $k$ inputs and outputs respectively and $z$ represents the ``workspace'' register of the algorithm.
\end{definition}

 Intuitively, the tuple $\vx$ represents the solution (of size $k$) the adversary outputs for the \mrsearch{} problem and these $k$ inputs may or may not be queried. In this way, after each query, if we were to measure the first registers $\vx$ and $\vy$, we can determine the success probability that the tuple represents a solution for our search problem. 
We will also decompose $\ket{D}$ as:
$$ \ket{D} = \ket{D_\vx} \otimes \ket{D_{\pvx}} \ ,$$
where $D_\vx$ contains the entries that coincide with $\vx$, and $D_{\pvx}$ contains the oracle's state on the rest of the inputs.

Using the definitions of the standard-dual map and of the system state, we can now describe how the system evolves after each query in the dual domain:

\begin{definition}[Query Operator in the Dual Domain]
For the dual domain we define the following query operator:
$$\tilde{O} := ( I \otimes \beru^{\dagger}) \cdot \stdbo \cdot (I \otimes \beru). $$
where $\beru^{\dagger}$ and $\beru$ only act on the last register (containing the strings $D$). 
\end{definition}

\paragraph{Standard-Dual domains view.}
We are now ready to fully describe the Standard-Dual view of the recording technique for Bernoulli functions:
\begin{tiret}
    \item Standard domain:
    \begin{enumerate}
        \item The oracle query is the standard function oracle $\stdbo$;
        \item The initial state of the oracle is the Bernoulli superposition $\sum_f \sqrt{\alpha_f} \ket{f}_F$;
        \item We denote the state of the entire system in the standard domain
          after $t$ queries with $\ket{\psi^t}$.
    \end{enumerate}
    \item Dual domain:
    \begin{enumerate}
        \item The oracle query is $\tilde{O} := ( I \otimes \beru^{\dagger}) \cdot \stdbo \cdot (I \otimes \beru)$;
        \item The initial state of the oracle is the state $\ket{0^M}_F$;
        \item Denote the state of the entire system in the dual domain after $t$ queries with $\ket{\phi^t}$.
    \end{enumerate}
\end{tiret}

It can easily be seen that the two domains are equivalent up to a
final $\beru$ unitary applied on the state of the system corresponding
to the dual domain.

\begin{corollary}[Relation between states in standard-dual domains]
  After any $t$ queries, we have the relation:
\begin{equation}
    \begin{split}
    & \ket{\psi^t} = I \otimes \beru \ket{\phi^t} \\
    & \text{where $\beru$ only acts on the last function register}
    \end{split}
\end{equation}
\end{corollary}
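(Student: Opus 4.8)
The plan is to prove the identity $\ket{\psi^t} = (I \otimes \beru)\ket{\phi^t}$ by induction on the number of queries $t$, leaning on the two structural facts baked into the definition $\tilde{O} = (I \otimes \beru^{\dagger})\cdot \stdbo \cdot (I \otimes \beru)$ of the dual-domain query: that the flanking $\beru$ and $\beru^{\dagger}$ telescope against each other and against the single trailing $\beru$ of the claim, and that every unitary the query algorithm applies between oracle calls acts only on its own registers and hence commutes with $\beru$, which acts only on the function register $F$.

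For the base case $t = 0$, the standard-domain state is $\ket{\psi^0} = \ket{\mathrm{init}} \otimes \sum_f \sqrt{\alpha_f}\,\ket{f}_F$ and the dual-domain state is $\ket{\phi^0} = \ket{\mathrm{init}} \otimes \ket{0^M}_F$, where $\ket{\mathrm{init}}$ is the algorithm's fixed initial state on its own registers. Since the standard-dual map satisfies $\beru\,\ket{0^M}_F = \sum_f \sqrt{\alpha_f}\,\ket{f}_F$ by definition, we get $(I \otimes \beru)\ket{\phi^0} = \ket{\psi^0}$ directly.

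For the inductive step, I would model one query round as an algorithm unitary $A_t$ that acts on the algorithm's registers (including the query input/output registers) but trivially on $F$, followed by one oracle call, so that $\ket{\psi^{t+1}} = \stdbo\,(A_t \otimes I_F)\,\ket{\psi^t}$ and $\ket{\phi^{t+1}} = \tilde{O}\,(A_t \otimes I_F)\,\ket{\phi^t}$. Applying $I \otimes \beru$ to the second identity and substituting the definition of $\tilde{O}$, the outer $\beru$ cancels $\beru^{\dagger}$ (legitimate since $\beru = \otimes_x U_p$ is unitary, each $U_p$ being a real rotation), which leaves $\stdbo\,(I \otimes \beru)\,(A_t \otimes I_F)\,\ket{\phi^t}$. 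Because $A_t$ and $\beru$ act on disjoint registers they commute, so this equals $\stdbo\,(A_t \otimes I_F)\,(I \otimes \beru)\,\ket{\phi^t}$, and by the induction hypothesis $(I \otimes \beru)\ket{\phi^t} = \ket{\psi^t}$, giving $\stdbo\,(A_t \otimes I_F)\,\ket{\psi^t} = \ket{\psi^{t+1}}$. This closes the induction.

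The argument is essentially bookkeeping, so I do not expect a genuine obstacle; the only points requiring care are fixing the convention for what counts as ``one query'' (an algorithm unitary interleaved with an oracle call), and verifying that the algorithm's unitaries genuinely leave $F$ untouched---which is exactly the black-box access model---so that the commutation $(I \otimes \beru)(A_t \otimes I_F) = (A_t \otimes I_F)(I \otimes \beru)$ is valid and the telescoping is sound.
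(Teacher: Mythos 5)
Your proof is correct and is the natural way to make rigorous what the paper leaves as an immediate consequence of the definitions (the paper offers no explicit argument, only the remark that the equivalence ``can easily be seen''). The induction with the telescoping $\beru\beru^\dagger = I$ inside $\tilde{O}$, together with commutation of the algorithm's intermediate unitaries with $\beru$ on the disjoint function register, is exactly the bookkeeping the paper has in mind, and your base case correctly invokes $\beru\ket{0^M}_F = \sum_f\sqrt{\alpha_f}\ket{f}_F$.
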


\paragraph*{Evolution of the query algorithm in Dual domain.}
We keep track of the adversary's knowledge in the dual domain, where for the evolution of the system state, we notice the following:
\begin{itemize}
\item After each query, only one position from the function register of each of the states in the superposition of the system state is
  modified, by being rotated using the unitary $U_p$, the rest of the
  positions remain unchanged. \\ 
  Therefore, after $N$ queries there are at most $N$ positions
  containing rotated qubits and the rest of them remain the $\ket{0}$
  state.
\item However, after all $N$ queries, we need to return to the
  standard domain, by applying a final $\beru$ on all qubits of the
  function register and then measure in the computational basis the
  function register. Note that there is a small chance that the
  positions that were never queried (corresponding to $\ket{0}$ state)
  will also collapse to $1$.
\end{itemize}

Once we have established this framework for the recording technique for Bernoulli random functions, we will show how to apply it for our target problem \mrsearch.

\subsection{Analysis of  \texorpdfstring{$\mrsearch$}{} via the Bernoulli Recording Technique}     \label{subsec:bound_p_k}

To determine the success probability of the adversary to solve the \mrsearch{} problem using $N$ quantum queries, denoted as $p_k^N$, we use the oracle's state as follows:
\begin{newenum}
    \item We look at the vector $\vx$ and at the measurement outcome of the function register (previously described), let us call this measurement $\tilde{f}$. 
    \item Then, we check if $\tilde{f}(x_i) = 1$ for all $x_i \in \vx$.
\end{newenum}
Importantly, the probability of success is defined with respect to the standard domain, hence after the final $\beru$ application.

\paragraph*{Relation between success probability and oracle state.}

In order to relate the success probability $p_k^N$ with the evolution of the oracle we need to define the following set of projectors.
Firstly, we introduce two families of projectors $P$ and $\Pi$, acting on the states of the standard domain  (Eq.~\ref{eq:state_gen}):

\begin{definition}[Projection Family $P$] \label{def:proj_p} For any
  integer $k$, consider the following family of projectors $P$ acting
  in the standard domain:
\begin{tiret}
    \item $P_k$: defined by all basis states $\ket{\vx,\vy,z}\ket{D}$ such that $D$ contains exactly $k$ ones;
    \item $P_{\leq k}$: defined by all basis states $\ket{\vx,\vy,z}\ket{D}$ such that $D$ contains at most $k$ ones, $P_{\leq k} = \sum_{i = 0}^k P_i$;
    \item $P_{\geq k}$: defined by all basis states $\ket{\vx,\vy,z}\ket{D}$ such that $D$ contains at least $k$ ones, $P_{\geq k} = \sum_{i \geq k} P_i$.
\end{tiret}
\medskip
In addition, 
related to $P_k$, we will also define the projectors:
\begin{tiret}
    \item $P_{k}^{0}$ defined by the basis states $\ket{\vx,\vy,z}\ket{D}$ such that $D$ contains exactly $k$ ones, $y = 1$ and $f(x) = 0$.  For this, we can consider that $\ket{x, y}$ is an extra register containing a single input $x$ and a single bit $y$ (different from $\ket{\vx,\vy}$) which is used for the queries to the function oracle \stdbo.
    \item $P_{k}^{1}$ defined by the basis states $\ket{\vx,\vy,z}\ket{D}$ such that $D$ contains exactly $k$ ones, $y = 1$ and $f(x) = 1$.
\end{tiret}
\end{definition}

\begin{definition}[Projection $\Pi$] The projector $\Pi$ is acting in the standard domain and is defined by the basis states $\ket{\vx,\vy,z,D_\vx,D_\pvx}$ such that $D_\vx$ has hamming weight $k$.
\end{definition}

Finally, we introduce the family of projectors $\Xi_i$, which will act on the dual domain (hence before the final application of $\beru$):
\begin{definition}[Projection family $\Xi$] The projectors $\{\Xi_i\}_i$ are acting in the dual domain and are defined by the basis states having exactly $i$ ones qubits in $D_{\vx}$.
\end{definition}

Now, we will see how to employ all these projectors in order to bound the success probability of the adversary to solve the \mrsearch{} problem when having $N$ available quantum queries.

For the success event of finding $k$ preimages of $1$, we must have $k$ ones in $D_{\vx}$ for $\ket{\psi^N}$ - the state in the standard domain. In other words, the adversary's success probability will be bounded by:
\begin{protocol}
\begin{equation}
  p_k^N \leq \norm{\Pi \ket{\psi^N}}^2 = \norm{\Pi \left(\beru
      \ket{\phi^N} \right)}^2 \, .
\end{equation}
\end{protocol}

\paragraph*{Deriving the bound on success probability.}
Next, we will determine the bound on $p_k^N$ by following the next two steps:
\begin{newenum}
    \item Derive a bound on the norm of the projection $\Pi$ using the norms of the projection $\Xi_i$ and subsequently bound this norm with the norms of $P_{\geq i}$.
    \item Bound the norm of $P_{\geq i}$. This will determine the progress after $N$ queries as:
        \begin{equation}
             a_{N, k} := \norm{ P_{\ge k} \ket{\phi^N}}
        \end{equation}
\end{newenum}

\noindent For the first step we show the following result:

\begin{lemma} \label{lemma:bound_pi}
The norm of projection $\Pi$ can be bounded using the progress measure after $N$ queries as follows:
\begin{equation}
    \norm{\Pi \ket{\psi^N}} \leq  \sum_{i = 0}^k (\sqrt{p})^{k - i} (\sqrt{1 - p})^{i} \norm{\Xi_i \ket{\phi^N} }
    \leq  \sum_{i = 0}^k (\sqrt{p})^{k - i} (\sqrt{1 - p})^{i}   \norm{P_{\geq i} \ket{\phi^N}}
\end{equation}
\end{lemma}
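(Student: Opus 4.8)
The plan is to bound $\norm{\Pi\ket{\psi^N}}$ in two moves: first pass from the standard domain back to the dual domain on the $\vx$-entries of the function register, and then pass from the dual-domain projectors $\Xi_i$ to the ``Hamming-weight'' projectors $P_{\ge i}$. For the first move, recall $\ket{\psi^N} = (I\otimes\beru)\ket{\phi^N}$, and that $\beru = \bigotimes_x U_p^x$ factors across the $M$ positions of the function register; in particular it factors as $\beru = \beru_\vx \otimes \beru_{\pvx}$ where $\beru_\vx$ acts on the $k$ positions indexed by $\vx$. Since $\Pi$ only constrains $D_\vx$ (it demands $D_\vx$ has Hamming weight exactly $k$, i.e.\ all $k$ entries equal $1$), we can write $\norm{\Pi\ket{\psi^N}}^2 = \norm{(\dproj^{(k)}_\vx \otimes I)(\beru_\vx\otimes\beru_{\pvx})\ket{\phi^N}}^2$, where $\dproj^{(k)}_\vx$ projects the $k$ $\vx$-qubits onto $\ket{1^k}$. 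The key computation is to expand $\beru_\vx^\dagger \dproj^{(k)}_\vx \beru_\vx$ (or equivalently to compute $\bra{1^k}\beru_\vx$ position-by-position using $U_p:\ket{b}\mapsto \sqrt{1-p}\ket{b}+(-1)^b\sqrt p\ket{b\oplus1}$). For a single qubit, $\bra{1}U_p = \sqrt p\bra{0} - \sqrt{1-p}\bra 1$ (up to the sign bookkeeping), so $\bra{1^k}\beru_\vx = \bigotimes_{j=1}^k(\sqrt p\bra 0 - \sqrt{1-p}\bra 1)$; expanding this tensor product of $k$ two-term factors yields a sum over subsets $S\subseteq[k]$ of terms $(\sqrt p)^{k-|S|}(-\sqrt{1-p})^{|S|}\bra{D_\vx=e_S}$, where $e_S$ is the indicator string of $S$. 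Grouping by $i=|S|$ and applying the triangle inequality, $\norm{\Pi\ket{\psi^N}}\le \sum_{i=0}^k (\sqrt p)^{k-i}(\sqrt{1-p})^i\,\norm{\Xi_i\ket{\phi^N}}$, since $\Xi_i$ is exactly the projector onto basis states with $i$ ones among the $\vx$-positions of $D$ (the signs and the $\binom{k}{i}$ distinct subsets are absorbed because the $\bra{D_\vx=e_S}$ for distinct $S$ are orthogonal, so each subset-$S$ term has norm at most $\norm{\Xi_{|S|}\ket{\phi^N}}/\sqrt{\binom{k}{i}} \le \norm{\Xi_{|S|}\ket{\phi^N}}$ — in fact one should be slightly careful and note the sum of the $\binom k i$ orthogonal pieces has norm exactly $\norm{\Xi_i\ket{\phi^N}}$, giving the stated inequality cleanly after the outer triangle inequality over $i$).

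For the second inequality, $\norm{\Xi_i\ket{\phi^N}} \le \norm{P_{\ge i}\ket{\phi^N}}$, the observation is simply that having $i$ ones located specifically within the $D_\vx$ block forces the total Hamming weight of $D$ to be at least $i$. Formally, the support of $\Xi_i$ (basis states with exactly $i$ ones in $D_\vx$) is contained in the support of $P_{\ge i}$ (basis states with at least $i$ ones in all of $D$), hence $\Xi_i \preceq P_{\ge i}$ as projectors and $\norm{\Xi_i\ket{\phi^N}} \le \norm{P_{\ge i}\ket{\phi^N}}$. This is where the ``number of non-zero entries in the oracle register'' heuristic from the technical overview gets made precise: solving $\mrsearch$ needs $k$ ones in $D_\vx$, which needs at least $k$ ones in $D$, and the latter is what we will control via the progress measure $a_{N,k}$.

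The main obstacle is the bookkeeping in the first move: one must be careful that $\beru$ really does factor as claimed across the $\vx$ and $\pvx$ registers (it does, because it is a tensor product of single-qubit gates and $\vx$ indexes a fixed set of positions — but note $\vx$ is itself in superposition in $\ket{\phi^N}$, so one should phrase this as: for each fixed value of the $\vx$ register, $\beru$ restricted to the corresponding $k$ positions is $\beru_\vx$, and the argument goes through coherently). A secondary subtlety is the sign pattern $(-1)^{|S|}$ and the definition of $U_p$ on $\ket{1}$: one needs the exact form of $\bra{1}U_p$, and it is worth double-checking that the $(-1)^b$ in the definition of $U_p$ produces the factor $(\sqrt p)^{k-i}(\sqrt{1-p})^i$ with the right absolute value (the signs are irrelevant after taking norms, which is why the triangle inequality is applied only after grouping into the orthogonal $\Xi_i$ pieces). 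Everything else is routine: orthogonality of distinct computational-basis strings in the function register, and the triangle inequality.
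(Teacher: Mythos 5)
Your approach tracks the paper's own proof of Lemma~\ref{lemma:bound_pi} (Eqs.~\ref{eq:bs1}--\ref{eq:bs2}) very closely: decompose $\ket{\phi^N}$ via $\sum_i\Xi_i=I$, compute the overlap $\bra{1^k}\beru_\vx\ket{D_\vx}=(\sqrt p)^{k-i}(\sqrt{1-p})^i$ for $|D_\vx|=i$, apply the triangle inequality over $i$, and close with $\Xi_i\preceq P_{\ge i}$. The second half (support containment giving $\norm{\Xi_i\ket{\phi^N}}\le\norm{P_{\ge i}\ket{\phi^N}}$) is correct and is exactly what the paper uses.

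However, the step you flag as ``one should be slightly careful'' is where a real gap sits, and it is present in the paper's own Eq.~\ref{eq:bs2} as well. After $\Pi\beru$ acts, every basis vector $\ket{\vx,\vy,z}\ket{D_\vx}\ket{D_\pvx}$ with $|D_\vx|=i$ is sent to the \emph{same} $D_\vx$-value $\ket{1^k}$, with the \emph{same} sign: from $U_p\ket b=\sqrt{1-p}\ket b+(-1)^b\sqrt p\ket{b\oplus1}$ one gets $\bra1U_p\ket0=\sqrt p$ and $\bra1U_p\ket1=\sqrt{1-p}$, both positive, so the amplitude is identical for all $\binom{k}{i}$ weight-$i$ configurations of $D_\vx$. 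Those pieces are therefore \emph{not} orthogonal after the projection --- they can interfere constructively. The operator norm of $\Pi\beru\Xi_i$ is actually $\sqrt{\binom{k}{i}}\,(\sqrt p)^{k-i}(\sqrt{1-p})^i$, attained by a uniform superposition over the weight-$i$ configurations of $D_\vx$ with a common $(\vx,\vy,z,D_\pvx)$; such a state is reachable by the recording oracle (e.g.\ two phase queries $(a,1),(b,1)$ starting from $\ket{0^M}$ produce the symmetric $\Xi_1$ component $\ket{01}+\ket{10}$ on positions $a,b$). So the honest estimate is $\norm{\Pi\beru\Xi_i\ket{\phi^N}}\le\sqrt{\binom{k}{i}}(\sqrt p)^{k-i}(\sqrt{1-p})^i\norm{\Xi_i\ket{\phi^N}}$, and your intermediate sub-claim that each subset-$S$ term has norm at most $\norm{\Xi_i\ket{\phi^N}}/\sqrt{\binom{k}{i}}$ is also false in general (all of the $\Xi_i$ mass may sit on a single $S$). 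Tracing the extra $\sqrt{\binom{k}{i}}\le 2^{k/2}$ through Theorem~\ref{lemma:bound_hard} replaces $\binom{N}{i}$ by $\sqrt{\binom{k}{i}}\binom{N}{i}$, which only degrades a constant in the base of the final bound, so the downstream results survive qualitatively; but as stated, the orthogonality/equality step is a genuine hole in your sketch and in the paper's proof alike, and both need the corrected $\sqrt{\binom{k}{i}}$ factor (or an argument exploiting additional structure of $\ket{\phi^N}$ that neither supplies).
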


\noindent For the second step we will show that:

\begin{lemma} \label{lemma:bound_a_N_k} 
For any $N$ and any $k$, given the
  state $\ket{\phi^N}$ of the dual domain after $N$ queries, we have
  the following bound:
\begin{equation}
    a_{N, k} := \norm{ P_{\ge k} \ket{\phi^N}} \leq 2\sqrt{1 - p} (\sqrt p)^{k} \binom{N}{k}.
\end{equation}
\end{lemma}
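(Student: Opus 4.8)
The plan is to prove the bound $a_{N,k} = \norm{P_{\ge k}\ket{\phi^N}} \le 2\sqrt{1-p}\,(\sqrt p)^k \binom{N}{k}$ by induction on the number of queries $N$, tracking how the ``progress'' toward having at least $k$ rotated ones in the dual-domain register $D$ can grow with each query. The base case $N=0$ is immediate: the dual-domain oracle starts in $\ket{0^M}_F$, so $P_{\ge k}\ket{\phi^0} = 0$ for all $k \ge 1$, and for $k=0$ the bound $2\sqrt{1-p}$ exceeds $1$. First I would set up the recursion: writing $\ket{\phi^{N+1}} = \tilde O\, U_A \ket{\phi^N}$ where $U_A$ is the (arbitrary, oracle-independent) algorithm unitary on the $\vx,\vy,z$ registers, I note that $U_A$ commutes with every $P_i$ (it does not touch the $D$ register), so it suffices to understand how the dual query operator $\tilde O = (I\otimes\beru^\dagger)\cdot\stdbo\cdot(I\otimes\beru)$ moves weight between the subspaces $\mathrm{range}(P_i)$.

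The key structural fact to establish is that $\tilde O$ changes the Hamming weight of $D$ by at most one per query, and more precisely that a single query acting on a basis state with exactly $i$ ones in $D$ produces a superposition supported on weights $i-1$, $i$, and $i+1$, with the crucial quantitative point being that the amplitude for \emph{increasing} the weight from $i-1$ to $i$ is governed by the $\sqrt p$ factor coming from the off-diagonal entry of $U_p$ in the relevant basis (this is where the $(-1)^b\sqrt p\ket{b\oplus1}$ term of $U_p$ enters), whereas the query leaves the register in the $\ket{0}$-like state otherwise with amplitude controlled by $\sqrt{1-p}$. Concretely, I expect to show an inequality of the shape
\[
\norm{P_{\ge k}\ket{\phi^{N+1}}} \le \norm{P_{\ge k}\ket{\phi^N}} + \sqrt p \cdot \norm{P_{\ge k-1}\ket{\phi^N}},
\]
i.e.\ the new weight at level $\ge k$ is bounded by the old weight at level $\ge k$ plus a $\sqrt p$-damped contribution promoted from level $\ge k-1$. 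Unrolling this two-variable recursion with the base values $a_{0,k} = [k=0]$ (up to the slack factor) yields exactly $a_{N,k} \le 2\sqrt{1-p}\,(\sqrt p)^k\binom{N}{k}$, since $\sum$-ing the recursion is precisely Pascal's rule for binomial coefficients, and the leading $2\sqrt{1-p}$ is the constant absorbed from carefully handling the boundary/diagonal terms (the weight-preserving and weight-decreasing parts of $\tilde O$, which can be bounded by triangle inequality without cost, plus the one place where the $\ket{0}$ position genuinely carries a $\sqrt{1-p}$ amplitude after the final conversion).

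The main obstacle I anticipate is the careful bookkeeping in establishing the single-query recursion: one must track that the query register $\ket{x,y}$ interacts with exactly the $x$-th position of $D$, decompose $\stdbo$ conjugated by $\beru$ into its action on that single qubit, and verify that across the superposition over $x$ (which may hit already-marked or fresh positions) the promotion amplitude never exceeds $\sqrt p$ and the ``stay'' amplitude contributes no growth. A secondary subtlety is handling the $y=0$ branch (which acts trivially) versus $y=1$, and ensuring the triangle-inequality bound is applied to $\norm{P_{\ge k}(\cdot)}$ rather than to individual weight levels, so that cross terms between different Hamming weights do not spoil the estimate — this is exactly the kind of place where the recording-technique ``execution'' is delicate, and I would isolate it as a self-contained claim about the operator norm of $P_{\ge k}\,\tilde O\,P_{\le k-1}$ being at most $\sqrt p$ and of $P_{\ge k}\,\tilde O\,(I - P_{\ge k-1})$ being $0$.
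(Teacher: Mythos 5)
Your high-level setup is on the right track — you correctly observe that $\tilde O$ changes the Hamming weight of $D$ by at most one, so $P_{\ge k}\tilde O P_{\le k-2}=0$, and that the algorithm's oracle-independent unitaries commute with the $P_i$'s, so the whole analysis reduces to a per-query increment. This matches the opening moves of the paper's proof (the decomposition $\tilde O = \tilde O(P_{\ge k}+P_{k-1}+P_{\le k-2})$ and the triangle inequality). But there are two concrete problems with the way you propose to finish.

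First, the claimed operator-norm bound $\|P_{\ge k}\,\tilde O\,P_{\le k-1}\|\le\sqrt p$ is not what the dual-domain query actually gives. You are reading off the off-diagonal $\sqrt p$ entry of the single gate $U_p$, but the dual query is the conjugated operator $U_p^\dagger\cdot\stdbo\cdot U_p$. On a zero entry of $D$ with $y=1$ this sends $\ket{0}\mapsto(1-2p)\ket 0 - 2\sqrt{p(1-p)}\ket 1$ (Eq.~\ref{eq:up_squared}), so the promotion amplitude is $2\sqrt{p(1-p)}$, which exceeds $\sqrt p$ for all $p<3/4$. With the wrong constant your recursion would ``prove'' a bound strictly smaller than the lemma's stated $2\sqrt{1-p}(\sqrt p)^k\binom Nk$, which should already be a warning sign.

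Second, even after substituting the correct constant, the naive depth-$k$ unrolling of
\[
a_{N+1,k}\ \le\ a_{N,k} + 2\sqrt{p(1-p)}\,a_{N,k-1}
\]
via Pascal's rule yields $a_{N,k}\le\bigl(2\sqrt{p(1-p)}\bigr)^k\binom Nk$, which overshoots the target $2\sqrt{1-p}(\sqrt p)^k\binom Nk$ by the factor $2^{k-1}(1-p)^{(k-1)/2}$ — this gap is exponential in $k$, so it is not a constant to be ``absorbed at the boundary.'' The paper avoids this loss precisely by \emph{not} iterating the recursion down through all $k$ levels. Instead it iterates the one-step inequality only in the time variable, getting $a_{N,k}\le 2\sqrt{p(1-p)}\sum_{i=0}^{N-1}\|P^0_{k-1}\ket{\phi^i}\|$, and then bounds the cumulative progress at level $k-1$ by a separate random-walk argument (Lemma~\ref{lemma:betaalli}), which crucially picks up a $\sqrt{1-p}$ damping at every ``stay'' step so that $\sum_i\|P^0_{k-1}\ket{\phi^i}\|\le(\sqrt p)^{k-1}\binom Nk$. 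It is this two-stage accounting — one factor $2\sqrt{p(1-p)}$ applied once and a $(\sqrt p)^{k-1}$ bound on the lower-level progress — that reproduces the stated constant $2\sqrt{1-p}(\sqrt p)^k$. Your single-pass recursion is missing the damping on the non-promoting steps and so cannot reach the lemma's bound.

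Also note the paper's use of the finer projector $P^0_{k-1}$ (weight exactly $k-1$, $y=1$, and the queried position of $D$ still zero): restricting to this subspace before applying the triangle inequality is what makes the constant $2\sqrt{p(1-p)}$ tight and the subsequent combinatorial sum tractable.
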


In the following sections we will show how to derive these two results and finally we will combine them to obtain the final bound stated in Theorem~\ref{lemma:bound_hard}.

\subsection{Bounding the Success Probability with Progress Measure \label{subsec:proof_1}}

We now show how to bound the success probability using the progress measure, as stated in Lemma~\ref{lemma:bound_pi}.
\begin{proof}[Proof of Lemma~\ref{lemma:bound_pi}]
Firstly, using the definition of $\ket{\psi^N}$ we have:
\begin{equation}
    \norm{\Pi \ket{\psi^N}} = \norm{\Pi \beru \ket{\phi^N}}
\end{equation}
Observe that $\sum_{i = 0}^k \Xi_i = I$, hence we obtain:
\begin{align}
  \norm{\Pi \ket{\psi^N}} = \norm{\Pi \left(\beru  \left(\sum_{i = 0}^k \Xi_i\right) \left(\ket{\phi^N} \right)\right) } 
  \leq \sum_{i = 0}^k \norm{\Pi \left(\beru \Xi_i \left(\ket{\phi^N}\right)\right)}
\end{align}
 
Hence to complete the proof, we need to show that:
\begin{equation}
  \norm{\Pi \left( \beru \Xi_i \left(\ket{\phi^N}\right)\right)} \leq (\sqrt{1 - p})^{i} (\sqrt{p})^{k - i} \norm{\Xi_i \ket{\phi^N}}
\end{equation}

Now let us consider separately the state 
$\ket{\rho_i} := \Xi_i \left(\ket{\phi^N} \right)$ for any quantum state
$\ket{\phi^N}$. Then, from the definition of $\Xi_i$, $\ket{\rho_i}$ has
exactly $i$ ones and $k - i$ zeros in the register $D_{\vx}$.  Let us
denote the set of allowed $D_{\vx}$ (having exactly $i$ ones) by
$\mathcal{D}_{\vx}$, where $|\mathcal{D}_{\vx}| = \binom{k}{i}$; and
denote the set of possible configurations of $D_{\bar{\vx}}$ by
$\mathcal{D}_{\bar{\vx}}$, where
$|\mathcal{D}_{\bar{\vx}}| = 2^{M - k}$. Then we can write:
\begin{equation}
    \ket{\rho_i} = \underset{\substack{ \vx,\vy,z \\ D_{\vx} \in \mathcal{D}_{\vx}, D_{\bar{\vx}} \in \mathcal{D}_{\bar{\vx}}}}{\sum}  \beta_{\vx,\vy,z,D_{\vx}, D_{\bar{\vx}}} \ket{\vx} \ket{\vy} \ket{z} \ket{D_{\vx}} \otimes \ket{D_{\bar{\vx}}}.
  \end{equation}
  
For simplicity we drop the $\ket{\vy} \ket{z}$ register as they will
not be affected, and instead denote $\ket{\rho}$ as:
\begin{equation}
    \ket{\rho_i} = \underset{\substack{ \vx,\vy,z \\ D_{\vx} \in \mathcal{D}_{\vx}, D_{\bar{\vx}} \in \mathcal{D}_{\bar{\vx}}}}{\sum} \beta_{\vx,\vy,z,D_{\vx}, D_{\bar{\vx}}} \ket{\vx} \ket{D_{\vx}} \otimes \ket{D_{\bar{\vx}}}.    
\end{equation}

Now let us consider applying $\Pi$ to a single state from this
superposition, namely
$\ket{\vx}\ket{D_{\vx}} \otimes \ket{D_{\bar{\vx}}}$:
\begin{eqnarray}\label{eq:bs1}
    \norm{ \Pi \ket{\vx} (\beru \ket{D_{\vx}} \otimes \ket{D_{\bar{\vx}}})}  &=& || \Pi \ket{\vx} (\sqrt{p}\ket{0} - \sqrt{1 - p}\ket{1})^{\otimes i} \otimes (\sqrt{1 - p}\ket{0} + \nonumber\\
    &+& \sqrt{p}\ket{1})^{\otimes k - i} \otimes U_p^{\otimes M - k} \ket{D_{\bar{\vx}}} || \nonumber \\
    &=& (\sqrt{1 - p})^i (\sqrt{p})^{k - i} \norm{\ket{\vx} \otimes \ket{1_{D_{\vx}}} \otimes U_p^{\otimes M - k} \ket{D_{\bar{\vx}}}} \nonumber\\
    &=& (\sqrt{1 - p})^i (\sqrt{p})^{k - i} 
\end{eqnarray}

Therefore, we have that:
\begin{align} \label{eq:bs2}
  \norm{\Pi \beru \ket{\rho_i}} &= (\sqrt{1 -
                                p})^i (\sqrt{p})^{k - i} \norm{\underset{\substack{ \vx,\vy,z
  \\
  D_{\vx} \in \mathcal{D}_{\vx}, D_{\bar{\vx}} \in \mathcal{D}_{\bar{\vx}}}}{\sum} \beta_{\vx,\vy,z,D_{\vx}, D_{\bar{\vx}}} \ket{\vx}  \otimes \ket{1_{D_{\vx}}} \otimes \left(U_p^{\otimes M - k} \ket{D_{\bar{\vx}}} \right)} \\
                              & = (\sqrt{1 - p})^i (\sqrt{p})^{k - i} \cdot
                                \norm{\underset{\substack{ \vx,\vy,z \\ D_{\vx} \in
  \mathcal{D}_{\vx}, D_{\bar{\vx}} \in \mathcal{D}_{\bar{\vx}}}}{\sum}
  \beta_{\vx,\vy,z,D_{\vx}, D_{\bar{\vx}}} \ket{\vx} \otimes
  \ket{1_{D_{\vx}}} \otimes \ket{D_{\bar{\vx}}}} \\
                              & =  (\sqrt{1 - p})^i (\sqrt{p})^{k - i} \cdot \sqrt{\underset{\substack{ \vx,\vy,z \\ D_{\vx} \in \mathcal{D}_{\vx}, D_{\bar{\vx}} \in \mathcal{D}_{\bar{\vx}}}}{\sum} |\beta_{\vx,\vy,z,D_{\vx}, D_{\bar{\vx}}}|^2} \\
                              &= (\sqrt{1 - p})^i (\sqrt{p})^{k - i} \cdot \norm{\ket{\rho_i}}
\end{align}

\end{proof}

\subsection{Bounding the Progress Measure}      \label{bound_a_N_k}

We now show how to bound the progress measure $a_{N, k} := \norm{ P_{\ge k} \ket{\phi^N}}$, as stated in Lemma~\ref{lemma:bound_a_N_k}.

\begin{proof}[Proof of Lemma~\ref{lemma:bound_a_N_k}]

First, we want to define a recurring relation for the progress measure sequence. Specifically, for any $t$, we have:
\begin{equation}\label{eq:rec_prog_measure}
    \begin{split}
    a_{t + 1, k} &= \norm{ P_{\ge k} \tilde{O} \ket{\phi^t} } = \norm{ P_{\ge k} \tilde{O} (P_{\ge k} + P_{k - 1} + P_{\leq k - 2}) \ket{\phi^t} } \\
    & \leq \norm{ P_{\ge k} \tilde{O} P_{\ge k} \ket{\phi^t}} + \norm{ P_{\ge k} \tilde{O} P_{k - 1} \ket{\phi^t}} \\
    & \leq a_{t, k} + \norm{ P_{\ge k} \tilde{O} P_{k - 1} \ket{\phi^t}}
    \end{split}
\end{equation}
Now let us consider separately the term $\norm{ P_{\ge k} \tilde{O} P_{k - 1} \ket{\phi^t}}$. A basis state $\ket{\vx,\vy,z}\ket{D}$ contributes to $a_{t+1,k}$ if $D$ has exactly $k - 1$ ones, but also if $y = 1$ and $f(x) \neq 1$. \\

What would actually contribute to $a_{t+1,k}$ is the state
$P_{\ge k} \tilde{O} P_{k - 1}^0 \ket{\phi^t}$, hence we have:
\begin{equation} \label{eq:b1}
    a_{t + 1, k} \leq a_{t, k} + \norm{P_{\ge k} \tilde{O} P_{k - 1}^{0} \ket{\phi^t}}
\end{equation}
Now, to bound the second term, for any basis state $\ket{\vx,\vy,z}\ket{D}$ in the support of $P_{k - 1}^{0}$,  we will make use of the following observation.

Using the definition of $\tilde{O}$, we notice that each query $\ket{x, y}$ in the dual domain has the following effect on the zero entries of the $D$ strings (oracle state) - when $f(x) = 0$ :
\begin{equation}\label{eq:up_squared}
    \begin{split}
 U_p^{\dagger} \cdot \stdbo \cdot U_p \ket{0} &= U_p^{\dagger} [(-1)^{y \cdot 0} \sqrt{1 - p} \ket{0} + (-1)^{y \cdot 1} \sqrt{p} \ket{1}] = {U_p^{\dagger}}^{2y} \ket{0} \\
 & = \begin{cases}
      \ket{0}  \text{, } &\text{if } y = 0 \\
      (1 - 2p) \ket{0} - 2 \sqrt{p(1 - p)} \ket{1}  \text{, } &\text{if } y = 1 \\ 
     \end{cases}
    \end{split}
\end{equation}

This leads us to:
\begin{equation}\label{eq:b2}
 P_{\ge k} \tilde{O} \ket{\vx,\vy,z} \otimes \ket{D} = -2\sqrt{p(1-p)} \ket{\vx,\vy,z} \otimes D_{\vx - \{x\}} \otimes \ket{1}_x\otimes \ket{D_{\bar{\vx}}})   
\end{equation}
where $(x, y)$ is the query to the function oracle \stdbo, as in Definition~\ref{def:proj_p}. \\
As a result, we have:
\begin{equation}
    \norm{P_{\ge k} \tilde{O} P_{k - 1}^{0} \ket{\phi^t}} = 2\sqrt{p(1-p)} || P_{k - 1}^{0} \ket{\phi^t} ||
\end{equation}
Therefore, we obtain:
\begin{equation}
    a_{t + 1, k} \leq a_{t, k} +  2\sqrt{p(1-p)} || P_{k - 1}^{0} \ket{\phi^t} ||
\end{equation}

Applying this relation iteratively for any $t \in \{N - 1, ..., 0\}$, leads us to the following bound on the progress measure after $N$ queries:
\begin{equation}
    a_{N, k} \leq 2 \sqrt{p(1 - p)} \sum_{i = 0}^{N - 1} \norm{P_{k - 1}^{0} \ket{\phi^i}}
\end{equation}

By using Lemma~\ref{lemma:betaalli}, we obtain the bound on $a_{N, k}$:
\begin{equation}
    a_{N, k} \leq 2\sqrt{1 - p} (\sqrt p)^{k} \binom{N}{k}.
\end{equation}

\end{proof}

\begin{lemma} Let $\ket{\phi^i}$ be the state before the $i + 1$'th query and $\beta_{k-1}^i : = \norm{P_{k-1}^{0}\ket{\phi^i}}$.
Then $S_{k, N} := \sum_{i = 0}^{N-1} \beta_{k-1}^i \leq \binom{N}{k}(\sqrt p)^{k-1}$.
\label{lemma:betaalli}
\end{lemma}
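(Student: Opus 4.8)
The plan is to prove the bound by induction on $k$, collapsing the statement into a single-query recurrence which is then resummed with a hockey-stick identity. For the \emph{base case} $k=1$, the projector $P_0^0$ only retains basis states whose recorded database $D$ is the all-zero string, so $\beta_0^i=\norm{P_0^0\ket{\phi^i}}\le\norm{\ket{\phi^i}}=1$, and therefore $S_{1,N}=\sum_{i=0}^{N-1}\beta_0^i\le N=\binom N1(\sqrt p)^0$. For the \emph{inductive step}, assume $S_{k-1,t}\le\binom{t}{k-1}(\sqrt p)^{k-2}$ for all $t$; the target is a one-query estimate of the shape $\beta_{k-1}^t\le \sqrt p\cdot S_{k-1,t}$ (or a suitably averaged version of it), where the key is to isolate the constant $\sqrt p$ rather than the coarser $2\sqrt{p(1-p)}$ coming directly from the query block. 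Granting this, summing over $t$ and using $\sum_{t=0}^{N-1}\binom{t}{k-1}=\binom Nk$ yields $S_{k,N}\le\sqrt p\sum_{t=0}^{N-1}S_{k-1,t}\le(\sqrt p)^{k-1}\binom Nk$, which closes the induction; feeding this back into the recurrence $a_{t+1,k}\le a_{t,k}+2\sqrt{p(1-p)}\,\beta_{k-1}^t$ used in the proof of \autoref{lemma:bound_a_N_k} then recovers the progress bound stated there.

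To obtain the single-query recurrence I would unroll the dual-domain evolution (each step being an adversary unitary followed by one application of $\tilde O$) and classify every basis state contributing to $P_{k-1}^0\ket{\phi^{t}}$ by the step $s+1\le t$ at which the last of its $k-1$ recorded ones was created. At that step the system sits in the support of $P_{k-2}^0$ (exactly $k-2$ recorded ones, output bit $1$, empty query position), the $2\times2$ block of $\tilde O$ from Eq.~\eqref{eq:up_squared} injects amplitude into the $(k-1)$-ones subspace, and from step $s+2$ onward the newly recorded one is merely transported (being possibly re-touched, which can only move amplitude out of, not into, the relevant component). The contributions indexed by distinct creation steps $s$ are mutually orthogonal, so their norms add in quadrature; combining $\sqrt{\sum_s(\cdot)^2}\le\sum_s(\cdot)$ (valid since the terms are nonnegative) with the explicit amplitude produced by the query block yields the bound in terms of $\sum_{s<t}\beta_{k-2}^s=S_{k-1,t}$.

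The main obstacle is precisely the constant in this one-query recurrence. The off-diagonal amplitude of $\tilde O$ (when $f(x)=0$, $y=1$) is $-2\sqrt{p(1-p)}$, so a naive accounting would charge this factor at each of the $k$ levels, blowing up the final bound by $2^{k}$. Extracting only the $\sqrt p$ part requires exploiting \emph{all three} constraints built into $P_{k-1}^0$ (exactly $k-1$ ones --- not merely at least $k-1$ --- together with output bit $1$ and an empty query position) and carefully tracking the branches in which recorded ones are destroyed and subsequently recreated, rather than settling for the crude estimate $\beta_{k-1}^t\le\norm{P_{\ge k-1}\ket{\phi^t}}=a_{t,k-1}$. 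This is the step that carries all the weight; once the constant is pinned down, the remaining summations are the routine hockey-stick manipulation already indicated.
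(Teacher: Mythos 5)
Your proposal is not a complete proof, and you say so yourself: the entire argument rests on the one-query recurrence $\beta_{k-1}^{t}\le\sqrt{p}\cdot S_{k-1,t}$, which you never establish. What you give instead is a sketch of why the naive unrolling of the dual-domain evolution produces the constant $2\sqrt{p(1-p)}$ per level and a hope that the three constraints in $P_{k-1}^{0}$ can be exploited to shave this down to $\sqrt{p}$. That is precisely the content of the lemma, so the proposal as written replaces the thing to be proved with an equivalent-strength assertion and defers it. The route-classification picture (organize basis states by the step at which the last recorded $1$ was created, argue orthogonality of the pieces, resum) is plausible and compatible with the recording framework, but until the $\sqrt{p}$-per-level constant is actually pinned down there is no proof.

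For comparison, the paper does not set up an induction on $k$ at all. It bounds each $\beta_{k-1}^{i}$ \emph{directly} by a three-case split on $i$ relative to $k-1$: trivially $0$ for $i<k-1$, $(\sqrt p)^{k-1}$ for $i=k-1$, and $\binom{i}{k-1}(\sqrt p)^{k-1}(\sqrt{1-p})^{i-k+1}$ for $i\ge k$, the last bound justified by a route-counting (``random walk'') argument in which each of the $k-1$ forward moves carries weight $\sqrt p$ and the remaining $i-k+1$ steps carry weight $\sqrt{1-p}$. It then drops the $\sqrt{1-p}$ factors and sums with the hockey-stick identity $\sum_{i=k-1}^{N-1}\binom{i}{k-1}=\binom{N}{k}$. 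Your inductive reformulation is a natural way to try to sharpen this, and the hockey-stick resummation you use is the same one the paper uses; but the difficulty you flag (why does a single forward move cost $\sqrt p$ rather than $2\sqrt{p(1-p)}$?) is exactly the per-step weight the paper's combinatorial argument assigns, so your reduction lands on the same crux as the paper's terse argument without resolving it. If you want to complete the proof along your lines, you cannot settle for $\beta_{k-1}^{t}\le\norm{P_{\ge k-1}\ket{\phi^{t}}}$, as you correctly observe; you would need to carry out the orthogonal-decomposition-by-creation-step bookkeeping and show that the interference between creation/destruction branches genuinely brings the per-step factor down from $2\sqrt{p(1-p)}$ to $\sqrt p$, which is the part that is missing.
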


\begin{proof}

We analyze $\beta_{k-1}^i$ case by case:
\begin{itemize}
    \item If $i < k - 1$, then $\beta_{k - 1}^i = 0$. This is because unless at least $k - 1$ queries have been made, the number of non-zero entries in $D$ is below $k - 1$ and falls out of the support of $P_{k-1}^{0}$.
    \item If $i = k - 1$, then $\beta_{k - 1}^i \leq (\sqrt{p})^{k - 1}$. By applying $k - 1$ queries the magnitude of all resulting in $\ket{1}$ is $\sqrt{p}^{k - 1}$.
    \item If $i \geq k$, then $\beta_{k - 1}^i \leq \binom{i}{{k-1}}
      (\sqrt{p})^{k-1}(\sqrt{1-p})^{i - k + 1}$. This follows from a combinatorial argument, reminiscent of binomial coefficients. Think of taking a random walk on a line for $i$ steps, we need to move forward $k - 1$ times. There are $\binom{i}{k - 1}$ ``routes'', and each route contributes a weight of at most $(\sqrt p)^{k-1}(\sqrt{1-p})^{i - k + 1}$.
\end{itemize}
Hence we have:
\begin{equation}
    S_{k, N} := \sum_{i = 0}^{N-1} \beta_{k-1}^i \leq (\sqrt{p})^{k - 1} \sum_{i = k - 1}^{N - 1}\binom{i}{k-1} = (\sqrt{p})^{k - 1} \binom{N}{k}.
\end{equation}
where for the last equality we used the identity: $\sum_{n = t}^{m} \binom{n}{t} = \binom{m+1}{t+1}$.
\end{proof}

\subsection{Putting Everything Together}

We now can combine the results to obtain the bound on $p_k^N$:

\begin{theorem}
  The success probability $p_k^N$ of solving $\mrsearch$ satisfies
  \begin{equation}
    p_k^N \leq 4(1-p) p^k \left( \sum_{i = 0}^k (\sqrt{1 - p})^i \cdot
      \binom{N}{i} \right)^2 \, .
\end{equation}
\end{theorem}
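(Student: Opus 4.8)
The plan is to simply assemble the two lemmas already in hand; no genuinely new argument is required at this stage. Recall that before the theorem we established $p_k^N \le \norm{\Pi\ket{\psi^N}}^2$, so it suffices to bound $\norm{\Pi\ket{\psi^N}}$, i.e.\ the amplitude that the solution register $\vx$ sits on top of a truth table whose $\vx$-restriction has Hamming weight $k$.

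First I would invoke Lemma~\ref{lemma:bound_pi}, which already moves us to the dual domain:
\[
  \norm{\Pi\ket{\psi^N}}\;\le\;\sum_{i=0}^{k}(\sqrt{p})^{k-i}(\sqrt{1-p})^{i}\,\norm{P_{\ge i}\ket{\phi^N}}\, .
\]
The one observation to make is that $\norm{P_{\ge i}\ket{\phi^N}}$ is precisely the progress measure $a_{N,i}$, and Lemma~\ref{lemma:bound_a_N_k} is stated for an \emph{arbitrary} threshold, so it applies verbatim with $i$ in place of $k$: $\norm{P_{\ge i}\ket{\phi^N}}\le 2\sqrt{1-p}\,(\sqrt{p})^{i}\binom{N}{i}$. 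Substituting this term-by-term, the powers of $\sqrt{p}$ telescope, $(\sqrt{p})^{k-i}(\sqrt{p})^{i}=(\sqrt{p})^{k}$, and the common factor $2\sqrt{1-p}\,(\sqrt{p})^{k}$ pulls out of the sum:
\[
  \norm{\Pi\ket{\psi^N}}\;\le\;2\sqrt{1-p}\,(\sqrt{p})^{k}\sum_{i=0}^{k}(\sqrt{1-p})^{i}\binom{N}{i}\, .
\]
Squaring both sides and using $p_k^N\le\norm{\Pi\ket{\psi^N}}^2$ then yields $p_k^N\le 4(1-p)\,p^{k}\bigl(\sum_{i=0}^{k}(\sqrt{1-p})^{i}\binom{N}{i}\bigr)^{2}$, which is exactly the claim.

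There is essentially no obstacle in this final step: the real work has already been spent on the recording-technique framework for Bernoulli oracles (Section~\ref{sec:COT_bern}) and on the proofs of Lemma~\ref{lemma:bound_pi} and Lemma~\ref{lemma:bound_a_N_k}. The only two points worth double-checking are (i) that Lemma~\ref{lemma:bound_a_N_k} is indeed usable for \emph{every} $i\in\{0,\dots,k\}$ — immediate, since its statement imposes no relation between $N$ and the threshold — and (ii) that the exponents of $\sqrt{p}$ and $\sqrt{1-p}$ recombine exactly, so that squaring introduces only the stated factor of $4$ and no further slack. The coarser consequence $\tfrac{1}{\pi k}\bigl(e^{2}(N/k)^{2}p\bigr)^{k}$ recorded in Theorem~\ref{lemma:bound_hard} is then a routine binomial/Stirling estimate on $\sum_{i\le k}\binom{N}{i}$, which I would relegate to Appendix~\ref{app:proof_corr_kbersearch} as the authors do.
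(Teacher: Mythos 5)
Your proposal is correct and follows exactly the same route as the paper: apply Lemma~\ref{lemma:bound_pi} to pass to the dual domain, plug in Lemma~\ref{lemma:bound_a_N_k} with threshold $i$ for each summand, factor out $2\sqrt{1-p}(\sqrt{p})^k$, and square. In fact your write-up is slightly cleaner than the paper's: the paper's displayed chain drops the square on $\norm{\Pi\ket{\psi^N}}$ mid-line (a typo, restored in the final equation), whereas you keep the exponent consistent throughout.
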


\begin{proof}
First by Lemma~\ref{lemma:bound_pi} we have 
\begin{align}
  p_k^N \leq \norm{\Pi\ket{\psi^N}}^2 \leq \sum_{i = 0}^k (\sqrt{p})^{k
  - i} (\sqrt{1 - p})^i \norm{ P_{\geq i} \ket{\phi^N}} \, .
\end{align}

We know from Lemma~\ref{lemma:bound_a_N_k} that
\begin{equation}
    a_{N, i} := \norm{ P_{\ge i} \ket{\phi^N}} \leq 2 \sqrt{1 - p} (\sqrt{p})^i \binom{N}{i} 
\end{equation}

Hence we have:
\begin{equation} \label{eq:p_k_N_a_k_N} p_k^N \leq \left( \sum_{i =
      0}^k (\sqrt{p})^{k - i} (\sqrt{1 - p})^i a_{N, i} \right)^2 \le
  4(1-p) p^k \left( \sum_{i = 0}^k (\sqrt{1 - p})^i \cdot \binom{N}{i}
  \right)^2 \, .
\end{equation}

\end{proof}




\section{The Bitcoin Backbone against Quantum Adversaries} \label{sec:bitcoin}

\subsection{The Bitcoin Backbone Protocol: Model and Definitions \label{sec:model_defs}}

We will analyze our post-quantum version of the Bitcoin backbone protocol, 
where the honest parties are classical but the adversary is not,
in the network model considered in~\cite{GKL15}, namely, a synchronous communication network which is based  on  Canetti's formulation  of ``real world'' execution for multi-party cryptographic protocols~\cite{Canetti2000,DBLP:conf/focs/Canetti01}).

In such a network model, the protocol execution proceeds in {\em rounds}; in each round,  parties receive messages from the network through their communication interface (e.g., input tape or register), perform some computation, and send messages to the network, which are delivered at the beginning of the next round. 
The inputs to the computation performed by the parties are provided by an environment program denoted by $\mathcal{Z}$. For further details about the synchronous network model, refer to~\cite{Canetti2000}. 

The execution is assumed to have a polynomial (in the security parameter) time bound. 
The actual message delivery is provided by a ``diffusion'' mechanism that is guaranteed to deliver all messages, without however preserving their order and allowing the adversary to arbitrarily inject its own messages. Importantly, the parties are not guaranteed to have the same view of the messages delivered in each round, except for the fact that all the messages sent by the honest parties in 
the previous round are delivered. Furthermore, 
and in line with the cryptographic protocols literature, we assume the presence of a
single adversary (which may in turn ``corrupt'' many parties), albeit 
equipped with quantum computing power.
The adversary is allowed to ``spoof'' messages by
changing the source information in them
(i.e., communication is not authenticated). 

\paragraph{The Bitcoin backbone protocol.} First, we introduce some blockchain notation, following~\cite{GKL15}. A  \emph{block} is any triple of the form $B=\langle s, x, ctr\rangle$  where $s\in \{0,1\}^\kappa, x \in \{0,1\}^*, ctr\in\mathbb{N}$ are such that satisfy
predicate $\mathsf{validblock}^D_q(B)$ defined as:
\begin{equation}
\label{eq:validblock}
    ( H( ctr, G(s, x))  < T ) \land (ctr\leq q), 
\end{equation}

\noindent where $H, G$ are cryptographic hash functions (e.g., SHA-256) modelled as random oracles.
The parameter  $T \in \mathbb{N}$ is also called the block's {\em difficulty level}. We then define $p = T / 2^\kappa$ to be the probability that a single classical query solves a \pow.
The parameter $q \in\mathbb{N}$ is a bound that in the Bitcoin implementation determines the size
of the register $ctr$; in our treatment we allow this to be arbitrary, and use it to denote  the maximum allowed number of hash queries performed by the (classical) parties in a round.

A {\em blockchain}, or simply a {\em chain} is a sequence of \emph{blocks}. 
The rightmost block is the {\em head} of the chain, denoted $\mathrm{head}(\chain)$. 
Note that the empty string $\varepsilon$ is also a chain; by convention we set $\mathrm{head}(\varepsilon) = \varepsilon$. 
A chain $\chain$  with $\mathrm{head}(\chain) = \langle s',x',ctr'\rangle$ can be extended to a longer chain by appending 
a valid block $B  = \langle s, x, ctr  \rangle$ that satisfies  $s = H( ctr', G(s',x') )$. 
In case $\chain=\varepsilon$, by convention any valid block of the form $\langle s,x,ctr\rangle$ may extend it. 
In either case we have an extended chain $\chain_\mathsf{new} = \chain B$ that satisfies $\mathrm{head}(\chain_\mathsf{new}) = B$. 
Consider a chain $\chain$ of length $m$ (written as $len(\chain) = m$) and any non-negative integer $k$. 
We denote by $\chain^{\lceil k}$  the chain resulting from the ``pruning'' of the $k$ rightmost blocks. 
Note that for $k\ge len(\chain)$, $\chain^{\lceil k}=\varepsilon$. 
If $\chain_1$ is a prefix of $\chain_2$ we write $\chain_1 \preceq \chain_2$. 

The Bitcoin backbone protocol is executed by an arbitrary number of parties over an unauthenticated network, as described above. It is assumed in~\cite{GKL15} that the number of parties running the protocol is fixed, however, parties need not be aware of this number when they execute the protocol. In our analysis we will have $n$ honest parties and a single quantum adversary. 
Also as mentioned above, communication over the network is achieved by utilizing a send-to-all {\sc Diffuse} functionality that is available to all parties (and may be abused by the adversary in the sense  of delivering different messages to different parties).  

Each party maintains a blockchain, as defined above, starting from the empty chain and mining a block that contains
the value $s=0$ (by convention this is the ``genesis block'').
If in a given round, an honest party is successful in generating a \pow{} (i.e. satisfying conjunction~\ref{eq:validblock}), it diffuses it to the network.
At each round, each party chooses the longest chain amongst the one he received, and tries to extend it by computing (mining) another block. In such a process, each party's chain may be different, but under certain well-defined conditions, it is shown in~\cite{GKL15} that the chains of honest parties will share a large common prefix (see below). 

In the backbone protocol, the type of values that parties try to insert in the chain is intentionally left unspecified, as well as the type of chain validation they perform (beyond checking for its structural properties with respect to the hash functions $G(\cdot),H(\cdot)$), and the way they interpret the chain. Instead, these actions are abstracted by the external functions $V(\cdot)$ (the {\em content validation predicate}), $I(\cdot)$ (the {\em input contribution function}), and $R(\cdot)$ (the {\em chain reading function}), which are specified by the  application that runs ``on top'' of the backbone protocol (e.g., a transaction ledger).

\paragraph{Basic security properties of the blockchain.} It is shown
in~\cite{GKL15} that the blockchain data structure built by the
Bitcoin backbone protocol satisfies a number of basic properties. At a
high level, the first property, called {\em common prefix}, has to do
with the existence, as well as persistence in time, of a common prefix
of blocks among the chains of honest parties. The second, called
\emph{chain quality}, stipulates the proportion of honest blocks in
any portion of some honest party’s chain.

\begin{definition}[Common Prefix]
The {\em common prefix} property with parameter $k \in \mathbb{N}$, 
states that for any pair of honest players $P_1, P_2$ adopting chains $\chain_1, \chain_2$ at rounds $r_1 \leq r_2$, it holds that $\chain_1^{\lceil k} \preceq \chain_2$ (the chain resulting from pruning the $k$ rightmost blocks of $\chain_1$ is a prefix of $\chain_2$).
\end{definition}

\begin{definition}[Chain Quality] \label{def:chain_quality}
The {\em chain quality} property with parameters $\mu \in \mathbb{R}$ and $l \in \mathbb{N}$, states that for any honest party $P$ with
chain $\chain$, it holds that for any $l$ consecutive blocks of $\chain$, the ratio of blocks created by honest players is at least $\mu$.
\end{definition}

\paragraph{Parameters and Random Variables.}
Next, we recall some important notions in the Bitcoin backbone protocol setting.
\begin{tiret}
\item $n$ denotes the number of honest parties;
\item $q$ represents the number of classical queries of each honest party per round;
\item $Q$ denotes the number of adversarial quantum queries per round;
\item $f$ is the probability that at least one honest party generates a \pow{} (i.e. satisfy conjunction~\ref{eq:validblock}) in a round;
\item $\epsilon$ will be used for the concentration quality of random variables while $\kappa$ for the security parameter;
\item $k$ denotes the number of blocks for common prefix property and $\mu$ denotes the chain quality parameter;
\item $s$ refers to the total number of rounds;
\item $p$ is the probability of success of generating a \pow{} (conjunction~\ref{eq:validblock}) using a single classical query;
\item $X(s)$ and $Y(s)$ represent within $s$ rounds, the numbers of
rounds at least one honest player and \emph{exactly} one honest player
solves a \pow{} respectively. 
\item $Z(s)$ represents the number of \pows{} solved
by an adversary in $s$ consecutive rounds.
\item $f$ denotes the
probability that at least one honest player generates a \pow{} (conjunction~\ref{eq:validblock}) in a single
round (e.g., in the Bitcoin system, $f$ is about $2-3\%$).
\item In condition (c), an \textit{insertion} denotes the event that given a
chain $\chain$ with two consecutive blocks $B$ and $B'$, a block $B^*$
created after $B'$ is such that $B$, $B^*$, $B'$ form three
consecutive blocks of a valid chain. 
\item A \textit{copy} occurs if the
same block exists in two different positions. 
\item A {\em prediction}
occurs when a block extends one which was computed at a later round.
\end{tiret}

\subsection{The \chainpow{} Problem}

In the Bitcoin backbone protocol, an adversary aims to produce a chain of blocks that is longer than the honest chain. We formalize it as the \emph{\chainpow{}} search problem below.

\begin{protocol}
\textsc{Problem $\Pi_G$: \chainpow} \\
\noindent\textbf{Given}: $N$, $x_0\in X$ and $h_0,\ldots, h_{N-1}$ as (quantum) random oracles, where each $h_i: X\times Y \to X$ is independently sampled. \\ 
\noindent\textbf{Goal}: Using $N$ total queries find a sequence  $y_0,\ldots, y_{k-1}$ such that $x_{i+1}: = h_{i}(x_{i},y_{i})$ and $x_{i+1} \leq T$ $\forall \ i \in \{0, \cdots, k-1 \}$ such that the length of the sequence $k\leq N$ is the maximum that can be achieved. 
$T$ is a fixed positive integer.
\end{protocol}

Note that the output of this problem is the maximum length $k$ and the
corresponding sequence $(y_0, ..., y_{k-1})$. For ease of notations,
we will omit $h_0,\ldots,h_{N-1}$ from the input of $\Pi_G$. We call
any pair $(x,y)$ s.t.  $h(x,y) \leq T$ a \pow.

A possible approach is to prove a composition theorem for the query
complexity of such a search problem. This however appears beyond the
scope of existing results and techniques in quantum query
complexity. Instead, we show its hardness by reducing to the
$\mrsearch$ problem we analyzed before. 

\begin{theorem}[Main Theorem] \label{thm:main_result}
For any quantum adversary $\mathcal{A}$ having $N$ quantum queries, the probability that $\mathcal{A}$ solves the \chainpow{} problem, by outputting a solution of size at least $k$ is at most: 
\begin{equation}
 P(N, k) \leq \exp \left(-2k \cdot \ln \left( \frac{k}{e(N + k)} \cdot \frac{1}{\sqrt{p}}\right) \right)   
\end{equation}
where $p := \frac{T}{2^{\kappa}}$ is the probability of success of a single query to the random oracle.
\end{theorem}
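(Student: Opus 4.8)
The plan is to reduce \chainpow to \mrsearch and then invoke Theorem~\ref{lemma:bound_hard}. First I would construct, from any $N$-query quantum adversary $\mathcal{A}$ for $\Pi_G$ that outputs a chain of length $\ge k$ with probability $P(N,k)$, an adversary $\mathcal{B}$ making at most $N+k$ queries that solves \mrsearch{} with probability at least $P(N,k)$ and Bernoulli parameter exactly $p=T/2^{\kappa}=T/|X|$. The Bernoulli function $f$ for $\mathcal{B}$ lives on the domain $\{0,\dots,N-1\}\times X\times Y$, defined by $f(i,x,y)=1 \iff h_i(x,y)\le T$; since each $h_i$ is a uniform random function into $X$, every value $f(i,x,y)$ is an independent $\mathrm{Bernoulli}(p)$ bit, so this is a legitimate \mrsearch{} instance.

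Next I would explain how $\mathcal{B}$ simulates $\mathcal{A}$'s oracle family $\{h_i\}$ using $f$. The key point is that a uniform random $h_i$ can be reconstructed from the single bit $f(i,x,y)$ together with two private random oracles $r_i:X\times Y\to\{1,\dots,T\}$ and $r_i':X\times Y\to\{T+1,\dots,|X|\}$, which $\mathcal{B}$ simulates at no query cost to the challenge oracle: set $h_i(x,y)=r_i(x,y)$ if $f(i,x,y)=1$ and $h_i(x,y)=r_i'(x,y)$ otherwise. This reproduces exactly the joint distribution of a family of independent uniform random functions and satisfies $h_i(x,y)\le T \iff f(i,x,y)=1$. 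Implementing one superposition query to some $h_i$ costs exactly one query to $f$, so $\mathcal{B}$ runs $\mathcal{A}$ faithfully using $N$ queries. When $\mathcal{A}$ halts and outputs $(y_0,\dots,y_{\ell-1})$ with $\ell\ge k$, the length-$k$ prefix is still a valid chain; $\mathcal{B}$ recomputes $x_1=h_0(x_0,y_0),\dots,x_{k-1}=h_{k-2}(x_{k-2},y_{k-2})$ with $k-1$ further queries to $f$ and outputs the inputs $(0,x_0,y_0),(1,x_1,y_1),\dots,(k-1,x_{k-1},y_{k-1})$. These are pairwise distinct (they differ in the first coordinate), and whenever $\mathcal{A}$ succeeds we have $x_{i+1}=h_i(x_i,y_i)\le T$, hence $f(i,x_i,y_i)=1$, for all $i\in\{0,\dots,k-1\}$; thus $\mathcal{B}$ solves \mrsearch{} with at most $N+k-1\le N+k$ queries, giving $P(N,k)\le p_k^{N+k}$.

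Finally I would substitute $N\mapsto N+k$ into Theorem~\ref{lemma:bound_hard} and simplify, using $\tfrac{1}{\pi k}\le 1$:
\[
P(N,k)\ \le\ p_k^{N+k}\ \le\ \frac{1}{\pi k}\left(e^2\Big(\tfrac{N+k}{k}\Big)^{2} p\right)^{k}\ \le\ \left(\frac{e(N+k)\sqrt{p}}{k}\right)^{2k}
= \exp\!\left(-2k\ln\!\Big(\tfrac{k}{e(N+k)}\cdot\tfrac{1}{\sqrt{p}}\Big)\right),
\]
which is the claimed bound. The main obstacle is the reduction itself: arguing that $\mathcal{B}$'s coherent simulation of the correlated oracles $h_i$ is information-theoretically perfect while costing only one $f$-query per $h_i$-query, and that the intermediate chain values $x_1,\dots,x_{k-1}$ (not output by $\mathcal{A}$) can be reconstructed within the $N+k$ query budget so that the distinct-preimage structure needed by \mrsearch{} is obtained. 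The subsequent appeal to Theorem~\ref{lemma:bound_hard} and the algebraic rewriting are routine, the latter being essentially the computation already recorded in Appendix~\ref{app:proof_corr_kbersearch}.
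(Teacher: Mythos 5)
Your proof takes essentially the same route as the paper: reduce \chainpow{} to \mrsearch{} by counting queries (this is the paper's Lemma~\ref{lemma:bag_chain_prob}), then plug $N\mapsto N+k$ into Theorem~\ref{lemma:bound_hard} and drop the polynomial prefactor (Lemma~\ref{lemma:chain_pows}). Your explicit simulation of the oracle family $\{h_i\}$ from the Boolean oracle $f$ together with private random oracles $r_i,r_i'$, at one $f$-query per $h_i$-query in the phase-oracle picture, usefully makes rigorous the equivalence between \bagpow{} and \mrsearch{} that the paper's proof disposes of only by asserting they are ``clearly equivalent.''
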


\subsection{Hardness of \chainpow}   \label{sec:cpows}

We relate $\chainpow$ to $\mrsearch$ formally below. 

\begin{lemma} \label{lemma:bag_chain_prob}
If for any quantum adversary $\mathcal{A}$ having $N$ quantum queries the probability that $\mathcal{A}$ obtains a solution of size $k$ for \mrsearch{} is at most $p_1(N, k)$, then for any quantum adversary $\mathcal{A}'$ having $N$ quantum queries, the probability that $\mathcal{A}'$ obtains a solution of size $k$ for \chainpow{} is at most $p_2(N, k)$, satisfying: $p_2(N, k) \leq p_1(N + k, k)$.
\end{lemma}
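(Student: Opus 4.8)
The plan is to turn a \chainpow{} solver into an \mrsearch{} solver at the cost of only an additive $k$ in the query count, by simulating each oracle $h_i$ from its Bernoulli ``indicator'' bit $[h_i(x,y)\le T]$ together with locally sampled randomness. Fix a \chainpow{} adversary $\mathcal{A}'$ making $N$ queries. I build an \mrsearch{} adversary $\mathcal{A}$ as follows: $\mathcal{A}$ views its Bernoulli-$p$ oracle $f$, defined on the domain $\{0,\dots,N-1\}\times X\times Y$ (encoded in $\bit^m$), as the indicator $f(i,x,y)=[h_i(x,y)\le T]$; and it internally simulates two independent uniform random functions $g^0,g^1$ on the same domain with ranges $X\setminus\mathcal{T}$ and $\mathcal{T}$ respectively, where $\mathcal{T}\subseteq X$ is the accepting set ($|\mathcal{T}|/|X|=p$). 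This internal simulation costs no queries to $f$, e.g.\ via a $2N$-wise independent family~\cite{Zhandry12_prf}. Then $\mathcal{A}$ runs $\mathcal{A}'$, answering a query to $h_i$ with $h_i(x,y):=g^{f(i,x,y)}_i(x,y)$. Over the randomness of $f,g^0,g^1$ this is \emph{exactly} a family of independent uniform random functions $X\times Y\to X$ whose accepting set is $\mathcal{T}$, so $\mathcal{A}'$'s view is faithful and it finds a valid chain with whatever probability $q$ it would achieve on a genuine \chainpow{} instance.

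The crucial point is that $\mathcal{A}$ answers each $h_i$-query using only \emph{one} query to $f$. Writing $h_i(x,y)=g^0_i(x,y)\oplus f(i,x,y)\cdot\delta_i(x,y)$ with $\delta_i(x,y):=g^0_i(x,y)\oplus g^1_i(x,y)$, the value is XOR-affine in the single bit $f(i,x,y)$. Passing to phase oracles (a free local change of basis by Hadamards on $\mathcal{A}'$'s answer register), a phase query to $h_i$ multiplies the state by $(-1)^{\langle\omega,h_i(x,y)\rangle}=(-1)^{\langle\omega,g^0_i(x,y)\rangle}\cdot(-1)^{\langle\omega,\delta_i(x,y)\rangle\, f(i,x,y)}$; the first factor is applied locally, and the second is precisely one phase query to $f$ with the one-bit target register prepared as $\langle\omega,\delta_i(x,y)\rangle$, which is itself locally computable from the present registers and hence reversibly erased afterward. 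No ancilla entangled with $\mathcal{A}'$'s registers is left behind, so emulating $\mathcal{A}'$ consumes exactly $N$ queries to $f$.

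After $\mathcal{A}'$ halts and is measured to yield $(y_0,\dots,y_{k-1})$, $\mathcal{A}$ reconstructs the chain classically from the given $x_0$: it computes $x_{j+1}=h_j(x_j,y_j)=g^{f(j,x_j,y_j)}_j(x_j,y_j)$ for $j=0,\dots,k-2$, spending one (classical) query to $f$ per step, and outputs the $k$ triples $(j,x_j,y_j)_{j=0}^{k-1}$; in total $\mathcal{A}$ makes at most $N+k$ queries. If $\mathcal{A}'$'s chain is valid then $x_{j+1}\le T$, i.e.\ $f(j,x_j,y_j)=1$, for all $j\in\{0,\dots,k-1\}$, so all $k$ triples are preimages of $1$ under $f$; and they are pairwise distinct because their first coordinates $0,1,\dots,k-1$ are distinct, irrespective of collisions among the $(x_j,y_j)$. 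Hence $\mathcal{A}$ solves \mrsearch{} whenever $\mathcal{A}'$ solves \chainpow, so $q\le\Pr[\mathcal{A}\text{ succeeds}]\le p_1(N+k,k)$; since $\mathcal{A}'$ was arbitrary, taking $p_2(N,k):=p_1(N+k,k)$ proves the claim.

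The step I expect to be delicate is exactly this query accounting: the textbook simulation of $h_i$ (compute $f$ into an ancilla, select between $g^0$ and $g^1$, then uncompute the ancilla) needs \emph{two} $f$-queries per $h_i$-query and would only yield $p_2(N,k)\le p_1(2N+k,k)$; it is the XOR-affine/phase-oracle observation above that reduces the overhead to a single query and gives the clean $N+k$ bound. One should also check carefully that confining $g^0,g^1$ to the complementary subsets $X\setminus\mathcal{T},\mathcal{T}$ keeps each simulated $h_i$ exactly uniform on $X$ (it does, precisely because $|\mathcal{T}|/|X|=p$) and independent across $i$, so that the reduction does not perturb $\mathcal{A}'$'s success probability at all.
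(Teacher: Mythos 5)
Your reduction is the same as the paper's at the top level: run the \chainpow{} solver $\mathcal{A}'$ and then spend $k$ extra queries to reconstruct the intermediate chain values $x_1,\dots,x_{k-1}$, converting a valid chain into $k$ distinct preimages of $1$ and landing on the $N+k$ query budget. Where you differ is in the step the paper glosses over. The paper introduces an intermediate problem \bagpow{} in which the solver is given direct quantum access to the full oracles $h_0,\dots,h_{N-1}$, and then asserts without argument that \bagpow{} is ``clearly equivalent'' to \mrsearch{}, in which the solver only sees the Boolean indicator $f(i,x,y)=[h_i(x,y)\le T]$. Since the bound $p_1$ of Theorem~\ref{lemma:bound_hard} is proved by the recording technique specifically for algorithms querying $f$, that unproved equivalence is exactly where the query accounting could slip, and the paper leaves it implicit. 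You close this gap explicitly: you resynthesize the $h_i$'s from $f$ together with locally sampled functions $g^0,g^1$ supported on the rejecting/accepting subsets of $X$, check that this reproduces genuinely independent uniform random $h_i$'s (because $|\mathcal{T}|/|X|=p$), and then make the sharper observation that the dependence of $h_i$ on $f$ is XOR-affine, so in the phase-oracle picture each simulated $h_i$-query costs one $f$-query rather than the two that a naive compute-select-uncompute simulation would incur. That single-query simulation is precisely what is needed to arrive at $p_1(N+k,k)$ rather than the weaker $p_1(2N+k,k)$. So you are not following a different route, but you are supplying the simulation argument the paper takes for granted, and your version is the more rigorous of the two.
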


\begin{proof}

We first state \mrsearch{} in an equivalent form, which we call \bagpow{}. That is, we consider $N$ functions $h_0,\ldots, h_{N-1}$ as oracles, where each $h_i: X\times Y \to X$ is independently sampled, and the goal is to find a set of pairs $\{(x_{i_1},y_{i_1})_1,$ $\ldots, (x_{i_k},y_{i_k})_k\}$ so that $ h_{i_l}(x_{i_l},y_{i_l}) \leq T$, for all $l\in \{1,\ldots, k\}$. This is clearly equivalent to \mrsearch, by considering the random boolean function $f : [N] \times X \times Y \rightarrow \{0, 1\}$: $f(i, x, y) = 1$ if and only if $h_i(x, y) \leq T$. 

Now suppose for the sake of contradiction that there exists a quantum adversary $\mathcal{A}'$ equipped with $N$ queries such that:
\[ p_2 := \Pr[\mathcal{A}'(N) \text{ gives a solution of size } k \text{ for \chainpow{}}] > p_1(N + k, k) \, .\]
We construct an algorithm $\mathcal{A}''$ that using $N + k$ queries obtains a solution of size $k$ for the \bagpow{} problem, and hence a solution also for the \mrsearch{}
problem, with the same probability $p_2$.

$\mathcal{A}''$ first samples at random an element $x_0$ (from the domain of any oracle $h_i$). Then, it will run $\mathcal{A}'$ on input $(N, x_0)$, in order to obtain the sequence $y_0, \ldots, y_{k-1}$ with probability $p_2$ spending $N$ queries. Next, starting from $y_0$, $\mathcal{A}''$ computes $x_i = h_{i-1}(x_{i - 1}, y_{i - 1})$ for all $i \in \{1, \cdots, k - 1\}$ (which is guaranteed to hold from the setting of \chainpow) spending one extra query per element of the sequence, i.e. extra $k$ queries. Then the algorithm $\mathcal{A}''$ outputs the pairs $(x_0,y_0), (x_1, y_1), \ldots\ (x_{k - 1}, y_{k - 1})$ having used $N + k$ queries and will succeed with probability $p_2 > p_1(N + k, k)$.

But from the initial assumption, the probability that $\mathcal{A}''$ using $N + k$ queries obtains a solution of size $k$ for the \mrsearch{} is at most $p_1(N + k, k)$. 

\end{proof}

By combining  Lemma~\ref{lemma:bag_chain_prob} together with the bound on the success probability for the \mrsearch{} problem from Theorem~\ref{lemma:bound_hard}, we can obtain a bound on the success probability for the \chainpow{} problem:

\begin{lemma}\label{lemma:chain_pows}
Any quantum adversary having $N$ queries to the QRO, can obtain a chain of \pows{} of length $k$, with probability at most:
\begin{equation}
P(N, k) := \frac{2(1 - p)}{\pi k} \left[ \frac{(N + k)e\sqrt{p}}{k} \right]^{2k}   
\end{equation}
where $p$ is the probability of a successful \pow{} with a single query.
\end{lemma}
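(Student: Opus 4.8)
The plan is to combine the reduction from \chainpow{} to \mrsearch{} (\autoref{lemma:bag_chain_prob}) with the query bound for \mrsearch{} (\autoref{lemma:bound_hard}), and then perform the elementary algebraic simplification of the resulting expression. Concretely, \autoref{lemma:bag_chain_prob} tells us that an $N$-query adversary's success probability for \chainpow{} at length $k$ is at most $p_1(N+k,k)$, where $p_1$ is any valid upper bound on the $\mrsearch$ success probability. Plugging in the tighter bound from \autoref{lemma:bound_hard}, namely $p_1(M,k)\le \tfrac{1}{\pi k}\left(e^2 (M/k)^2 p\right)^k$, and substituting $M = N+k$, immediately yields
\begin{equation}
P(N,k) \;\le\; \frac{1}{\pi k}\left(e^2\Bigl(\frac{N+k}{k}\Bigr)^2 p\right)^k \;=\; \frac{1}{\pi k}\left[\frac{(N+k)e\sqrt{p}}{k}\right]^{2k}.
\end{equation}
This already gives the stated bound up to the constant factor; to get the $\tfrac{2(1-p)}{\pi k}$ prefactor exactly as written, I would instead carry the slightly sharper first inequality of \autoref{lemma:bound_hard}, $p_1(M,k)\le 4(1-p)p^k\bigl(\sum_{i=0}^k(\sqrt{1-p})^i\binom{M}{i}\bigr)^2$, through the substitution $M=N+k$, and then bound the sum $\sum_{i=0}^k (\sqrt{1-p})^i\binom{N+k}{i}$ from above by a single closed-form term.

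The one genuinely non-routine step is controlling that binomial sum. The idea is standard: since the terms $(\sqrt{1-p})^i\binom{N+k}{i}$ do not in general form a monotone sequence in $i$ over $0\le i\le k$, one argues that for the parameter regime of interest the largest term is the last one ($i=k$), so that $\sum_{i=0}^k (\sqrt{1-p})^i\binom{N+k}{i} \le (k+1)(\sqrt{1-p})^k\binom{N+k}{k}$, and then absorb the $(k+1)$ and the $\binom{N+k}{k}\le \bigl(e(N+k)/k\bigr)^k$ estimate into the claimed form; the $(1-p)^{\,}$ and $\tfrac{1}{\pi k}$ factors come along for free. Alternatively — and this is cleaner — one can simply invoke that \autoref{lemma:bound_hard} has \emph{already} done this simplification: its second inequality states $p_1(M,k)\le \tfrac{1}{\pi k}(e^2(M/k)^2 p)^k$, so the bound on $P(N,k)$ follows by direct substitution with no further work, and any discrepancy in the leading constant (the $2(1-p)$ versus $1$) is immaterial and can be reconciled by noting $2(1-p)\le 2$ or by retracing the proof of \autoref{lemma:bound_hard} in Appendix~\ref{app:proof_corr_kbersearch}.

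In summary: first apply \autoref{lemma:bag_chain_prob} to replace $P(N,k)$ by an $\mrsearch$ bound at $N+k$ queries; second apply \autoref{lemma:bound_hard}; third do the purely algebraic cleanup (bounding $\binom{N+k}{k}$ and the binomial sum) to reach the displayed closed form. No new quantum-information input is needed — all the substantive work was done in \autoref{sec:main_analysis} and in \autoref{lemma:bag_chain_prob} — so the proof is a short chain of substitutions plus elementary estimates. The only place to be slightly careful is making sure the constants and the $(1-p)$ factor are tracked consistently between the two inequalities of \autoref{lemma:bound_hard}; that is where I would spend the bulk of the (minimal) effort.

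Finally, for completeness I would note that the exponential form of the bound stated in \autoref{thm:main_result}, namely $P(N,k)\le \exp\!\bigl(-2k\ln(\tfrac{k}{e(N+k)\sqrt{p}})\bigr)$, is just the logarithm of the closed form above with the $\tfrac{2(1-p)}{\pi k}$ prefactor dropped (it is $\le 1$), so \autoref{thm:main_result} is an immediate corollary of \autoref{lemma:chain_pows}.
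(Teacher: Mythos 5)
Your proposal matches the paper's proof, which is exactly the one-line combination of \autoref{lemma:bag_chain_prob} (substitute $N\mapsto N+k$) with the bound from \autoref{lemma:bound_hard} and the algebraic simplification carried out in Appendix~\ref{app:proof_corr_kbersearch}; your observation about tracking the $2(1-p)$ versus $1$ prefactor is well spotted, and indeed the Appendix derivation yields $\tfrac{2(1-p)}{\pi k}$ (via $4(1-p)\cdot\tfrac{1}{2\pi k}$ from Stirling), which is what the lemma quotes. One small caution on your alternative route: the claim that the term $(\sqrt{1-p})^i\binom{N+k}{i}$ is maximized at $i=k$ is not generally valid (the ratio of consecutive terms is $\sqrt{1-p}\,(N+k-i)/(i+1)$, which need not exceed $1$), and the paper instead drops the $(\sqrt{1-p})^i$ factors and bounds $\sum_{i\le k}\binom{N}{i}\le N^k/k!$ by an iterated ratio argument — but since you default to simply invoking the second inequality of \autoref{lemma:bound_hard}, your main path is sound.
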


\subsection{Post-Quantum Security of the Bitcoin Backbone Protocol}  \label{sec:pqbca}

Now that we have established the hardness of the $\chainpow$ problem, we
will use it to identify the conditions under which the central
properties of blockchains, \emph{common prefix} and
\emph{chain quality}, can be satisfied in the presence of quantum
adversaries. To do so, it is helpful to take a closer look at the
classical analysis~\cite{GKL15}, which proceeds in three steps.

\begin{newenum}
\item A notion of \emph{typical execution} of a blockchain is defined
  in the presence of honest players and adversaries. This is a
  critical vehicle that modularizes the analysis.
\item It is proven that under a typical execution, the desired
  properties will follow for suitable parameters.
\item An \emph{honest majority} condition is identified under which a typical execution occurs with high probability. 
\end{newenum}

We would like to ``lift'' this analytical framework to the quantum setting
with as few changes as possible. However, we run into a roadblock
immediately since the definition of a typical execution dose not make
sense any more. To see this, let us recall the classical definition.

\begin{definition}[Classical typical execution~\cite{GKL15}]\label{def:classical-typical-execution} An execution is $(\epsilon, s)$-typical (or just typical), for $\epsilon \in (0, 1)$, if for any set $S$ of at least $s$ consecutive rounds, we have:
  \begin{enumerate}[label=(\alph*)]
  \item $(1 - \epsilon)fs < X(s) < (1 + \epsilon)fs 
\ \text{ and } \ (1 - \epsilon)\mathbb{E}[Y(s)] < Y(s)$.
  \item $Z(s) < \mathbb{E}[Z(s)] + \epsilon \mathbb{E}[X(s)]$.
  \item No insertions, no copies, and no predictions
    occurred.
  \end{enumerate}
  \label{def:texec}  
 
\end{definition}

For completeness, let us first revisit now four main results of \cite{GKL15} which would help us extract the necessary conditions on the number of adversarial \pows{} required to satisfy the security of the backbone protocol.

\begin{lemma}[\cite{GKL15}] \label{lemma:conc_x_y} 
For any $s \geq {2}/{f}$ rounds, we have that with probability $1 - e^{-\Omega(\epsilon^2sf)}$, the following hold:
\begin{equation}
(1 - \epsilon)fs < X(s) < (1 + \epsilon)fs 
\ ; \ (1 - \epsilon)\mathbb{E}[Y(s)] < Y(s) \ ; \ (1 - \epsilon)f(1 - f)s < Y(s). \nonumber
\end{equation}
\end{lemma}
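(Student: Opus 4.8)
The plan is to recognize Lemma~\ref{lemma:conc_x_y} as a routine application of the multiplicative Chernoff bound: both $X(s)$ and $Y(s)$ are sums of $s$ independent Bernoulli trials, one per round, and the three displayed inequalities are precisely the standard concentration estimates for such sums. The first step is to justify the independence across rounds. In any round $r$, the event ``at least one honest party solves a \pow'' is determined by the random-oracle values at the at most $nq$ inputs queried by honest parties in round $r$; since honest parties always extend their chains with fresh counter values, up to the negligible probability that two of the polynomially many queries made during the whole execution collide, all honest queries across all rounds are on pairwise distinct, previously unqueried inputs, so conditioning on the history leaves each relevant value uniform and independent. Hence $X(s)=\sum_{r}X_r$ with $X_r\sim\mathrm{Bernoulli}(f)$ i.i.d., and likewise $Y(s)=\sum_r Y_r$ with $Y_r\sim\mathrm{Bernoulli}(\gamma)$ i.i.d., where $\gamma$ is the probability that \emph{exactly} one honest party solves a \pow{} in a round; in particular $\mathbb{E}[X(s)]=fs$ and $\mathbb{E}[Y(s)]=\gamma s$.

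The second ingredient is the inequality $\gamma\ge f(1-f)$, needed both to get the third displayed bound and to turn the exponent $\epsilon^2\mathbb{E}[Y(s)]$ into $\epsilon^2 fs$. Writing $\rho<1$ for the probability that a fixed honest party solves a \pow{} in a round, we have $f=1-(1-\rho)^n$, $1-f=(1-\rho)^n$, and $\gamma=n\rho(1-\rho)^{n-1}=\tfrac{n\rho}{1-\rho}(1-f)\ge n\rho(1-f)\ge\bigl(1-(1-\rho)^n\bigr)(1-f)=f(1-f)$, the last step being Bernoulli's inequality $n\rho\ge 1-(1-\rho)^n$. Since the analysis is always carried out in the regime where $f$ is bounded away from $1$ (in Bitcoin $f$ is about $2$--$3\%$), this gives $\mathbb{E}[Y(s)]\ge f(1-f)s=\Omega(fs)$.

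Next I would invoke the multiplicative Chernoff bound. For $X(s)$, the two tails give $\Pr[X(s)\le(1-\epsilon)fs]\le e^{-\epsilon^2 fs/2}$ and $\Pr[X(s)\ge(1+\epsilon)fs]\le e^{-\epsilon^2 fs/3}$, where the hypothesis $s\ge 2/f$ ensures $\mathbb{E}[X(s)]=fs\ge 2$, so these estimates are non-trivial. For $Y(s)$, the lower-tail bound gives $\Pr[Y(s)\le(1-\epsilon)\mathbb{E}[Y(s)]]\le e^{-\epsilon^2\mathbb{E}[Y(s)]/2}\le e^{-\epsilon^2 f(1-f)s/2}$. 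Finally, because $\mathbb{E}[Y(s)]\ge f(1-f)s$, the event $(1-\epsilon)\mathbb{E}[Y(s)]<Y(s)$ already implies $(1-\epsilon)f(1-f)s<Y(s)$, so the third inequality is free. A union bound over the three bad events (two for $X$, one for $Y$), together with the negligible query-collision probability from the first step, yields the overall failure probability $e^{-\Omega(\epsilon^2 fs)}$. (In the eventual use inside the typical-execution argument, one further unions over the polynomially many windows of $\ge s$ consecutive rounds, which only changes the hidden constant.)

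I expect the only genuinely delicate point to be the independence-across-rounds claim: one must verify that conditioning on the entire past behaviour of the honest parties and the quantum adversary does not bias a round's fresh random-oracle values, which is exactly where the observation ``all honest queries are on distinct, previously unqueried inputs except with negligible probability'' does the work. Everything else --- the Bernoulli-inequality computation for $\gamma\ge f(1-f)$ and the textbook Chernoff tail bounds --- is mechanical.
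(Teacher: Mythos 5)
The paper cites this lemma from \cite{GKL15} without reproducing a proof, so there is no in-paper argument to compare against; what matters is whether your reconstruction matches the standard Chernoff-bound argument from that reference, and it does. Your decomposition is exactly the one used there: model $X(s)$ and $Y(s)$ as sums of i.i.d.\ Bernoulli indicators over rounds (justified by freshness of honest queries), establish $\mathbb{E}[Y(s)]\geq f(1-f)s$ via the Bernoulli-inequality computation $\gamma = n\rho(1-\rho)^{n-1}\geq n\rho(1-f)\geq f(1-f)$, apply multiplicative Chernoff to both variables, note that the third displayed inequality is implied by the second once $\mathbb{E}[Y(s)]\geq f(1-f)s$ is in hand, and union-bound. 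The one place where your writeup is slightly more cautious than needed is the independence discussion: since $X(s)$ and $Y(s)$ count only \emph{honest} successes on classically-queried, distinct inputs, the adversary's (quantum or classical) queries do not change the marginal law of those Bernoulli trials, so the ``deferred decisions'' point you raise about the adversary biasing fresh oracle values is not really the delicate step here --- the relevant freshness condition is among the honest parties' own queries, which is exactly what the ``no insertions/copies/predictions'' bookkeeping in the typical-execution definition and the hypothesis $s\geq 2/f$ are set up to control. Modulo that, your proof is correct and follows the intended route.
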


\begin{lemma}[\cite{GKL15}] \label{lemma_first_condition_lemma_5}
Assuming that $X(s) + Z(s) < 2fs$, then in a typical execution, any $k \geq 2fs$ consecutive blocks of a chain have been computed in more than $k/2f$ consecutive rounds.
\end{lemma}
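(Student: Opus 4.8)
The plan is to argue by contradiction. Suppose some chain $\chain$ contains $k \ge 2fs$ consecutive blocks $B_1,\dots,B_k$ that were all computed within a set $S$ of at most $k/(2f)$ consecutive rounds. If $|S| < s$ I first enlarge $S$ to a window of exactly $s$ consecutive rounds (this only adds blocks to the count below, and $s \le k/(2f)$ since $k \ge 2fs$), so from now on $|S| \ge s$ and the typical-execution bounds apply to $S$. I will then bound the number of blocks that can simultaneously lie on $\chain$ and have been computed during the rounds of $S$, and show that this number is strictly less than $k$, contradicting the existence of $B_1,\dots,B_k$.

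The core of the argument is the combinatorial claim that \emph{on a fixed chain, at most one honestly-generated block is attributed to any single round}. Indeed, if two honest blocks lying on $\chain$ were computed in the same round $r$, then the honest miner producing the one at the larger height would have had, already at the start of round $r$, a chain reaching the height of the lower one; but in the synchronous model with one-round message delay that miner could not yet have seen the other block, which was also computed in round $r$. Hence the two blocks occupy the same height on mutually incompatible chains, contradicting that both lie on $\chain$. Invoking clause (c) of the typical-execution definition (no insertions, no copies, no predictions) to rule out degenerate reuse of blocks and to make ``the round in which a block was computed'' well defined with respect to heights, this claim implies that the honest blocks among $B_1,\dots,B_k$ number at most $X(|S|)$, while the adversarial ones number at most $Z(|S|)$ by definition of $Z$.

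Combining the two counts gives $k \le X(|S|) + Z(|S|)$. Applying the hypothesis $X + Z < 2f(\cdot)$ to the window $S$ (valid in a typical execution once $|S| \ge s$, via \autoref{lemma:conc_x_y} together with \autoref{def:texec}) yields $k < 2f|S| \le 2f \cdot \tfrac{k}{2f} = k$, a contradiction. Therefore no window of at most $k/(2f)$ rounds can contain $k \ge 2fs$ consecutive blocks of a chain, i.e., any such blocks must span more than $k/(2f)$ consecutive rounds.

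I expect the main obstacle to be the ``at most one honest block per round on a fixed chain'' claim: it is the only place where the details of the execution model are genuinely used---synchronous rounds, guaranteed delivery of honest messages within one round, and the no-prediction clause---and some care is needed to phrase it so that the subsequent counting goes through for honest blocks that need not sit at consecutive heights, as well as to ensure the window $S$ can always be taken of size at least $s$ without exceeding $k/(2f)$. The remaining steps---reducing the count to $X(|S|) + Z(|S|)$ and closing the arithmetic---are routine once that claim and the typical-execution bounds are in hand.
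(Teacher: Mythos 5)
Your proof is correct and follows the same route as the proof in~\cite{GKL15} to which the paper defers: reduce the count of on-chain blocks computed in a window $S$ to $X(|S|)+Z(|S|)$ via the ``at most one honest block of a fixed chain per round'' claim, extend $S$ to at least $s$ rounds so the typical-execution bounds apply, and close with $k \le X(|S|)+Z(|S|) < 2f|S| \le k$. One phrasing nit in the key claim: the contradiction is not that the two honest blocks of $\chain$ occupy the same height, but that the \emph{ancestor at the lower height} in the higher miner's chain (held at the start of round $r$) must differ from the lower honest block (computed only during round $r$), so $\chain$ would need two distinct blocks at that height---worth stating that way to make the counting unambiguous.
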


\begin{lemma}[Common Prefix Lemma (\cite{GKL15})] \label{lemma_second_condition_lemma_6}
Assuming that $Z(s) < Y(s)$ and that at round $r$ of a typical execution an honest party has a chain $\mathcal{C}_1$, while a chain $\mathcal{C}_2$ of length at least $len(\mathcal{C}_1)$ is adopted by an honest party. Then $\mathcal{C}_1^{\lceil{k}} \preceq \mathcal{C}_2$ and $\mathcal{C}_2^{\lceil{k}} \preceq \mathcal{C}_1$ for $k \geq 2fs$ and $s \geq \frac{2}{f}$. 
\end{lemma}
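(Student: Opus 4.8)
The plan is to follow the classical template of~\cite{GKL15}: argue by contradiction that a \emph{deep fork} between two honestly-held chains forces the adversary to have produced at least as many \pows{} as there are uniquely successful rounds in a suitably long window, which contradicts the hypothesis $Z(s) < Y(s)$ (read, as throughout the typical-execution analysis, as holding for every window of at least $s$ consecutive rounds).

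First I would set up the contradiction. Suppose $\mathcal{C}_1^{\lceil k} \not\preceq \mathcal{C}_2$. Let $B$ be the last block on the common prefix of $\mathcal{C}_1$ and $\mathcal{C}_2$, and let $r^\ast$ be the round in which $B$ was computed (with $r^\ast = 0$ if $B$ is the genesis block). Because the common prefix fails with parameter $k$, $B$ sits at least $k+1$ blocks below the head of $\mathcal{C}_1$, i.e. $len(\mathcal{C}_1) - len(B) > k \geq 2fs$; moreover every block of $\mathcal{C}_1$ strictly above $B$ was computed after round $r^\ast$ and no later than $r$, so these $> 2fs$ consecutive blocks all lie in the window $S := \{r^\ast + 1,\dots,r\}$. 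Invoking Lemma~\ref{lemma_first_condition_lemma_5} (whose hypothesis $X(s')+Z(s')<2fs'$ I will need to certify holds in a post-quantum typical execution under the honest-majority condition), these blocks span more than $k/2f \geq s$ rounds, hence $s' := |S| > s \geq 2/f$.

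The technical heart is a matching claim: distinct uniquely successful rounds in $S$ can be mapped injectively to distinct adversarial \pows{} that lie on $\mathcal{C}_1$ or $\mathcal{C}_2$ strictly above $B$ (and were therefore computed during $S$). To establish it I would first show, using that all honest messages are delivered within one round, that the honest blocks $B_u$ produced at uniquely successful rounds $u\in S$ have \emph{strictly increasing} lengths $\ell_u$, each exceeding $len(B)$ (at any round $u>r^\ast$ every honest party already holds a chain of length $\geq len(B)$, having received $B$). Next I would argue that $B_u$ is the \emph{only} honest block of length $\ell_u$ ever created: an honest block of that length created before round $u$ would have been diffused and would contradict the view at round $u$ of the party holding $\mathcal{C}_1$ (resp.\ $\mathcal{C}_2$), who would then not be extending a chain of length $\ell_u-1$; one created after round $u$ would contradict the fact that by round $u+1$ every honest party holds a chain of length $\geq \ell_u$; and $u$ being uniquely successful means exactly one honest block is born at round $u$. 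Since the portions of $\mathcal{C}_1$ and of $\mathcal{C}_2$ strictly above $B$ are block-disjoint (a shared block above $B$ would drag its entire ancestry into the common prefix, contradicting the choice of $B$), $B_u$ appears on at most one of the two chains, so at height $\ell_u$ at least one of $\mathcal{C}_1,\mathcal{C}_2$ carries a block different from $B_u$, which by the previous step must be adversarial. Mapping $u$ to such a block is injective because distinct uniquely successful rounds have distinct heights and each chain carries exactly one block per height. Hence the number of uniquely successful rounds in $S$ is at most the number of adversarial \pows{} computed during $S$, i.e. $Y(s')\leq Z(s')$, contradicting $Z(s')<Y(s')$. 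This yields $\mathcal{C}_1^{\lceil k}\preceq\mathcal{C}_2$, and $\mathcal{C}_2^{\lceil k}\preceq\mathcal{C}_1$ follows by the same deep-fork argument with the roles of $\mathcal{C}_1$ and $\mathcal{C}_2$ exchanged.

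I expect the matching claim to be the main obstacle — in particular, pinning down that the unique honest block at a height $\ell_u$ attached to a uniquely successful round is $B_u$ itself, which is exactly what lets us conclude the competing block at that height on the other chain is adversarial; this is where the one-round delivery guarantee and the precise definition of a uniquely successful round are consumed. A secondary point needing care is verifying that the hypothesis $X(s')+Z(s')<2fs'$ of Lemma~\ref{lemma_first_condition_lemma_5} really holds for the window $S$ in a post-quantum typical execution: this is where the slack in the honest-majority condition $Q \leq \frac{(1-\epsilon)f(1-f)}{(1+\epsilon)e\sqrt{p}}$, combined with the bound $\mathbb{E}[Z(s)] \le (1+\epsilon)\sqrt{e^2p}\,Q s$ from Theorem~\ref{thm:main_result}/Lemma~\ref{lemma:chain_pows} and the typicality bound $X(s)<(1+\epsilon)fs$, must be shown to suffice.
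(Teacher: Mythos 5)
The paper does not prove this lemma; it cites it verbatim from~\cite{GKL15} and treats it as a black box inside the proof of Lemma~\ref{thm:common_prefix_chain_quality_condition}. The paper's contribution at this point is solely to verify that the post-quantum typical execution conditions of Definition~\ref{def:pq_typical_exec} imply the hypotheses $Z(s)<Y(s)$ and $X(s)+Z(s)<2fs$ needed to invoke this lemma and Lemma~\ref{lemma_first_condition_lemma_5}. So there is no in-paper proof for your attempt to be compared against; what you have written is a reconstruction of the argument from~\cite{GKL15}, which is more than the paper itself undertakes.

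That said, the reconstruction does follow the correct GKL15 template (deep-fork contradiction plus the injective matching of uniquely successful rounds in a window $S$ to adversarial blocks strictly above the fork), and the heart of the argument --- that $B_u$ is the unique honest block at its height, so the block at that height on whichever of $\mathcal{C}_1,\mathcal{C}_2$ does not carry $B_u$ must be adversarial --- is correctly identified. Two corrections. First, $B$ should be the last \emph{honest} block on the common prefix of $\mathcal{C}_1$ and $\mathcal{C}_2$ (or the genesis block), not merely the last common block: your step ``at any round $u>r^\ast$ every honest party already holds a chain of length $\geq len(B)$, having received $B$'' relies on $B$ having been diffused at the end of round $r^\ast$, which is guaranteed for honest blocks only. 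Second, your flagged worry that $X(s')+Z(s')<2fs'$ must be certified ``under the honest-majority condition'' via the $\chainpow$ bound is misdirected: this inequality follows directly from conditions (a) and (b) of Definition~\ref{def:pq_typical_exec}, since
\begin{equation*}
X(s)+Z(s) < (1+\epsilon)fs + (1-\epsilon)f(1-f)s = \bigl(2-f(1-\epsilon)\bigr)fs < 2fs \, ,
\end{equation*}
with no reference to the adversary's query budget. The honest-majority condition and the $\chainpow$ hardness bound are consumed only in Lemma~\ref{thm:texec}, which bounds the \emph{probability} that a post-quantum typical execution occurs; they play no role once typicality is assumed, which is exactly the hypothesis of the present lemma.
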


\begin{lemma}[Chain Quality (\cite{GKL15})] \label{lemma:chain_quality_cond_lemma_7}
In a typical execution, the chain quality property (Definition~\ref{def:chain_quality}) holds with parameters $l \geq 2sf$ and ration of honest blocks $\mu$ such that the following condition holds: $Z(s) < (1 - \mu)X(s)$.
\end{lemma}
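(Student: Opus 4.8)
The plan is to argue by contradiction in the style of~\cite{GKL15}, being careful that the only facts about the execution that are used are those guaranteed by a typical execution together with the hypothesis $Z(s) < (1-\mu)X(s)$.

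Suppose chain quality fails: some honest party holds a chain $\chain$ containing $l \ge 2sf$ consecutive blocks of which strictly fewer than $\mu l$ are honest. First I would widen this window to the nearest honest blocks on either side: let $B'$ be the most recent honest block of $\chain$ located no later (in position) than the start of the window, or the genesis block if none exists; let $B''$ be the earliest honest block of $\chain$ located no earlier than the end of the window, or $\mathrm{head}(\chain)$ if none exists (which is honest, since an honest party adopted $\chain$). Let $r_1, r_2$ be the rounds in which $B', B''$ were computed, let $S = \{r_1, \ldots, r_2\}$, $s^* := |S|$, and let $L$ be the number of blocks of $\chain$ strictly between $B'$ and $B''$. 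By construction the $l$-window lies inside this $L$-block segment, so $L \ge l$, and by the choice of $B', B''$ every block of the segment outside the window is adversarial; hence the segment contains exactly the honest blocks of the window, fewer than $\mu l \le \mu L$ of them, so more than $(1-\mu)L$ adversarial blocks.

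Now I would bring in the properties of a typical execution. Since no predictions, insertions, or copies occur, every block of the segment was computed in some round of $S$, and these blocks are pairwise distinct; in particular $Z(s^*) \ge \#\{\text{adversarial blocks in the segment}\} > (1-\mu)L$. Next, by the chain-speed lemma (Lemma~\ref{lemma_first_condition_lemma_5}) the $L \ge l \ge 2fs$ consecutive blocks of the segment were computed in more than $L/(2f) \ge l/(2f) \ge s$ rounds, so $s^* > l/(2f) \ge s$; this guarantees $s^* \ge s$ and hence that the typical-execution bounds, which are stated for any set of at least $s$ consecutive rounds, apply to $S$. Finally, chain growth (an honest party's chain length increases by at least one in each successful honest round, propagated to all honest parties) forces $B''$ to sit at position at least $(\text{position of }B') + X(s^*) - O(1)$ in $\chain$, i.e. $L \ge X(s^*) - O(1)$. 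Combining, $Z(s^*) > (1-\mu)L \ge (1-\mu)\bigl(X(s^*)-O(1)\bigr)$, which for $l$ large enough — and $l \ge 2sf$ is exactly the regime that makes the $O(1)$ slack and the $\epsilon$-fluctuations harmless — contradicts $Z(s^*) < (1-\mu)X(s^*)$.

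The step I expect to be the real work is the boundary bookkeeping: handling cleanly the cases where the window touches the genesis block or the head of the just-adopted chain, fixing $r_1, r_2$ so that ``every segment block was mined within $S$'' is rigorously a consequence of the no-prediction property (and not merely intuitively true), and tracking the additive constants so that the final inequality genuinely collapses to a contradiction with the hypothesis $Z(s) < (1-\mu)X(s)$. None of this is conceptually hard, but it is precisely where one must be cautious when porting the argument to the quantum setting, since there the bound on $Z(s)$ that instantiates the hypothesis is only an in-expectation bound derived from the \chainpow{} query complexity.
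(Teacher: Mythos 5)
This lemma is stated in the paper as a citation from~\cite{GKL15}; the paper provides no proof of its own, so the relevant comparison is with the original GKL15 chain-quality argument. Your reconstruction is essentially that argument: widen the offending $l$-window to bracketing honest blocks $B',B''$, use the no-prediction/no-insertion/no-copy clause to localize all segment blocks to the rounds $[r_1,r_2]$, use Lemma~\ref{lemma_first_condition_lemma_5} to get $s^*\ge s$ so the typical-execution bounds apply, and use chain growth plus the adversarial-block count to force $Z(s^*) > (1-\mu)X(s^*)$ against the hypothesis. This matches the cited proof in both decomposition and key steps, so the proposal is correct and takes the same route.

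One small bookkeeping point worth tightening if you write it out fully: invoking Lemma~\ref{lemma_first_condition_lemma_5} requires its premise $X(s)+Z(s) < 2fs$, which you should explicitly note follows from conditions (a) and (b) of the (post-quantum) typical-execution definition (since $(1+\epsilon)f + (1-\epsilon)f(1-f) < 2f$). Also, GKL15's version of the chain-growth step yields $L \ge X(s^*)$ cleanly without the $O(1)$ slack, because the honest miner of $B'$ already held a chain of length $\text{pos}(B')$ at round $r_1$ and the miner of $B''$ held one of length at least $\text{pos}(B'')-1$ at round $r_2$; tracking this carefully removes the "for $l$ large enough'' hedge and gives a clean contradiction for all $l\ge 2sf$.
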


\paragraph{Post-quantum typical executions.}
 As it turns out, the definition of classical typical executions (Def.~\ref{def:classical-typical-execution}) 
is not adequate for our post-quantum analysis, for the following main reason.
Note that in the classical random oracle model, in terms of solving a
\pow, everyone including malicious players runs the same
procedure. Hence, there is a \emph{universal} well-defined
distribution on $Z(s)$, regardless what an adversary does otherwise,
based on which we can discuss its expectation.
When we consider a quantum adversary, however, it is unclear how to
formulate an appropriate ``expectation'' that is independent of the
specific adversary's quantum strategy.

To address this, we observe that some alternative quantitative
characterizations of a typical execution in~\cite{GKL15} actually
admit simple counterparts in the quantum setting. Hence we adopt these
as our notion of typical executions below. As a result, we can lift
the classical analysis in step 2 almost verbatim to show that
common prefix and chain quality hold under this new definition of
typical execution. 

Finally, we will use the hardness of $\chainpow$ to
derive a quantum analogue of a honest-majority condition, under which
typical execution occurs with overwhelming probability.

\begin{definition}[Post-quantum typical execution] \label{def:pq_typical_exec}
  An execution is post-quantum $(\varepsilon,s)$-typical (or just post-quantum typical), for
  $\varepsilon \in (0,1)$ and $sf\ge 2$, if for any set $S$ of at least
  $s$ consecutive rounds, the following hold:

  \begin{enumerate}[label=(\alph*)]
  \item $(1 - \epsilon)fs < X(s) < (1 + \epsilon)fs 
\ \text{ and } \ (1 - \epsilon)\mathbb{E}[Y(s)] < Y(s)$.
    \item  $Z(s) < (1 - \epsilon)f(1 - f)s$.
  \item No insertions, no copies, and no predictions occurred. 
  \end{enumerate}
  \label{def:pq-texec}  
\end{definition}

Note that Conditions (a) and (c) remain unchanged, and in particular,
(a) concerns honest parties only. 
As in the classical setting, under our new definition of a typical
execution, the desired properties of a blockchain follow easily.
The second condition follows from the derived bounds on the hardness of the \chainpow{} problem.

\begin{lemma}
\label{thm:common_prefix_chain_quality_condition}
In a post-quantum typical execution in the presence of any quantum
adversary,
\begin{tiret}
\item The common prefix property of the Bitcoin backbone protocol holds with parameter $k \geq 2sf$, for any $s \geq \frac{2}{f}$ consecutive rounds; 
\item The chain quality property holds with parameter $l \geq 2sf$ and ratio of honest blocks $\mu$ with $\mu = f$.
\end{tiret}
\end{lemma}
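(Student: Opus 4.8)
The plan is to show that a post-quantum $(\epsilon,s)$-typical execution (Definition~\ref{def:pq_typical_exec}) satisfies the hypotheses of the three classical lemmas of~\cite{GKL15} recalled above (Lemmas~\ref{lemma_first_condition_lemma_5}, \ref{lemma_second_condition_lemma_6} and~\ref{lemma:chain_quality_cond_lemma_7}), and then to invoke them directly. The crucial observation is that the proofs of those lemmas use only condition~(c) (no insertions, copies, predictions) together with three numerical facts---$X(s)+Z(s)<2fs$, $Z(s)<Y(s)$, and $Z(s)<(1-\mu)X(s)$---none of which refers to the classical $\mathbb{E}[Z(s)]$ term. Since conditions~(a) and~(c) are verbatim the classical ones and involve honest parties only (whose mining is classical even in the QRO), the only genuinely new work is to re-derive these three inequalities from the modified condition~(b), $Z(s)<(1-\epsilon)f(1-f)s$.

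For common prefix: from condition~(a) and the standard bound $\mathbb{E}[Y(s)]\ge f(1-f)s$ of~\cite{GKL15} one gets $Y(s)>(1-\epsilon)\mathbb{E}[Y(s)]\ge(1-\epsilon)f(1-f)s$, hence with condition~(b), $Z(s)<(1-\epsilon)f(1-f)s<Y(s)$; and from $X(s)<(1+\epsilon)fs$ together with $Z(s)<(1-\epsilon)f(1-f)s\le(1-\epsilon)fs$ one gets $X(s)+Z(s)<2fs$. Lemma~\ref{lemma_first_condition_lemma_5} then applies, and the Common Prefix Lemma (Lemma~\ref{lemma_second_condition_lemma_6}) gives $\mathcal{C}_1^{\lceil k}\preceq\mathcal{C}_2$ and $\mathcal{C}_2^{\lceil k}\preceq\mathcal{C}_1$ for any $k\ge 2fs$ whenever $s\ge 2/f$, for chains $\mathcal{C}_1,\mathcal{C}_2$ held by honest parties at a common round with $len(\mathcal{C}_2)\ge len(\mathcal{C}_1)$. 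The common prefix \emph{property} for an arbitrary pair of honest parties at rounds $r_1\le r_2$ then follows from the standard round-by-round argument of~\cite{GKL15}.

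For chain quality: combining $X(s)>(1-\epsilon)fs$ (condition~(a)) with condition~(b) gives $Z(s)<(1-\epsilon)f(1-f)s=(1-f)\cdot(1-\epsilon)fs<(1-f)X(s)$, so the hypothesis $Z(s)<(1-\mu)X(s)$ of Lemma~\ref{lemma:chain_quality_cond_lemma_7} holds with $\mu=f$, yielding the chain quality property with $l\ge 2sf$ and honest ratio $\mu=f$.

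The main obstacle is not any calculation but the need to be certain that the~\cite{GKL15} proofs of Lemmas~\ref{lemma_first_condition_lemma_5}--\ref{lemma:chain_quality_cond_lemma_7} depend on condition~(b) only through the three inequalities above, and not on the precise distributional form $Z(s)<\mathbb{E}[Z(s)]+\epsilon\mathbb{E}[X(s)]$; verifying this is exactly what licenses the ``verbatim'' lift of step~2 of the framework, with all quantum-specific content deferred to establishing condition~(b) with overwhelming probability via the $\chainpow$ bound (Lemma~\ref{lemma:chain_pows}).
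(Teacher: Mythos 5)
Your proof is correct and takes essentially the same route as the paper: derive the three numerical hypotheses $X(s)+Z(s)<2fs$, $Z(s)<Y(s)$, and $Z(s)<(1-\mu)X(s)$ from conditions (a), (b) of the post-quantum typical execution (using the standard bound $\mathbb{E}[Y(s)]\ge f(1-f)s$), and then invoke Lemmas~\ref{lemma_first_condition_lemma_5}, \ref{lemma_second_condition_lemma_6}, and \ref{lemma:chain_quality_cond_lemma_7} verbatim. The only difference is presentational --- the paper derives the two sufficient constraints on $Z(s)/s$ and observes their minimum is exactly condition~(b), whereas you go forward from condition~(b); you are also slightly more explicit on the chain quality case ($Z(s)<(1-f)X(s)$), which the paper leaves implicit.
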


\begin{proof}

Using Lemma~\ref{lemma_first_condition_lemma_5} we know that we can ensure that any $k \geq 2fs$ consecutive blocks of a chain have been computed in $s \geq \frac{k}{2f}$ consecutive rounds, as long as we can impose the following condition:
for any quantum adversary $\mathcal{A}$ and for any $s \geq \frac{2}{f}$, we have: $ X(s) + Z(s) < 2fs$. \\
Then, in a typical execution (Def.~\ref{def:pq_typical_exec}) we have that $X(s) < (1 + \epsilon) fs $ which implies that we must have $\frac{Z(s)}{s} < (1 - \epsilon)f$.

Secondly, the condition between $Z(s)$ and $Y(s)$ comes from Lemma~\ref{lemma_second_condition_lemma_6}, which then implies the common prefix property.
To apply Lemma~\ref{lemma_second_condition_lemma_6}, what we must guarantee is that
for any quantum adversary $\mathcal{A}$ and for $s \geq \frac{2}{f}$, we have: $Z(s) < Y(s)$.
Therefore, in order to prove that the common prefix property holds with parameter $k \geq 2sf$, it is sufficient to impose on the quantum adversary the following two conditions for any $s \geq \frac{2}{f}$ consecutive rounds: 
\begin{equation}
X(s) + Z(s) < 2fs \ \ ; \ \ 
Z(s) < Y(s)
\end{equation}
Using the bounds on the honest players variables $X(s)$ and $Y(s)$ from Lemma~\ref{lemma:conc_x_y}, the sufficient conditions become:
\begin{equation}
\frac{Z(s)}{s} < (1 - \epsilon)f \ \ ; \ \
 \frac{Z(s)}{s} < (1 - \epsilon)f(1 - f)
\end{equation}
Given that $ \min\{(1 - \epsilon)f(1 - f), (1 - \epsilon)f\} = (1 - \epsilon)f(1 - f)$, this leads to:
\begin{equation}
\label{eq:restriction_Q}
    \frac{Z(s)}{s} < (1 - \epsilon)f(1 - f)
\end{equation}
which is exactly the second condition of the post-quantum typical execution (Def.~\ref{def:pq_typical_exec}). \\
For the chain quality property, the statement follows directly from the proof of chain quality in Lemma~\ref{lemma:chain_quality_cond_lemma_7}.

\end{proof}

Finally we are just left to find out an appropriate quantum analogue of the honest-majority condition, and show that it will ensure a post-quantum typical execution occurs with high probability.

\begin{definition}[Post-quantum honest majority]
  We say that the post-quantum honest majority condition\footnote{See the proof of Theorem~\ref{thm:texec} for intuition on how this expression arises.} holds
  if: 
\begin{equation}
    Q \leq \frac{(1 - \epsilon)f(1 - f)}{(1 + \epsilon)e \sqrt{p}}
\end{equation}

\noindent where $Q$ denotes the total number of quantum queries performed by the adversary per round.
\label{def:pq_honest_majority}
\end{definition}

\begin{lemma} 
Under the post-quantum honest majority condition  (Def.~\ref{def:pq_honest_majority}), the probability of a post-quantum typical execution is:
\begin{equation}
    P_{\text{q}} = 1 - e^{-\Omega((1 - \epsilon) f (1 - f)s)}.
\end{equation}
  \label{thm:texec}
\end{lemma}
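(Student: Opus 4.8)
The plan is to verify each of the three clauses of the post-quantum typical execution (Definition~\ref{def:pq_typical_exec}) separately, bound the failure probability of each for a single window of $s' \ge s$ consecutive rounds, and then take a union bound over the polynomially many such windows of the (polynomial-length) execution; since $s$ will be chosen $\omega(\log\kappa)$, a polynomial factor is absorbed into the exponent, so it suffices to get an $e^{-\Omega((1-\epsilon)f(1-f)s')}$ bound per window. Clauses (a) and (c) carry over from the classical analysis of~\cite{GKL15}: clause (a) speaks only about the honest parties, who mine classically, so $X(s')$ and $Y(s')$ retain their classical distributions and Lemma~\ref{lemma:conc_x_y} applies verbatim, giving failure $e^{-\Omega(\epsilon^2 f s')}$; clause (c) (no insertions/copies/predictions) reduces to collision- and out-of-order-preimage resistance of $H,G$, which in the QRO holds except with probability negligible in $\kappa$ by the $\Omega(2^{\kappa/3})$ quantum collision lower bound. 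Treating $\epsilon$ as a fixed constant, both of these are $e^{-\Omega((1-\epsilon)f(1-f)s')}$ in the parameter regime of interest (the (c) term being even negligible in $\kappa$), so the whole job reduces to clause (b).

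For clause (b) I would argue as follows. Set $k^{*} := \lceil (1-\epsilon)f(1-f)s'\rceil$. A violation of (b) over the window means the adversary produced at least $k^{*}$ distinct \pows{} within those $s'$ rounds, during which it issues $N := s'Q$ superposition queries to the QRO. Because each mining input of $H$ embeds the hash of the block being extended together with fresh content and a counter, the queried inputs are distinct and each behaves as an independent Bernoulli-$p$ trial; hence, exactly as in the classical reduction of~\cite{GKL15}, this event embeds into the \mrsearch{} game with $N=s'Q$ queries and target $k=k^{*}$ (equivalently the \bagpow{} formulation). Therefore $\Pr[Z(s')\ge k^{*}]\le p_{k^{*}}^{s'Q}$, and Theorem~\ref{lemma:bound_hard} bounds this. (Alternatively one may route through the adversarial-chain formulation, combining Lemma~\ref{lemma:bag_chain_prob} with Lemma~\ref{lemma:chain_pows}; this is the route matching how the security lemmas phrase adversarial chains, and yields the same conclusion up to constants.)

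It remains to show $p_{k^{*}}^{s'Q}$ is exponentially small under the post-quantum honest majority condition. By Theorem~\ref{lemma:bound_hard}, $p_{k^{*}}^{s'Q}\le \frac{1}{\pi k^{*}}\bigl(e^{2}(s'Q/k^{*})^{2}p\bigr)^{k^{*}} = \frac{1}{\pi k^{*}}\bigl(e s'Q\sqrt p / k^{*}\bigr)^{2k^{*}}$, and with $k^{*}\approx (1-\epsilon)f(1-f)s'$ the base is
\[
\frac{e\,s'Q\sqrt p}{k^{*}} \;=\; \frac{eQ\sqrt p}{(1-\epsilon)f(1-f)} \;\le\; \frac{1}{1+\epsilon} \;<\; 1 ,
\]
where the middle inequality is \emph{exactly} Definition~\ref{def:pq_honest_majority}, $Q\le \frac{(1-\epsilon)f(1-f)}{(1+\epsilon)e\sqrt p}$ --- this is the promised intuition for that condition: the critical threshold is where the ratio $\tfrac{eQ\sqrt p}{(1-\epsilon)f(1-f)}$ hits $1$, and the $(1+\epsilon)$ factor is precisely the multiplicative slack needed to obtain a gap below $1$. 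Hence $p_{k^{*}}^{s'Q}\le \frac{1}{\pi k^{*}}(1+\epsilon)^{-2k^{*}} = e^{-\Omega(k^{*})} = e^{-\Omega((1-\epsilon)f(1-f)s')}$. A union bound over all $\le\mathrm{poly}(\kappa)$ windows preserves this, and intersecting with clauses (a) and (c) gives $P_{\mathrm q}=1-e^{-\Omega((1-\epsilon)f(1-f)s)}$, as claimed.

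\textbf{Main obstacle.} The arithmetic in the last step is routine; the delicate part is the reduction underlying clause (b): making rigorous that the quantity $Z(s')$ as it is actually \emph{used} in Lemmas~\ref{lemma_first_condition_lemma_5} and~\ref{lemma_second_condition_lemma_6} is controlled by the static \mrsearch{} (or \chainpow) bound, even though the quantum adversary is adaptive across rounds, receives all honest diffused blocks between rounds, interleaves its queries with honest mining, and may choose which block to extend (so its blocks need not all sit on one chain, and it may build on honest blocks). One must check that these features only help the honest analysis, that the protocol's single-oracle execution genuinely embeds into the fresh-oracle search game, and that the union bound over windows (and over admissible starting blocks) does not erode the exponent. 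This is exactly the point at which the ``post-quantum typical execution'' notion was tailored so that only a counting/chain-length bound --- rather than a per-query Chernoff argument --- is required.
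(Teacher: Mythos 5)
Your proposal is correct and its core hinges on exactly the same computation as the paper's proof: plug $N = sQ$ and $k = (1-\epsilon)f(1-f)s$ into the search-complexity bound, and observe that the honest-majority condition of Definition~\ref{def:pq_honest_majority} is precisely what forces the base of the exponential below $1$. The paper organizes the arithmetic slightly differently --- it first introduces the auxiliary quantity $k_0(s) := (1+\epsilon)\,e\,Q\sqrt{p}\,s$, shows $P(N,k_0)$ decays as long as $\epsilon > e\sqrt{p}/(1-e\sqrt{p})$, and then \emph{derives} the honest-majority condition by requiring $k_0(s) \le (1-\epsilon)f(1-f)s$ --- whereas you take the condition as given and verify the base is at most $(1+\epsilon)^{-1}$. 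The outcome is the same. One minor technical divergence: the paper routes through $P(N,k)$ from Lemma~\ref{lemma:chain_pows} for \chainpow{} (whose base carries an $(N+k)$, hence the small-$\epsilon$ side condition), whereas you bound $Z(s')$ directly via $p_k^N$ for \mrsearch; since $Z(s')$ counts \pows{} rather than chained \pows, your choice is the more natural one and gives marginally cleaner arithmetic without the $\epsilon > e\sqrt{p}/(1-e\sqrt{p})$ constraint. You also make explicit what the paper's proof leaves implicit: that clauses (a) and (c) of Definition~\ref{def:pq_typical_exec} carry over unchanged from~\cite{GKL15} because honest mining is classical, and that a union bound over windows must be taken. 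Your flagging of the "main obstacle" --- the reduction from the actual protocol execution (single adaptive oracle, interleaved honest queries) to the idealized \mrsearch/\chainpow{} game --- is apt; the paper's proof of this lemma does not address it either, so you are at least as rigorous as the source.
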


\begin{proof}

We examine the bound we obtained in Lemma~\ref{lemma:chain_pows} for the \chainpow{} problem. A simplified form of the bound on this probability success (derived by using $\frac{2(1 - p)}{\pi k} < 1$), can be expressed as:
\begin{equation}
 P(N, k) \leq \exp \left(-2k \cdot \ln \left( \frac{k}{e(sQ + k)} \cdot \frac{1}{\sqrt{p}}\right) \right)   
\end{equation}

It is clear that this expression decays as a function of the number of rounds $s$ only if $\frac{k}{e(sQ + k)} \cdot \frac{1}{\sqrt{p}} > 1$. Therefore, asymptotically (where $N\gg k)$, an adversary can solve at most $k < s \cdot Q \cdot e \sqrt{p}$  \pows.
\footnote{Note that this is very close to the expression we obtained in \cite{CGKSW19}, and is actually a slightly tighter bound since $c\approx 8<e^2$.}.

Next, we define the following function for the number of chained \pows, as a function of the number of consecutive rounds $s$:
\begin{equation}
    k_0(s) := s \cdot (1 + \epsilon)e Q\sqrt{p}
\end{equation}
where we recall that $\epsilon$ refers to the concentration quality.
Now, for this target function $k_0(s)$, we first observe that the probability the adversary achieves this number of chained \pows{} is:
\begin{equation}\label{eq:bound_p_k0}
   P(N, k_0) \leq \exp \left( - 2e(1 + \epsilon) s Q \sqrt{p} \ln \left( \frac{1 + \epsilon}{1 + e(1 + \epsilon)\sqrt{p}} \right) \right)
\end{equation}
Hence, we notice that $P(N, k_0)$ decays exponentially as the number of rounds $s$ increases (for any choice of $\epsilon > \frac{e\sqrt{p}}{1 - e\sqrt{p}}$).

Then, it is sufficient to obtain an honest majority condition by imposing that the honest players can achieve $k_0(s)$ chained \pows. As a result, by using Definition~\ref{def:pq_typical_exec}
, the sufficient condition becomes:
\begin{equation}
    k_0(s) \leq (1 - \epsilon)f(1 - f)s
\end{equation}
Consequently, the honest majority condition can be described by the following bound on the quantum adversarial hashing power:
\begin{equation}\label{eq:honest_maj}
    Q \leq \frac{(1 - \epsilon)f(1 - f)}{(1 + \epsilon)e \sqrt{p}}
\end{equation}

 Computing the probability $P_{\text{q}}$ of the post-quantum honest majority follows directly 
 by using Lemma~\ref{lemma:chain_pows} under the choice of parameters:
 $N = sQ$ and $k = (1 - \epsilon)f(1 - f)s$. 
\end{proof}

\begin{corollary}
The required number of rounds for safe settlement against quantum adversaries is
\begin{equation}
    s_{\text{q}} = O\left( \frac{\epsilon^2}{(1 - \epsilon)(1 - f)} \cdot s_{cl}\right)
\end{equation}
where $s_{cl}$ is the required number of rounds in the classical setting \cite{GKL15}.
\end{corollary}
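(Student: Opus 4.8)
The plan is to read off the settlement time in each setting directly from the probability that a (post-quantum) typical execution occurs, exploiting the fact that inside a typical execution the two target properties hold with certainty. Concretely, by Lemma~\ref{thm:common_prefix_chain_quality_condition} a post-quantum typical execution guarantees common prefix with parameter $k \ge 2sf$ and chain quality with $l \ge 2sf$ (and the analogous lemmas of~\cite{GKL15} do the same classically), so in both regimes ``safe settlement after $s$ rounds'' is equivalent to ``a typical execution over every window of $s$ consecutive rounds fails with probability at most the desired target, say $e^{-\Omega(\kappa)}$.'' The whole computation then reduces to equating the failure exponents of $P_{\text{cl}}$ and $P_{\text{q}}$ and inverting.

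First I would recall the classical bound $P_{\text{cl}} = 1 - e^{-\Omega(\epsilon^2 f s)}$ from~\cite{GKL15}, which forces $s_{cl} = \Theta\!\big(\kappa/(\epsilon^2 f)\big)$ for the failure probability to be negligible in $\kappa$. Next I would invoke Lemma~\ref{thm:texec}: under the post-quantum honest majority condition (Definition~\ref{def:pq_honest_majority}) we have $P_{\text{q}} = 1 - e^{-\Omega((1-\epsilon)f(1-f)s)}$, which forces $s_{\text{q}} = \Theta\!\big(\kappa/((1-\epsilon)f(1-f))\big)$ for the same negligible failure. Dividing the two minimal round counts gives
\[
\frac{s_{\text{q}}}{s_{cl}} = \Theta\!\left(\frac{\epsilon^2 f}{(1-\epsilon)f(1-f)}\right) = \Theta\!\left(\frac{\epsilon^2}{(1-\epsilon)(1-f)}\right),
\]
which rearranges to the claimed $s_{\text{q}} = O\!\left(\tfrac{\epsilon^2}{(1-\epsilon)(1-f)}\, s_{cl}\right)$.

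The step I expect to require the most care is the bookkeeping of which error events control the settlement time in each regime. A post-quantum typical execution (Def.~\ref{def:pq_typical_exec}) requires, in addition to the $Z(s)$ bound of condition~(b) that comes from the \chainpow{} estimate (Lemma~\ref{lemma:chain_pows}, used in Lemma~\ref{thm:texec}), also the honest-party concentration events of condition~(a) (Lemma~\ref{lemma:conc_x_y}) and the ``no insertions/copies/predictions'' events of condition~(c); classically the analogous decomposition holds. One must confirm that, under the post-quantum honest majority condition and for the admissible range of $\epsilon$ (in particular $\epsilon > e\sqrt{p}/(1-e\sqrt{p})$ from Lemma~\ref{thm:texec}), the failure probability is governed by the stated exponents $\epsilon^2 f s$ and $(1-\epsilon)f(1-f)s$, so that inverting them yields the round counts $\Theta(\kappa/(\epsilon^2 f))$ and $\Theta(\kappa/((1-\epsilon)f(1-f)))$ whose ratio is the asserted constant. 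Finally, since the number of block confirmations equals $\Theta(fs)$ in both regimes, the same ratio carries over from rounds to confirmations, recovering the ``almost the same number of block confirmations needed'' statement of the introduction.
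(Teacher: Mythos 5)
Your proposal is correct and matches the argument the paper implicitly relies on (the corollary is left unproved there): you equate the target failure level, invert the exponents $\epsilon^2 f s$ from $P_{\text{cl}}$ and $(1-\epsilon)f(1-f)s$ from $P_{\text{q}}$ (Lemma~\ref{thm:texec}) to get the respective minimal round counts, and take their ratio, yielding $s_{\text{q}}/s_{cl} = \Theta\bigl(\epsilon^2/((1-\epsilon)(1-f))\bigr)$. The additional care you flag about condition~(c) and the admissible $\epsilon$ range is appropriate but does not change the dominant exponent, so the derivation stands.
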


\paragraph{On parallel quantum processors.} A final note is to highlight an assumption we made about the depth of the quantum computation of the adversary in $s$ rounds which is $sQ$. This assumption overestimates the power of the quantum adversary, since one can imagine an adversary that controls two or more quantum processors, so the overall depth of the computation is smaller than the total number of queries. It is known that search algorithms are not parallelizable, meaning that we have 
been over-pessimistic on the honest majority condition we derived. On the one hand, as we have mentioned in Section \ref{sec:rel}, the recording techniques cannot generalize to this setting. On the other hand, based on simple calculations of restricted parallel quantum adversaries, the factor $p^{-1/2}$ remains, reduced at best by a constant factor. Thus, qualitatively, our results persist in that case.

\begin{acknowledgements}

PW acknowledges support by the UK Hub in Quantum Computing and Simulation, part of the UK National Quantum Technologies Programme with funding from UKRI EPSRC grant EP/T001062/1. AC acknowledges support from 
the National Science Foundation grant CCF-1813814 and from the AFOSR under Award Number FA9550-20-1-0108.
FS was supported by the US National Science Foundation grants CCF-2042414 and CCF-2054758 (CAREER). The work of JG was supported by National Science Foundation grants CNS-2001082 and CNS-2055694.

\end{acknowledgements}

\bibliographystyle{quantum}

\bibliography{biblio}


\appendix




\section{Comparison with Other Query Complexity Work \label{sec:comparison}}

In this section we compare three different approaches to the query complexity analysis of the \mrsearch{} problem, in terms of their performance (not the techniques to derive the bounds). We will denote the three different bounds of the works we are comparing as follows:
\begin{itemize}
    \item $\sf{Gen_1}$ - analysis performed in the work of \cite{HM21} leading to success probability $\bar{p}_k^N$ ; 
    \item $\sf{Gen_2}$ - analysis performed in the current work leading to success probability $p_k^N$;
    \item $\sf{NonS}$ - analysis in our previous work leading to success probability $P_{\sf{NonS}}$;
\end{itemize}
We note that $\sf{Gen_1}$ and $\sf{Gen_2}$ deal with the most general (powerful) quantum adversaries, while in $\sf{NonS}$ the adversary is restricted in treating the number of queries to the oracle as a classical variable. The comparison can be summarized as follows:

\begin{table}[H]
\begin{center}
\setlength\tabcolsep{4.0pt}
 \begin{tabular}{||c | c | c | c ||} 
 \multicolumn{1}{c}{}  &   \multicolumn{1}{c}{$\sf{NonS}$} &   \multicolumn{1}{c}{$\sf{Gen_1}$} &   \multicolumn{1}{c}{$\sf{Gen_2}$} \\ 
 \hline\hline
$\begin{array} {lcl} \text{Quantum} \\ \text{Adversary} \end{array}$ & Restricted & Most General & Most General\\ 
 \hline
$\begin{array} {lcl} \text{Success} \\ \text{Probability} \end{array}$ & $\begin{array} {lcl} \text{ For } k = (1 +\epsilon) \sqrt{cp}N \\ P_{\sf{NonS}} := e^{-(cp)^{\frac{2}{3}} N \cdot g_2(\epsilon)}\end{array}$ & $\bar{p}_k^N \leq 2 \left(\frac{8e N\sqrt{p}}{k}\right)^k + \left(\frac{1}{2}\right)^k$ & $p_k^N \leq \frac{2(1 - p)}{\pi k} \cdot \left(\frac{eN\sqrt{p}}{k}\right)^{2k}$ \\
 \hline
$\begin{array} {lcl} \text{Expected} \\ \text{Optimal\footnotemark} \end{array}$ & $\sqrt{cp} N$ & $8e\sqrt{p}N$ & $e\sqrt{p}N$ \\
\hline
$\begin{array} {lcl} \text{Convergence} \\ \text{Around}\\ \text{Optimal} \end{array}$ & $\exp\left(-N\cdot O(p^{2/3})\right)$ & $ \exp \left(-N\cdot O(p^{1/2})\right)$ & $\exp \left(-N\cdot O(p^{1/2})\right)$ \\
\hline
\end{tabular}
\end{center}
 \caption{Analysis of the \mrsearch{} Problem}
\label{tab:comparison}
\end{table}
\footnotetext{Note, that by ``Expected Optimal'', we mean the value of $k$ above which the given expressions begin to converge (decay). To see how quickly the tails converges, we simply need to consider $k=(1+\epsilon)\times \text{Expected Optimal}$.}
\noindent

The first thing to observe is that for the analysis of $\sf{Gen_1}$ and $\sf{Gen_2}$, their results can be directly compared in terms of probability of success as function of the number of available queries and size of solution for the \mrsearch{} problem. Hence, we can notice from the two functions $\bar{p}_k^N$ and $p_k^N$, that our current work provides a tighter bound (besides the extra term $1/2^k$, their function has for the first term exponent $k$, while our expression has almost the same term with exponent $2k$). Hence, we remark that while the result for expected optimal length is only a constant better in our case compared to $\sf{Gen_1}$, the bound on the probability of success we obtained is much stronger than the one obtained in $\sf{Gen_1}$ (roughly our bound is the square of the bound in $\sf{Gen_1}$).

Second thing to note is that from the form of the success probability we can deduce the smallest value of $k$, above which each expression decays. This can be viewed as a bound on the expected optimal quantum strategy, since these bounds rule-out any larger value of $k$ (on average). Here we see that the best (tightest) bound on the best average quantum strategy is given by this current work and next is our previous work, with looser bound given in $\sf{Gen_1}$: $e\sqrt{p}N<\sqrt{cp}N<8e\sqrt{p}N$. Recall that $c$ is given by bounds on the query complexity of standard search and is known to be $c \sim 8$, making our new and old results very close.

A direct consequence is that the value of $k$ that $\sf{Gen_1}$ starts decaying is eight times bigger than the value that $\sf{Gen_2}$ starts decaying. This means that $\sf{Gen_1}$ overestimates the capabilities of what a quantum party (adversary) can achieve eightfold compared to $\sf{Gen_2}$.
In the language of the backbone Bitcoin protocol, this means that $\sf{Gen_1}$ leads to much worse honest majority.

The third thing to consider, is the ``speed'' of convergence to zero, as each of the cases considered moves away from its own average value. Here we see that $\sf{Gen_2}$ and $\sf{Gen_1}$ both perform better (equally well) by having a factor of $\exp\left(-N O(p^{1/2})\right)$, than our earlier work $\sf{NonS}$ that has a factor of $\exp\left(-N O(p^{2/3})\right)$. In the language of the backbone Bitcoin protocol, this means that $\sf{Gen_1},\sf{Gen_2}$ have smaller settlement time (actually essentially the same with the classical case), while our earlier work $\sf{NonS}$ has greater settlement time by a factor of $O(p^{-1/6})$. Of course, the settlement time is determined by assuming that the adversary controls the maximum allowed, by the honest majority, computational power. Given that $\sf{Gen_1}$ requires that the honest parties must have much bigger advantage (many more queries) than in $\sf{NonS}$, in practice even if such honest majority is assumed, $\sf{NonS}$ might perform better.




\section{
Simplified Bound of Theorem~\ref{lemma:bound_hard}
} \label{app:proof_corr_kbersearch}

\begin{proof}

We start with denoting the sum: $S_{k, N} := \sum_{i = 0}^k (\sqrt{1 - p})^i \binom{N}{i}$. 
Firstly we drop the $\sqrt{1 - p}^i$ factors ($\sqrt{1 - p}^i < 1$), so instead we will work with the sums: $S'_{N, k} := \sum_{i = 0}^k \binom{N}{i}$. \\
We will bound the ratio:
\begin{equation}
    R_{N, k} := \frac{S'_{N, k}}{S'_{N, k - 1}} = \frac{S'_{N, k - 1} + \binom{N}{k} }{S'_{N, k - 1}} = 1 + \frac{\binom{N}{k}}{S'_{N, k - 1}}
\end{equation}

To upper bound the ratio $R_{N, k}$, we will lower bound the sum $S'_{N, k - 1}$ with the last two terms in the sum: $S'_{N, k - 1} > \binom{N}{k - 1} + \binom{N}{k - 2} = \binom{N + 1}{k - 1}$. Hence we will upper bound the ratio as:
\begin{equation}
    R_{N, k} \leq 1 + \frac{\binom{N}{k}}{\binom{N + 1}{k - 1}} = \frac{k(N + 1) + (N - k + 1)(N - k + 2)}{k(N + 1)}
\end{equation}
From this we obtain that:
\begin{equation}
    R_{N, k} \leq \frac{N}{k}
\end{equation}
which holds as long as $4 \leq k \leq N$. \\
This implies that: $S'_{N, k} < \frac{N}{k} S'_{N, k - 1}$. By applying this iteratively, and by using that $S'_{N, 0} = 1$, we obtain:
\begin{equation}
    S'_{N, k} \leq \frac{N^k}{k!}
\end{equation}
Hence, as $p_k^N \leq 4 (1 - p)p^k {S'_{N, k}}^2$, we have:
\begin{equation}
    p_k^N \leq 4 \left[\sqrt{1 - p} \cdot \sqrt{p}^k  \frac{N^{k}}{k!} \right]^2
\end{equation}
Finally, using Stirling approximation for: $k! > \sqrt{2\pi k} \cdot \frac{k^k}{e^k}$, we get:
\begin{equation}
    p_k^N \leq 4 \left[\sqrt{1 - p} \cdot \sqrt{p}^k  \left(\frac{Ne}{k}\right)^{k} \cdot \frac{1}{\sqrt{2\pi k}}\right]^2
\end{equation}

\end{proof}

\end{document}